\pgfplotsset{compat=1.9}
\newtheoremstyle{newdefinition}{}{}{\normalfont}{}{\bfseries}{}{\newline}
{\thmname{#1} \thmnumber{#2}\thmnote{ (#3)}}
\newtheoremstyle{newplain}{}{}{\itshape}{}{\bfseries}{}{1em}
{\thmname{#1} \thmnumber{#2}\thmnote{ (#3)}}
\newtheoremstyle{newremark}{}{}{\normalfont}{}{\bfseries}{}{1em}
{\thmname{#1}}
\theoremstyle{newdefinition}
\newtheorem{definition}{Definition}[section]
\theoremstyle{newplain}
\newtheorem{theorem}[definition]{Theorem}
\newtheorem{lemma}[definition]{Lemma}
\newtheorem{proposition}[definition]{Proposition}
\newtheorem{corollary}[definition]{Corollary}
\newtheorem{remark}[definition]{Remark}
\newtheorem{example}[definition]{Example}
\newtheoremstyle{myplain}{5pt}{5pt}{\itshape}{0pt}{\bfseries}{}{5pt plus 1pt minus 1pt}{}
\theoremstyle{myplain}
\newtheorem*{theorem*}{Theorem}
\newtheorem*{corollary*}{Corollary}
\DeclareMathOperator{\cH}{\mathcal{H}}
\DeclareMathOperator{\HH}{\mathcal{H}}
\DeclareMathOperator{\cS}{\mathcal{S}}
\DeclareMathOperator{\cK}{\mathcal{K}}
\DeclareMathOperator{\cB}{\mathcal{B}}
\DeclareMathOperator{\supp}{supp}
\DeclareMathOperator{\id}{id}
\def\Block[#1,#2,#3,#4]{

\def\r{0.3};

\ifthenelse{\NOT #4=0}{
\fill [#2] (-0.5,-0.5) rectangle ({#1-0.5},0.5);
}

\foreach \n in {1,...,#1}{ 

\shade[shading=ball, ball color=orange] ({\n-1},0) circle (\r);

}

\begin{scope}[decoration={brace,mirror,amplitude=7}]

\ifthenelse{#4=1}{

\draw [decorate] (-0.5,-0.6) --node[below=3mm]{$#3$} ({#1-0.5},-0.6);

}

\ifthenelse{#4=2}{
\draw [decorate] ({#1-0.5},0.6) --node[above=3mm]{$#3$} (-0.5,0.6);
}

\end{scope}
}
\begin{document}

\author{Andreas Bluhm}
\email{andreas.bluhm@univ-grenoble-alpes.fr}
\address{Univ.\ Grenoble Alpes, CNRS, Grenoble INP, LIG, 38000 Grenoble, France}
\author{Ángela Capel}
\email{ac2722@cam.ac.uk}
\address{Centre for Mathematical Sciences, Wilberforce Road, Cambridge CB3 0WA, United Kingdom
\newline \and Fachbereich Mathematik, Auf der Morgenstelle 10, 72076 Tübingen, Germany}
\author{Pablo Costa Rico}
\email{pablo.costa@tum.de}
\address{Department of Mathematics, Technische Universität München, Germany \newline \and Munich Center for Quantum Science and Technology (MCQST), Germany}
\author{Anna Jen\v{c}ov\'a}
\email{jenca@mat.savba.sk}
\address{Mathematical Institute, Slovak Academy of Sciences, Bratislava, Slovakia}

\title{Belavkin-Staszewski Quantum Markov Chains}

\date{\today}

\begin{abstract}
    It is well-known that the conditional mutual information of a quantum state is zero if, and only if, the quantum state is a quantum Markov chain. Replacing the Umegaki relative entropy in the definition of the conditional mutual information by the Belavkin-Staszewski (BS) relative entropy, we obtain the BS-conditional mutual information, and we call the states with zero BS-conditional mutual information Belavkin-Staszewski quantum Markov chains. In this article, we establish a correspondence which  relates quantum Markov chains and BS-quantum Markov chains. This correspondence allows us to find a recovery map for the BS-entropy in the spirit of the Petz recovery map. Furthermore, we show that, over the set of BS-quantum Markov chains, this correspondence constitutes an entanglement-breaking map. Moreover, we prove a structural decomposition of the Belavkin-Staszewski quantum Markov chains and also study states for which the BS-conditional mutual information is only approximately zero. We subsequently extend the aforementioned correspondence, structural decomposition and recovery map to arbitrary pairs of states and conditional expectations.  As an application of the correspondence, we find the first family of states with non-vanishing conditional mutual information for which it decays superexponentially fast with the size of the middle system. 
\end{abstract}

\maketitle

\section{Introduction}

In quantum information theory, the conditional mutual information associated to a quantum state $\rho_{ABC}$ is a measure of conditional independence between the systems $A$ and $C$ with respect to system $B$. A case of special interest is when the conditional mutual information is equal to zero. These states are known as quantum Markov chains \cite{Cocycles} and their structure was fully characterized in \cite{HaydenJozsaPetzWinter-StrongSubadditivity-2004}.  Quantum Markov chains  have found various applications in quantum information theory.  As an example, it was shown in  \cite{brown-2012} that quantum Markov chains are in one-to-one correspondence to Gibbs states of local, commuting Hamiltonians, which is a generalization of the classical Hammersley-Clifford theorem \cite{Hammersley1971MarkovFO}.  They are also states that satisfy the data-processing inequality for the relative entropy between the states $\rho_{ABC}$ and $\rho_{AB} \otimes \tau_C$, where $\tau_C=I_C/d_C$ is the maximally mixed state, and the partial trace on system $A$ as quantum channel with equality \cite{Petz2003}. 

The Umegaki relative entropy is not the only quantity that extends the classical Kullback-Leibler divergence to the quantum realm. Another possible extension, for which a data-processing inequality also holds true, is the Belavkin-Staszewski relative entropy (BS-entropy), which is an upper bound on the former. The BS-entropy has turned out to be a useful tool in quantum information theory, for example, for quantum channel discrimination \cite{fang2021geometric}. The BS-mutual information and BS-conditional mutual information have been used to study the decay of correlations in quantum spin chains with local, finite-range, translation-invariant interactions \cite{BluhmCapelPerezHernandez-ExpDecayMI-2021, gondolf2024conditional}.

While it holds that, for a quantum Markov chain $\rho_{ABC}$, the BS-conditional mutual information is zero, the converse is not true. In analogy to the quantum Markov chains, we call the set of states with zero BS-conditional mutual information \textit{Belavkin Staszewski quantum Markov chains}. In fact, it is known that there are quantum states that satisfy the data-processing inequality for the BS-entropy with equality, but which do not give equality in the data-processing inequality for the Umegaki relative entropy \cite{HiaiMosonyi-f-divergences-2017,Jencova2009}. These examples can be used to construct BS-quantum Markov chains that are not quantum Markov chains.  In this article, we provide  explicit examples of  such states  in Example \ref{example:BSqmcNotQmc}  and in Proposition \ref{prop:EntangledBSQMC}. Of great importance will be those of Proposition \ref{prop:EntangledBSQMC} since they contain a subfamily  that  satisfies that  the marginal $\rho_{AC}$ is entangled, a phenomenon that does not occur with quantum Markov chains. Therefore, this provides a substantial difference between quantum Markov chains and BS-quantum Markov chains.

Although the set of BS-quantum Markov chains is strictly larger than the set of quantum Markov chains, we show that, for every BS-quantum Markov chain $\rho_{ABC}$, it is possible to construct  a quantum Markov chain $\eta_{ABC}$ with $\eta_B=\tau_B$. This is proven in \cref{theo:StructureBSDPI}. Actually, the converse  also holds true, namely for every quantum Markov chain $\eta_{ABC}$ with $\eta_{B}=\tau_B$ we can construct a family of BS-quantum Markov chains (see \cref{remark:ConverseEta}). This correspondence leads to a number of interesting consequences. In \cite[Section 4]{gondolf2024conditional}, a completely positive linear map $\Phi_{B \to AB}$,  which satisfies that every BS-quantum Markov chain could be recovered by it, was defined. Here, we prove that the converse also holds, i.e., that any state that is recovered by $\Phi_{B \to AB}$ is indeed a BS-quantum Markov chain, proving in turn that $\Phi_{B \to AB}$ is a true recovery map. Even more, we also show that this recovery map can be rewritten in a very similar way to the Petz recovery map \cite{Petz-SufficiencyChannels-1978, Petz-SufficientSubalgebras-1986, Petz2003}. As a second remarkable consequence, we are able to find the structural decomposition of BS-quantum Markov chains in Theorem \ref{theo:StructureBSDPI}. From this structure theorem, we can then identify in Proposition \ref{prop:QMC_in_BS}  the quantum Markov chains as a subset of the BS-quantum Markov chains.

In Section \ref{subsec:CorrespondenceEtaOmega} we extend this correspondence via an $\overline{\eta}$ from triples $(\rho,\sigma, \mathcal{T})$ saturating the BS-data-processing inequality, where $\rho,\sigma$ are quantum states and $\mathcal{T}$ is a quantum channel with Stinespring representation $\mathcal{T}=\tr_E[V\cdot V^*]$ where $V:\cH \to \cK \otimes \mathcal{H}_E$ is a partial isometry, to triples $(\eta_{\rho},\eta_{\sigma},\tr_E)$ that saturate the data-processing inequality for the relative entropy under the constraint that $\tr_E[\nu]$ is the maximally mixed state. Conversely, given  triples $(\mu,\nu,\tr_E)$ that saturate the data-processing inequality for the relative entropy with the constraint that $\tr_E(\nu)$ is the maximally mixed state, we can create the inverse correspondence $\omega(X,V)$ where $V$ is a partial isometry and $X$ lies in the image of $\tr_E$, i.e. we can define states $\omega_{\mu}(X,V)$ and $\omega_{\nu}(X,V)$ such that the triple $(\omega_{\mu}(X,V),\omega_{\nu}(X,V),\mathcal{T})$ saturates the data-processing inequality for the BS-relative entropy. This correspondence is shown in Figure \ref{fig:3} and proved in Theorem \ref{thm:bs_petz} and  in Corollary \ref{coro:Petz_BS_converse}. Moreover, Corollary \ref{coro:recoveryPhiChannels} provides a recovery map in the spirit of the Petz recovery map for the BS-relative entropy which is linear, completely positive  but not trace preserving, which generalizes  the map $\Phi_{B\to AB}$ to the general case.

\begin{figure}[ht]
\begin{center}

\includegraphics[scale=0.25]{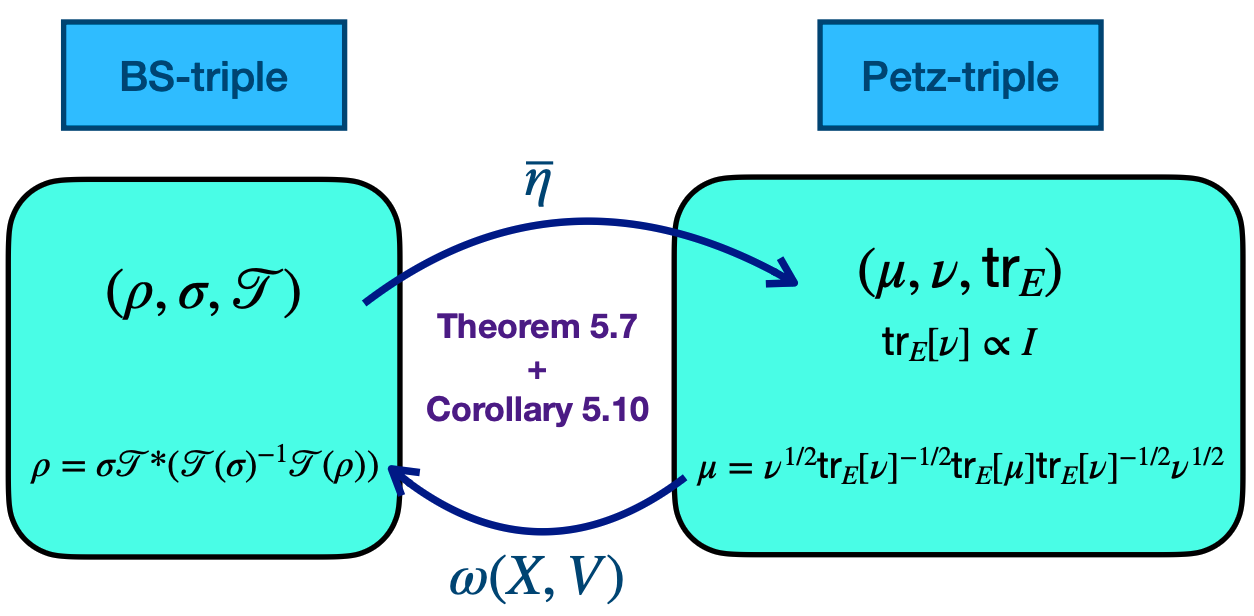}

  \caption{Correspondence between states and channels that saturate the DPI for the BS-entropy, and states and partial traces that saturate the DPI for the relative entropy. The map to the right, $\overline{\eta}$, yields a unique Petz-triple given a BS-triple, whereas the map to the left, $\omega(X,V)$, allows to create multiple BS-triples from a Petz-triple.} 
  \label{fig:3}
      
\end{center}
\end{figure}

In addition, in Section \ref{subsec:StructuralDecomposition} we show how to find the structural decomposition of the states saturating the BS-data-processing inequality using the structure of the states that saturate the data-processing inequality for the relative entropy, which we obtain employing tools developed  in \cite{jenvcova2006sufficiency} and \cite{mosonyi2004structure}. Conversely, we also show how to obtain directly the decomposition of states saturating the BS data-processing inequality in Theorem \ref{theo:StructureBSDPI_appendix} and we use it to find the structural decomposition of the states that saturate the data-processing inequality for the  relative entropy for conditional expectations using the previous correspondence.

In the case of quantum Markov chains, the interest has not been limited to exact quantum Markov chains, but also approximate versions have been considered, i.e., states with $I_\rho(A:C|B) \leq \varepsilon$. We study approximate BS-quantum Markov chains in Propositions \ref{prop:relation_Irev_I} and \ref{prop:appox-rotated2}, where we relate them to approximate quantum Markov chains. Finally, in Theorem \ref{thm:superexponential_decay}, we use these results to show that whenever $\rho_{ABC}$ is the Gibbs state of a quantum spin chain with local, finite-range, translation-invariant interactions at any positive temperature, then the associated $\eta_{ABC}$ (not necessarily a quantum Markov chain) has superexponentially-decaying conditional mutual information, building up on \cite{gondolf2024conditional}. Furthermore, we show that our reversed BS-conditional mutual information constitutes an upper bound (up to prefactors) to a quantity that, when exponentially-decaying with the distance between $A$ and $C$ for a state $\rho_{ABCD}$, gives a positive spectral gap for the Davies Lindbladian associated with unique fixed point $\rho_{ABCD}$.

\section{Notation and preliminaries}

\subsection{Relative entropies}

Let us consider a finite-dimensional Hilbert space $\HH$ and let $ \rho, \sigma \in \mathcal{S}(\HH)$ be two quantum states on it. Their \textit{Umegaki relative entropy}
\cite{Umegaki-RelativeEntropy-1962} (or just relative entropy for short, since it is the relative entropy customarily used in quantum information theory) is defined as 
\begin{equation}\label{eq:eq_relative_entropy}
    D(\rho \Vert \sigma) := \begin{cases}
        \tr[\rho \log\rho - \rho \log \sigma] & \text{if } \ker \sigma \subseteq \ker \rho \, ,\\
        + \infty & \text{otherwise} \, ,
    \end{cases}
\end{equation}
and their \textit{Belavkin-Staszewski (BS) entropy} \cite{BelavkinStaszewski-BSentropy-1982} by 
\begin{equation}
    \widehat{D}(\rho \Vert \sigma) := \begin{cases}
        \tr[\rho\log \rho^{1/2} \sigma^{-1} \rho^{1/2}] & \text{if } \ker \sigma \subseteq \ker \rho \, , \\
        + \infty & \text{otherwise} \, .
    \end{cases}
\end{equation}
In the case of $\rho$ and $\sigma$ commuting, the two entropies coincide.  Otherwise, the BS-entropy is strictly larger than the relative entropy \cite{HiaiMosonyi-f-divergences-2017}.

Both notions above constitute quantum generalizations of the classical Kullback-Leibler divergence. The Umegaki relative entropy  between two quantum states measures their distinguishability \cite{hiai1991proper}. Moreover, after the application of a quantum channel, i.e. a completely positive and trace-preserving linear map $\mathcal{T}: \mathcal{S}(\mathcal{H}) \rightarrow  \mathcal{S}(\mathcal{K})$,  the distinguishability between those states can never increase. This phenomenon is called \textit{data-processing inequality} \cite{Petz2003}:
\begin{equation}\label{eq:DPIrelativeentropy}
    D(\rho \| \sigma ) \geq D(\mathcal{T} (\rho) \| \mathcal{T}(\sigma) )  \, .
\end{equation}
However, there are situations in which, after the application of a quantum channel, the Umegaki relative entropy does not decrease. This \textit{saturation} of the data-processing inequality was studied by Petz in \cite{Petz-SufficiencyChannels-1978, Petz-SufficientSubalgebras-1986, Petz2003}, where he proved:
\begin{equation}\label{eq:PetzRecoveryMap}
    D(\rho \| \sigma ) = D(\mathcal{T} (\rho) \| \mathcal{T}(\sigma) )  \; \Leftrightarrow \; \rho = \sigma^{1/2} \mathcal{T}^* (\mathcal{T}(\sigma)^{-1/2} \mathcal{T}(\rho) \mathcal{T}(\sigma)^{-1/2}) \sigma^{1/2} \, .
\end{equation}
Note that the map applied to $\mathcal T(\rho)$ on the right hand side is a quantum channel. It is called the \textit{Petz recovery map} and we denote it hereafter by $\mathcal{P}^\sigma_{\mathcal{T}}$, i.e., 
\begin{equation}
   \mathcal{P}^\sigma_{\mathcal{T}}(X) =  \sigma^{1/2} \mathcal{T}^* (\mathcal{T}(\sigma)^{-1/2} X \mathcal{T}(\sigma)^{-1/2}) \sigma^{1/2} \, , \qquad \forall X \in \mathcal S(\mathcal K).
\end{equation}
Eq. \eqref{eq:PetzRecoveryMap} then reads as an equivalence between saturation of the DPI for the relative entropy and $\rho$ being a fixed point of $\mathcal{P}^\sigma_{\mathcal{T}}\circ \mathcal{T}$. This inequality has been strengthened multiple times by providing lower non-negative bounds on the difference between the LHS and the RHS of Eq.\ \eqref{eq:DPIrelativeentropy} in terms of various measures of the `distance' from a state $\rho$ to its Petz recovery map (or to a rotated version of it) \cite{Fawzi2015,Sutter2017b,Junge2018}, e.g.,  \cite{CarlenVershynina-Stability-DPI-RE-2017}
\begin{equation} \label{eq:carlen-vershynina}
     D(\rho \| \sigma ) - D(\mathcal{N} (\rho) \| \mathcal{N}(\sigma) ) \geq \left( \frac{\pi}{8} \right)^4 \norm{\rho^{-1}}^{-2} \norm{\mathcal{N}(\sigma)^{-1}}^{-2} \norm{ \mathcal{P}^\sigma_{\mathcal{N}}\circ \mathcal{N}(\rho) - \rho }_1^4 \, .
\end{equation}
for $\mathcal{N}$ a conditional expectation.

Let us move now to the setting of the BS-entropy. The data-processing inequality also holds for this quantity, namely for every $\rho, \sigma \in \mathcal{S}(\HH)$ and every quantum channel $\mathcal{T}: \mathcal{S}(\HH) \rightarrow \mathcal{S}(\mathcal{K})$, we have
\begin{equation}
    \widehat{D}(\rho \| \sigma) \geq \widehat{D} (\mathcal{T}(\rho) \| \mathcal{T}(\sigma)) \, .
\end{equation}
Additionally, saturation of the BS-entropy was proven in \cite{BluhmCapel-BSentropy-2019} to be equivalent to 
\begin{equation}
    \widehat{D}(\rho \| \sigma) = \widehat{D} (\mathcal{T}(\rho) \| \mathcal{T}(\sigma)) \, \Leftrightarrow \, \rho = \sigma \mathcal{T}^*(\mathcal{T}(\sigma)^{-1} \mathcal{T}(\rho))  \, .
\end{equation}
Analogously to the introduction of the Petz recovery map, this equivalence motivated the definition of the so-called \textit{BS-recovery condition} \cite{BluhmCapel-BSentropy-2019} in the following way:
\begin{equation}\label{eq:BS-recovery-condition}
 \mathcal{B}^\sigma_{\mathcal{T}} (\cdot) := \sigma \mathcal{T}^*(\mathcal{T}(\sigma)^{-1} (\cdot) \, ) \, . 
\end{equation}
Along the lines of the strengthened DPI for the relative entropy recalled above, some authors of this article proved in \cite{BluhmCapel-BSentropy-2019} the following inequality:
\begin{equation}\label{eq:DPIBSentropy}
     \widehat{D}(\rho \| \sigma ) - \widehat{D}(\mathcal{T} (\rho) \| \mathcal{T}(\sigma) ) \geq \left( \frac{\pi}{8} \right)^4 \norm{\rho^{-1/2}\sigma \rho^{-1/2}}^{-4} \norm{\mathcal{T}(\rho)^{-1}}^{-2} \norm{ \mathcal{B}^\rho_{\mathcal{T}}\circ \mathcal{T}(\sigma) - \sigma }_2^4 \, .
\end{equation}

The map $\mathcal{B}^\sigma_{\mathcal{T}}$ is trace preserving but, unfortunately, is not positive or even Hermitian-preserving in general. To deal with this issue,  we will construct in Section \ref{sec:equality_conditions} a new recovery condition $\mathcal{B}^{\sigma,\text{sym}}_{\mathcal{T}}$ for the BS-entropy by symmetrizing the former one. The map $\mathcal{B}^{\sigma,\text{sym}}_{\mathcal{T}}$ is positive, but it is not linear.

\subsection{Conditional mutual informations}

Next, let us now consider a special case of the previous setting. Consider a tripartite Hilbert space $\mathcal{H}_{ABC} = \cH_A \otimes \cH_B \otimes \cH_C$ and  $\rho_{ABC} \in \cS_{+}(\cH_{ABC})$ a positive-definite state. On such multipartite systems, we will sometimes drop identity operators for readability, i.e., write $O_A$ instead of $O_A \otimes I_{BC}$ for $O_A \in \mathcal \mathcal{B}(\mathcal H_A)$ and $\mathcal T_{A \to X}$ instead of $\mathcal T_{A \to X} \otimes \id_{BC}$ for a map $\mathcal T_{A \to X}$ acting on $\mathcal \mathcal{B}(\mathcal H_A)$ and mapping it to $\mathcal \mathcal{B}(\mathcal H_X)$ for some Hilbert space $\mathcal H_X$.  For a partition $X, Y$ of the set $\{A,B,C\}$, we will denote  $\rho_{X}=\tr_{Y}\rho_{ABC}$, where $\tr_Y:\mathcal{B}(\mathcal{H}_X \otimes \mathcal{H}_Y) \to \mathcal{B}(\mathcal{H}_X)$ denotes the partial trace defined as the unique linear map satisfying that the condition $\tr[S(T\otimes I)]=\tr[\tr_Y(S)T]$ holds for every $T \in \mathcal{B}(\mathcal{H}_X)$ and $S \in \mathcal{B}(\mathcal{H}_X \otimes \mathcal{H}_Y)$. To shorten notation, we will also write sometimes $\rho_X$ instead of $ \rho_X \otimes I_Y$.

We define the \textit{conditional mutual information (CMI)} of $\rho_{ABC}$ between $A$ and $C$ conditioned on $B$ by
\begin{equation}
    I_{\rho}(A:C | B) := S(\rho_{AB}) +  S(\rho_{BC})- S(\rho_{ABC}) - S(\rho_{B}) \, ,
\end{equation}
for $S(\rho_X) := - \tr[\rho_X \log \rho_X]$ the von Neumann entropy of $\rho_X$ for $X \subseteq ABC$. 
The well-known property of strong subadditivity of the von Neumann entropy \cite{LiebRuskai-Subadditivity-1973} is equivalent to the non-negativity of the conditional mutual information. A state $\rho_{ABC}$ for which the CMI vanishes is called \textit{quantum Markov chain} (QMC). These states admit the following condition \cite{Petz2003,HaydenJozsaPetzWinter-StrongSubadditivity-2004}:
\begin{equation}\label{eq:PetzCondition}
    \rho_{ABC} = \rho_{AB}^{1/2} \rho_B^{-1/2} \rho_{BC}  \rho_B^{-1/2}  \rho_{AB}^{1/2} = :  (\mathcal{P}_{B\to AB}\otimes \id_C)(\rho_{BC}) \, ,
\end{equation}
Note that to ease notation, we have written here $\mathcal{P}_{B\to AB}$ for the Petz recovery map in the case of $\mathcal T = \tr_A$ and $\sigma=\rho_{AB}\otimes \tau_C$, where $\tau_C = I_C / d_C$. 

In the same setting, we can define the \textit{BS-conditional mutual information} (BS-CMI in short) of $\rho_{ABC}$ between $A$ and $C$ conditioned on $B$ in different forms, with the common ground that all vanish under the same conditions:

\begin{subequations}
\begin{align}\label{eq:BS-CMI1}
 \widehat{I}^{\mathrm{os}}_{\rho}(A;C | B)  := \widehat{D}(\rho_{ABC} \|  \rho_{AB} \otimes \tau_C  ) -\widehat{D}(\rho_{BC} \| \rho_{B} \otimes \tau_C   )  \, , 
 \end{align}
 \begin{align}\label{eq:BS-CMI2}
\widehat{I}^{\mathrm{ts}}_{\rho}(A;C | B)  := \widehat{D}(\rho_{ABC} \| \rho_{AB} \otimes \rho_C   ) -\widehat{D}(\rho_{BC} \| \rho_{B} \otimes \rho_C   )  \, , 
 \end{align}
 \begin{align}\label{eq:BS-CMI3}
\widehat{I}^{\mathrm{rev}}_{\rho}(A;C | B)  := \widehat{D}(\rho_{AB} \otimes \tau_C   \| \rho_{ABC}  ) -\widehat{D}(\rho_{B} \otimes \tau_C  \| \rho_{BC}  )  \, . 
 \end{align}
\end{subequations}
Note that contrary to the CMI, which is written with a double colon, we write a semicolon here, since the different versions of BS-CMI are not symmetric in $A$ and $C$. These notions were introduced in \cite{gondolf2024conditional} exchanging the roles of $A$ and $C$ above, but here we are using the expressions presented above as we find them more intuitive. Note that for the CMI this does not actually change anything, as it is symmetric in $A$ and $C$. If we were to replace the BS-entropy by the Umegaki relative entropy above, the first two quantities would reduce to $I_\rho(A:C|B)$. Translating the previous conditions for saturation of DPI into this setting, we have for $x\in \{ \mathrm{os}, \mathrm{ts}, \mathrm{rev}\}$
\begin{align}\label{eq:BSCMIZero}
   \widehat{I}^x_{\rho}(A;C | B) = 0 \, & \Leftrightarrow \, \rho_{ABC} = \rho_{AB} \rho_B^{-1} \rho_{BC} =:  (\mathcal{B}_{B\to AB}\otimes \id_C)(\rho_{BC}) \\
   & \, \Leftrightarrow \, \rho_{ABC} = ( \rho_{AB} \rho_B^{-1} \rho_{BC}^2 \rho_B^{-1} \rho_{AB} )^{1/2}  =:  \mathcal{B}^{\text{sym}}_{B\to AB}(\rho_{BC}) \, .
\end{align}
Again, we have written $\mathcal{B}_{B\to AB}$ and $\mathcal{B}^{\text{sym}}_{B\to AB}$ for the respective recovery conditions in the special case $\mathcal T = \tr_A$ and $\sigma = \rho_{AB} \otimes \tau_C$. This equivalence will be shown in \Cref{theo:StructureBSDPI_appendix}. We call states that satisfy any of the conditions above \textit{BS-quantum Markov chains} (BS-QMC). The first condition has been used in the estimation of decay of correlations of Gibbs states of local, finite-range, translation-invariant 1D Hamiltonians at any positive temperature in the past \cite{BluhmCapelPerezHernandez-ExpDecayMI-2021}. In the recent paper \cite{gondolf2024conditional}, a reversed DPI based on the first equivalence has been used to show superexponential decay of the three BS-CMIs introduced above with the size of $|B|$, for Gibbs states of local, finite-range, translation-invariant 1D Hamiltonians at any positive temperature. Additionally, building on Eq.\ \eqref{eq:DPIBSentropy} for $\widehat{I}^{\mathrm{rev}}_{\rho}(A;C | B)$ and an additional technical lemma, the following inequality was derived in the same paper
\begin{equation}\label{ineq:LowerBoundReversedCMI}
   \widehat{I}^{\mathrm{rev}}_{\rho}(A;C | B) \geq  \left( \frac{\pi}{8} \right)^4 \|\rho_{BC}^{-1/2}\rho_{ABC}\rho_{BC}^{-1/2} \|_\infty^{-2}\|\Phi_{B\to BC}(\rho_{AB})-\rho_{ABC}\|_1^4 
\end{equation}
for the map
\begin{equation}\label{MapPhi}
    \Phi_{B\to BC}(X) = \rho_B^{1/2}(\rho_B^{-1/2}\rho_{BC}\rho_B^{-1/2})^{1/2}\rho_B^{-1/2}X\rho_B^{-1/2}(\rho_B^{-1/2}\rho_{BC}\rho_B^{-1/2})^{1/2}\rho_B^{1/2} \, . 
\end{equation}
As an immediate consequence of this inequality, we have that $ \widehat{I}^{\mathrm{rev}}_{\rho}(A;C | B) = 0$ implies $\rho_{ABC}= ( \id_A \otimes \Phi_{B\to BC} )(\rho_{AB}) $, but the converse was left as an open question in \cite{gondolf2024conditional}. We answer this question in the affirmative in this article in Corollary \ref{theo:recoveryBSDPI}.

\section{States saturating the data-processing inequality for the BS-entropy}\label{sec:CharacterizationEta}
\subsection{Structure of BS-quantum Markov chains}

Similarly to the way in which quantum Markov chains were introduced as states for which the CMI vanishes, we can analogously introduce BS-quantum Markov chains.

\begin{definition}[BS-Quantum Markov Chain]
A state $\rho_{ABC}\in \cS(\cH_{ABC})$ is said to be a \emph{BS-quantum Markov chain  (BS-QMC)} if  the BS-CMIs from \eqref{eq:BS-CMI1}, \eqref{eq:BS-CMI2}, \eqref{eq:BS-CMI3}  vanish i.e.
\begin{equation}
\rho_{ABC} \text{ BS-QMC }\Leftrightarrow \; I_\rho^x(A;C|B)= 0 \text{ for any } x \in \{ \operatorname{os}, \operatorname{ts}, \operatorname{rev}\} \, . 
\end{equation}
\end{definition}

\begin{remark}\label{rem:BS-QMC}
Note that $I_\rho^x(A;C|B)= 0 \text{ for any } x \in \{ \operatorname{os}, \operatorname{ts}, \operatorname{rev}\} \, $ is equivalent to all of them vanishing, as by \cite{BluhmCapel-BSentropy-2019} we have: 
\begin{subequations}
\begin{align}\label{eq:BS-QMC1}
 \widehat{I}^{\mathrm{os}}_{\rho}(A;C | B) =0  \quad \Leftrightarrow \quad \rho_{ABC} = \rho_{AB} \, \rho_B^{-1} \rho_{BC}    \, , 
 \end{align}
 \begin{align}\label{eq:BS-QMC2}
 \widehat{I}^{\mathrm{ts}}_{\rho}(A;C | B) =0  \quad \Leftrightarrow \quad \rho_{ABC} = \rho_{AB} \otimes \rho_C \, (  \rho_B \otimes \rho_C)^{-1}   \rho_{BC}    \, , 
 \end{align}
 \begin{align}\label{eq:BS-QMC3}
 \widehat{I}^{\mathrm{rev}}_{\rho}(A;C | B) =0  \quad \Leftrightarrow \quad \rho_{AB} = \rho_{ABC} \,  \rho_{BC}^{-1}  \rho_{B}    \, . 
 \end{align}
\end{subequations}

\end{remark}

A natural starting point for the comparison between BS-QMCs and QMCs is given by  their structural decomposition. For the latter case, this was studied in \cite{HaydenJozsaPetzWinter-StrongSubadditivity-2004}, obtaining that $\rho_{ABC}$ is a QMC between $A \leftrightarrow B \leftrightarrow C $ (meaning $I_\rho(A:C|B) = 0$) if, and only if, there exist Hilbert spaces $\mathcal{H}_{B_n^L}$, $\mathcal{H}_{B_n^R}$, and a unitary  $\displaystyle U_B:\mathcal{H}_B \to \oplus_{n=1}^N\left( \mathcal{H}_{B_n^L} \otimes \mathcal{H}_{B_n^R}\right)$,
 such that, with $\{p_n\}$ being a probability distribution and quantum states $\Tilde{\rho}_{AB_n^L} \in \mathcal S(\mathcal{H}_{AB_n^L})$ and $\Tilde{\rho}_{B_n^R C} \in \mathcal S(\mathcal{H}_{B_n^RC})$,
\begin{equation}\label{eq:StructuralQMC}
    \rho_{ABC} =U_B^*\left( \underset{n }{\bigoplus} \,p_n \,\Tilde{\rho}_{AB_n^L} \otimes  \Tilde{\rho}_{B_n^R C}\right)U_B \, .
\end{equation}
It turns out that a similar decomposition can be found for BS-QMC, as our first main result shows:

\begin{theorem}\label{theo:StructureBSDPI} Assume that $\rho_{ABC}\in \cS(\cH_{ABC})$ is such that $\rho_B$ is invertible. Define the state 
\begin{equation}\label{definition_eta}
\eta_{ABC}:=\frac{1}{d_B}\rho_B^{-1/2}\rho_{ABC}\rho_B^{-1/2}.
\end{equation}
Then, the following are equivalent:

\begin{enumerate}
\item[(i)] $\rho_{ABC}$ is a BS-QMC.
\item[(ii)] $\rho_{ABC}=\rho_{AB}\rho_B^{-1}\rho_{BC}$.
\item[(iii)] The marginals $\eta_{AB}$ and $\eta_{BC}$ commute, and we have
$\rho_{ABC}=d_B^2\rho_B^{1/2}\eta_{AB}\eta_{BC}\rho_{B}^{1/2}$.
\item[(iv)] $\eta_{ABC}$ is a QMC.
\item[(v)] There are Hilbert spaces  $\mathcal{H}_{B_n^L}$, $\mathcal{H}_{B_n^R}$ and a
unitary  $\displaystyle U_B:\mathcal{H}_B \to \bigoplus_{n=1}^N\left( \mathcal{H}_{B_n^L} \otimes \mathcal{H}_{B_n^R}\right)$,
 such that 
\begin{equation}\label{AnnaStates}
\rho_{ABC}=\rho_B^{1/2}U^*_B\left(\bigoplus_n d_B\, p_n \, \Tilde{\eta}_{AB_{n}^L}\otimes
\Tilde{\eta}_{B_n^RC}\right)U_B\rho_B^{1/2}
\end{equation}
for some states $\Tilde{\eta}_{AB_n^L}$ on $\mathcal {H}_{AB_n^L}$ and $\Tilde{\eta}_{B_n^RC}$ on $\mathcal {H}_{B_n^RC}$
and a probability distribution $\{p_n\}$.

\end{enumerate}
\end{theorem}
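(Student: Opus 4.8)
The plan is to establish the chain of equivalences by proving $\mathrm{(i)}\Leftrightarrow\mathrm{(ii)}$, then $\mathrm{(ii)}\Leftrightarrow\mathrm{(iii)}$, then $\mathrm{(iii)}\Leftrightarrow\mathrm{(iv)}$, and finally $\mathrm{(iv)}\Leftrightarrow\mathrm{(v)}$, exploiting the fact that $\eta_{ABC}$ is obtained from $\rho_{ABC}$ by conjugation with the invertible operator $d_B^{-1/2}\rho_B^{-1/2}$, which is a bijective correspondence that is easy to invert.

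First, $\mathrm{(i)}\Leftrightarrow\mathrm{(ii)}$ is essentially the content of \eqref{eq:BSCMIZero}: by definition a BS-QMC is a state with $\widehat I^x_\rho(A;C|B)=0$ for one (equivalently all) $x\in\{\mathrm{os},\mathrm{ts},\mathrm{rev}\}$, and the saturation characterization of the DPI for the BS-entropy in the case $\mathcal T=\tr_A$, $\sigma=\rho_{AB}\otimes\tau_C$ gives precisely $\rho_{ABC}=\rho_{AB}\rho_B^{-1}\rho_{BC}$. (Strictly, one should invoke \Cref{thm:equivalence_recovery_conditions} to know the three versions agree, but this is quoted as available.)

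Next, $\mathrm{(ii)}\Leftrightarrow\mathrm{(iii)}$ is a direct computation. Substituting the definition $\eta_{ABC}=\frac{1}{d_B}\rho_B^{-1/2}\rho_{ABC}\rho_B^{-1/2}$ and tracing out $C$ (resp.\ $A$) gives $\eta_{AB}=\frac{1}{d_B}\rho_B^{-1/2}\rho_{AB}\rho_B^{-1/2}$ and $\eta_{BC}=\frac{1}{d_B}\rho_B^{-1/2}\rho_{BC}\rho_B^{-1/2}$, using that $\rho_B^{-1/2}$ commutes with the partial trace over $A$ and $C$. Then $\rho_{ABC}=\rho_{AB}\rho_B^{-1}\rho_{BC}$ is equivalent, after conjugating both sides by $\rho_B^{-1/2}$, to $d_B\,\eta_{ABC}=(d_B\eta_{AB})(d_B\eta_{BC})$, i.e.\ $\eta_{ABC}=d_B\,\eta_{AB}\eta_{BC}$; rearranging gives $\rho_{ABC}=d_B^2\rho_B^{1/2}\eta_{AB}\eta_{BC}\rho_B^{1/2}$. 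The commutation of $\eta_{AB}$ and $\eta_{BC}$ then follows because $\eta_{ABC}=d_B\eta_{AB}\eta_{BC}$ is self-adjoint, forcing $\eta_{AB}\eta_{BC}=\eta_{BC}\eta_{AB}$; conversely if they commute the product $d_B\eta_{AB}\eta_{BC}$ is automatically self-adjoint and one checks its partial traces are correct, so it equals $\eta_{ABC}$ precisely when $\mathrm{(ii)}$ holds.

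For $\mathrm{(iii)}\Leftrightarrow\mathrm{(iv)}$: recall that $\eta_{ABC}$ is a QMC iff $I_{\eta}(A:C|B)=0$ iff $\eta_{ABC}=\eta_{AB}^{1/2}\eta_B^{-1/2}\eta_{BC}\eta_B^{-1/2}\eta_{AB}^{1/2}$ by \eqref{eq:PetzCondition}. Note $\eta_B=\tfrac{1}{d_B}\rho_B^{-1/2}\rho_B\rho_B^{-1/2}=\tfrac{1}{d_B}I_B=\tau_B$, so the Petz condition for $\eta$ reads $\eta_{ABC}=d_B\,\eta_{AB}^{1/2}\eta_{BC}\eta_{AB}^{1/2}$. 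One shows this is equivalent to $\mathrm{(iii)}$: if $\eta_{AB}$ and $\eta_{BC}$ commute then $\eta_{AB}^{1/2}$ commutes with $\eta_{BC}$ and $d_B\eta_{AB}^{1/2}\eta_{BC}\eta_{AB}^{1/2}=d_B\eta_{AB}\eta_{BC}$, giving the QMC condition; conversely, from the QMC condition the known structure theorem of \cite{HaydenJozsaPetzWinter-StrongSubadditivity-2004} forces a block-diagonal form (see below) in which $\eta_{AB}$ and $\eta_{BC}$ manifestly commute. The cleanest route is probably to deduce commutation directly: I would use that $\eta_B=\tau_B$ is central, so the Petz map simplifies, and argue via the known equality condition that $[\eta_{AB},\eta_{BC}]=0$.

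Finally, $\mathrm{(iv)}\Leftrightarrow\mathrm{(v)}$ is the translation of the Hayden--Jozsa--Petz--Winter structure theorem. Applying that theorem to the QMC $\eta_{ABC}$ gives a unitary $U_B:\mathcal H_B\to\bigoplus_n(\mathcal H_{B_n^L}\otimes\mathcal H_{B_n^R})$ and a decomposition $\eta_{ABC}=U_B^*(\bigoplus_n q_n\,\tilde\eta_{AB_n^L}\otimes\tilde\eta_{B_n^RC})U_B$. Since $\eta_{ABC}=\frac{1}{d_B}\rho_B^{-1/2}\rho_{ABC}\rho_B^{-1/2}$, inverting gives $\rho_{ABC}=d_B\,\rho_B^{1/2}\eta_{ABC}\rho_B^{1/2}=\rho_B^{1/2}U_B^*(\bigoplus_n d_B\,q_n\,\tilde\eta_{AB_n^L}\otimes\tilde\eta_{B_n^RC})U_B\rho_B^{1/2}$, which is \eqref{AnnaStates} with $p_n=q_n$. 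The converse is the same computation run backwards: given a state of the form \eqref{AnnaStates}, conjugating by $d_B^{-1/2}\rho_B^{-1/2}$ and noting $\eta_B=\tau_B$ must be consistent with the marginals recovers the block form for $\eta_{ABC}$, hence $\eta_{ABC}$ is a QMC by the structure theorem. One subtlety to check is that the marginal constraint $\eta_B=\tau_B$ is automatically compatible with an arbitrary choice of $\{p_n\}$, $\tilde\eta$'s in \eqref{AnnaStates}: tracing \eqref{AnnaStates} over $AC$ shows $\rho_B=\rho_B^{1/2}U_B^*(\bigoplus_n d_B p_n\,\tilde\eta_{B_n^L}\otimes\tilde\eta_{B_n^R})U_B\rho_B^{1/2}$, forcing $\bigoplus_n d_B p_n\,\tilde\eta_{B_n^L}\otimes\tilde\eta_{B_n^R}=U_B\rho_B U_B^*\,/\,$(appropriate normalization)... actually it forces $U_B^*(\bigoplus_n d_B p_n\,\tilde\eta_{B_n^L}\otimes\tilde\eta_{B_n^R})U_B = I_B$, i.e.\ $\eta_B=\tau_B$ — so the data in \eqref{AnnaStates} is constrained, and this is the same constraint appearing in \cite{HaydenJozsaPetzWinter-StrongSubadditivity-2004}. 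The main obstacle I anticipate is not any single step but keeping the bookkeeping of $d_B$ factors and the $\rho_B^{1/2}$ conjugations consistent, together with carefully justifying that in $\mathrm{(iii)}\Leftrightarrow\mathrm{(iv)}$ the commutation of $\eta_{AB}$ and $\eta_{BC}$ is genuinely equivalent to the (a priori weaker-looking) QMC condition — this is where I would lean on the fact that $\eta_B$ is maximally mixed so that the Petz recovery condition collapses to a symmetric product.
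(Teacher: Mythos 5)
Your proposal is correct and follows essentially the same route as the paper's proof: the saturation result for the BS-entropy DPI for (i) $\Leftrightarrow$ (ii), self-adjointness of $d_B\,\eta_{AB}\eta_{BC}$ to get the commutation, the collapse of the Petz condition when $\eta_B=\tau_B$, and the Hayden--Jozsa--Petz--Winter decomposition applied to $\eta_{ABC}$ (the paper merely organizes these as a cycle (ii) $\Rightarrow$ (iii) $\Rightarrow$ (iv) $\Rightarrow$ (v) $\Rightarrow$ (ii) instead of pairwise equivalences). The one point to make explicit is that in your (iv) $\Rightarrow$ (iii) the commutation of the marginals is not ``manifest'' from the block form alone --- one must first deduce $\Tilde{\eta}_{B_n^L}=\tau_{B_n^L}$ and $\Tilde{\eta}_{B_n^R}=\tau_{B_n^R}$ from $\eta_B=\tau_B$, which is precisely the computation the paper performs in its (v) $\Rightarrow$ (ii) step and which you do flag at the end.
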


\begin{proof}

\noindent \underline{(i) $\iff$ (ii).} This equivalence  follows by Remark \ref{rem:BS-QMC}.

\vspace{0.1cm}

\noindent \underline{(ii) $\Rightarrow$ (iii).} If (ii) holds, then 
clearly $\rho_{ABC}=d_B^2\rho_B^{1/2}\eta_{AB}\eta_{BC}\rho_B^{1/2}=\rho_{ABC}^*$.
Since $\rho_B$ is invertible, this implies that 
$[\eta_{AB},\eta_{BC}]=0$, so that (iii) holds. 

\vspace{0.1cm}

\noindent \underline{(iii) $\Rightarrow$ (iv).} Assume (iii), then since $\eta_B=d_B^{-1}I_B=\tau_B$, we obtain 
\[
\eta_{ABC}=d_B  \eta_{AB}  \eta_{BC}=\eta_{AB}^{1/2}\eta_B^{-1/2}\eta_{BC}\eta_B^{-1/2}\eta_{AB}^{1/2}
\]
so that $\eta_{ABC}$ is a QMC.

\vspace{0.1cm}

\noindent \underline{(iv) $\Rightarrow$ (v).} If (iv) holds, then by the structural decomposition in \cite{HaydenJozsaPetzWinter-StrongSubadditivity-2004} there are Hilbert spaces $\mathcal{H}_{B_n^L}$, $\mathcal{H}_{B_n^R}$ and a
unitary  $U_B:\mathcal{H}_B \to \bigoplus_{n=1}^N\left( \mathcal{H}_{B_n^L} \otimes \mathcal{H}_{B_n^R}\right)$ such that
\[
\eta_{ABC}=U^*_B\left(\bigoplus_n p_n \, \Tilde{\eta}_{AB_{n}^L}\otimes
\Tilde{\eta}_{B_n^RC}\right)U_B,
\]
this proves (v). 

\vspace{0.1cm}

\noindent \underline{(v) $\Rightarrow$ (ii).}
Finally, suppose that (v) holds, then from
\[
\tau_B=d_B^{-1}\tr_{AC} \left[\rho_B^{-1/2}\rho_{ABC}\rho_B^{-1/2} \right]=U_B^*\left(\bigoplus_n p_n\, \Tilde{\eta}_{B_n^L}\otimes
\Tilde{\eta}_{B_n^R}\right)U_B
\]
we infer that $\Tilde{\eta}_{B_n^L}=\tau_{B_n^L}$ and $\Tilde{\eta}_{B_n^R}=\tau_{B_n^R}$. It follows that
$\rho_{AB}=\rho_B^{1/2}U^*_B\left(\bigoplus_n d_B \, p_n \, \Tilde{\eta}_{AB_n^L}\otimes
\tau_{B_n^R}\right)U_B\rho_B^{1/2}$ and
similarly $\rho_{BC}=\rho_B^{1/2}U^*_B\left(\bigoplus_n d_B \, p_n \, \tau_{B_n^L}\otimes
\Tilde{\eta}_{B_n^RC}\right)U_B\rho_B^{1/2}$. The condition (ii) is immediate from this.
\end{proof}

\begin{remark}
 Note that the assumption that $\rho_B$ is invertible can be easily removed. Indeed, let $P_B:= \supp(\rho_B)$. Since the supports 
 $\supp(\rho_{ABC})$, $\supp(\rho_{AB})$, $\supp(\rho_{BC})$ are all
contained in $P_B$, we may define the state
\[
\eta_{ABC}=\tr[P_B]^{-1}\rho_{B}^{-1/2}\rho_{ABC}\rho_{B}^{-1/2} \, .
\]
All the statements and proofs remain the same, except that now the marginal is $\eta_B=\tr[P_B]^{-1}P_B$, which commutes with all the states involved. Alternatively, we may always assume that $\rho_B$ is invertible by restricting to the subspace $P_B\cH_B$.

\end{remark}

\begin{corollary}
    Let $\rho_{ABC}$ be a BS-QMC with associated QMC $\eta_{ABC}$ as in Eq.\ \eqref{definition_eta}. Then, 
    \begin{center}
        $\widehat{I}_{\rho}(A;C\vert B)=0$ if, and only if, $I_{\eta}(A:C\vert B)=0 \, .$ 
    \end{center}
\end{corollary}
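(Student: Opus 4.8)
The plan is to read this corollary as an immediate consequence of \cref{theo:StructureBSDPI} together with the equivalence, recorded in \eqref{eq:BSCMIZero} and to be established in \cref{thm:equivalence_recovery_conditions}, between the three versions of the BS-conditional mutual information vanishing and $\rho_{ABC}$ being a BS-QMC. Concretely, $\widehat I_\rho(A;C|B)=0$ means (by \eqref{eq:BSCMIZero}) that $\rho_{ABC}$ is a BS-QMC, which by the equivalence (i)$\iff$(iv) in \cref{theo:StructureBSDPI} is equivalent to $\eta_{ABC}$ being a QMC, which in turn is exactly the statement $I_\eta(A:C|B)=0$ by \eqref{eq:PetzCondition}. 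So the corollary is really just a dictionary entry translating the three named equivalences; there is essentially no new content, and the proof is one or two lines.

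The one point worth spelling out is that the notation $\widehat I_\rho(A;C|B)$ in the corollary is shorthand for any (equivalently, all) of $\widehat I^{\mathrm{os}}_\rho$, $\widehat I^{\mathrm{ts}}_\rho$, $\widehat I^{\mathrm{rev}}_\rho$; this is legitimate precisely because \eqref{eq:BSCMIZero} asserts that all three vanish under the same condition. I would therefore begin the proof by invoking \eqref{eq:BSCMIZero} (or \cref{thm:equivalence_recovery_conditions}) to replace ``$\widehat I_\rho(A;C|B)=0$'' with ``$\rho_{ABC}$ is a BS-QMC'', then quote the chain (i)$\iff$(iv) of \cref{theo:StructureBSDPI} to replace this with ``$\eta_{ABC}$ is a QMC'', and finally recall that the QMC condition is by definition (equivalently, by \eqref{eq:PetzCondition} and the characterization of \cite{HaydenJozsaPetzWinter-StrongSubadditivity-2004}) the same as $I_\eta(A:C|B)=0$.

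There is no real obstacle here; the only mild subtlety is bookkeeping about whether $\rho_B$ is invertible, since $\eta_{ABC}$ is defined via $\rho_B^{-1/2}$. But the remark following \cref{theo:StructureBSDPI} already disposes of this by passing to the support projection $P_B$ and replacing $d_B$ with $\tr[P_B]$, so one can either assume $\rho_B$ invertible without loss of generality or simply cite that remark. Hence the proof reads, in full:

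\begin{proof}
By \eqref{eq:BSCMIZero} (see also \cref{thm:equivalence_recovery_conditions}), $\widehat I_\rho(A;C|B)=0$ holds for one, hence for all, of the three versions of the BS-conditional mutual information if, and only if, $\rho_{ABC}$ is a BS-QMC. By the equivalence (i)$\iff$(iv) of \cref{theo:StructureBSDPI}, this is in turn equivalent to $\eta_{ABC}$ being a QMC, i.e., to $I_\eta(A:C|B)=0$ by \eqref{eq:PetzCondition}.
\end{proof}
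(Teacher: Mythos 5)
Your proof is correct and follows exactly the paper's route: the paper's own proof is the one-liner ``This follows directly from the equivalence of (i) and (iv) in Theorem \ref{theo:StructureBSDPI}.'' Your additional remarks on the three equivalent versions of the BS-CMI and on the invertibility of $\rho_B$ are accurate bookkeeping but do not change the argument.
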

\begin{proof}
This follows directly from the equivalence of (i) and (iv) in Theorem \ref{theo:StructureBSDPI}.     
\end{proof}

\begin{remark}\label{remark:ConverseEta}
 Note that condition (iv) in Theorem \ref{theo:StructureBSDPI} shows that we can map any BS-QMC $\rho_{ABC}$ to a QMC $\eta_{ABC}$ satisfying $\eta_B=\tau_B$. 
 Condition (iii) shows the converse, i.e., we can map any  QMC $\eta_{ABC}$ with $\eta_B=\tau_B$ to a family of BS-QMCs by defining $\omega_{ABC}(X_B):=d_B^2 X_B^{1/2}\eta_{AB}\eta_{BC}X_{B}^{1/2}$, for any $X_B \in \cS(\cH_{B})$. We will then have $\rho_B=X_B$. As a special case, given the   QMC decomposition
 \begin{equation}
      \eta_{ABC} = U_B^*\left(\underset{n }{\bigoplus} p_n \, \Tilde{\eta}_{AB_n^L} \otimes  \Tilde{\eta}_{B_n^R C}\right)U_B \, ,
 \end{equation}
  if we take
 \begin{equation}
         X_B:=U_B^*\left(\underset{n }{\bigoplus} \frac{1}{d_{B_n}}\tilde{\rho}_{B_n^L}\otimes \tilde{\rho}_{B_n^R}  \right)U_B,
     \end{equation}
for any $\tilde{\rho}_{B_n^L} \in \cS(\mathcal{H}_{B_n^L})$, $\tilde{\rho}_{B_n^R} \in \cS(\mathcal{H}_{B_n^R})$, then $\omega_{ABC}(X_B)=d_B^2 X_B^{1/2}\eta_{AB}\eta_{BC}X_{B}^{1/2}$ is a QMC. We will discuss more about when a BS-QMC is a QMC in  Section \ref{BS-QMC-and-QMC}. 
\end{remark}
\begin{remark}\label{rem:Gibbs_comm_QMC}
 Consider a BS-QMC $\rho_{ABC}$ with associated QMC $\eta_{ABC}$. It is proven, e.g., in \cite{brown-2012}, that QMCs are Gibbs states of local commuting Hamiltonians, that is,  we can write
 \begin{equation}
     \eta_{ABC} =\frac{1}{Z_{ABC}} e^{-H_{AB}-H_{BC}} \, ,
 \end{equation}
 where $Z_{ABC}$ is the normalization constant, where $H_{AB} $ and $H_{BC}$ are Hermitian operators supported in $AB$ and $BC$, respectively, and such that  $[H_{AB}, H_{BC}] = 0$.  As a consequence, by Eq.\ \eqref{definition_eta}, we can write any BS-QMC as 
 \begin{equation}
     \rho_{ABC}= \frac{1}{Z_{ABC}}\rho_B^{1/2} e^{-H_{AB}-H_{BC}}\rho_B^{1/2} \, .
 \end{equation}
 
\end{remark}

\begin{remark}\label{rem:other-f-divergences}
   Theorem \ref{theo:StructureBSDPI}  describes not only the set of states with BS-CMI equal to zero,  but also the set of states with CMI equal to zero if we replace the BS-entropy in the definition of $\widehat{I}^x_\rho(A;C|B)$, $x \in \{\mathrm{os},\mathrm{ts},\mathrm{rev}\}$, by any maximal $f$-divergence  \cite{Matsumoto2018,HiaiMosonyi-f-divergences-2017}, with $f$ an operator convex but non-linear function. In fact, by \cite[Theorem 3.34]{HiaiMosonyi-f-divergences-2017} it is enough to check the saturation of the data-processing inequality in one maximal $f$-divergence defined by a  non-linear operator convex function.
\end{remark}

Condition (ii) in Theorem \ref{theo:StructureBSDPI} is an equality condition in terms of the recovery condition  $\mathcal{B}_{B \rightarrow AB}$. The map $\mathcal{B}_{B \rightarrow AB}$ is linear and trace preserving but not positive. Using the commutativity of the marginals of $\eta_{ABC}$ in condition (iii), we can obtain a new recovery condition  in the following way:
\begin{equation}\label{eq:FromBtoPhi}
    \begin{split}
        (\mathcal{B}_{B \rightarrow AB}\otimes \id_C)(\rho_{BC})&=\rho_{AB}\rho_B^{-1}\rho_{BC}\\
        &=\rho_{B}^{\frac{1}{2}}\underbrace{(\rho_B^{-\frac{1}{2}}\rho_{AB}\rho_B^{-\frac{1}{2}})}_{d_B \eta_{AB}}\underbrace{(\rho_B^{-\frac{
1}{2}}\rho_{BC}\rho_B^{-\frac{1}{2}})}_{d_B \eta_{BC}}\rho_B^{\frac{1}{2}}\\
&=\rho_B^{\frac{1}{2}}(\rho_B^{-\frac{1}{2}}\rho_{AB}\rho_B^{-\frac{1}{2}})^{\frac{1}{2}}\rho_B^{-\frac{
1}{2}}\rho_{BC}\rho_B^{-\frac{1}{2}}(\rho_B^{-\frac{1}{2}}\rho_{AB}\rho_B^{-\frac{1}{2}})^{\frac{1}{2}}\rho_B^{\frac{1}{2}}\\
&=(\Phi_{B\to AB}\otimes \id_C)(\rho_{BC}),
    \end{split}
    \end{equation}
    where $\Phi_{B\to AB}$ is defined as 
\begin{align}
\Phi_{B\to AB}(X):=&\rho_B^{1/2}(\rho_B^{-1/2}\rho_{AB}\rho_B^{-1/2})^{1/2}\rho_B^{-1/2}X\rho_B^{-1/2}(\rho_B^{-1/2}\rho_{AB}\rho_B^{-1/2})^{1/2}\rho_B^{1/2} \\ =&
d_B\rho_B^{1/2}\eta_{AB}^{1/2}\rho_B^{-1/2}X\rho_B^{-1/2}\eta_{AB}^{1/2}\rho_B^{1/2} \, .
\end{align}
Notice that $\Phi_{B\to AB}$ is linear and completely positive but not trace preserving. This map has an interesting form when rewritten using the polar decomposition $\rho_{AB}^{1/2}\rho_B^{-1/2}=d_B^{1/2}W_{AB}\eta_{AB}^{1/2}$ where $W_{AB}$ is a unitary and $\eta_{AB}$ as in Eq.\ \eqref{definition_eta}. 
We then obtain 
\begin{equation}\label{eq:MapPhi}
    \Phi_{B\to AB}(X_{B})=\rho_{AB}^{1/2}W_{AB}\rho_B^{-1/2}X_{B}\rho_B^{-1/2}W_{AB}^*\rho_{AB}^{1/2},
\end{equation}
which looks similarly to the Petz recovery map, but now with the additional unitary matrix $W_{AB}$ in between. Moreover,   the following trace inequality shows that $\Phi_{B\to AB}$ can only increase the trace by a factor depending on the dimension of $A$: for $X_B \geq 0$,
\begin{equation}
    \tr_{AB}[ \Phi_{B\to AB}(X_{B})] \leq d_A \Vert \rho_B^{-1}\Vert_{\infty}\tr_B X_{B}.
\end{equation}
 \begin{corollary}\label{coro:MapPhi}
     \label{theo:recoveryBSDPI} Let $\rho_{ABC}\in \cS(\cH_{ABC})$. Then $\rho_{ABC}$ is a BS-QMC if and only if $(\Phi_{B\to AB}\otimes \id_C)(\rho_{BC})=\rho_{ABC}$.    
 \end{corollary}
\begin{proof}
    Follows from Eq.\ \eqref{eq:FromBtoPhi} and the fact that $(\mathcal{B}_{B \rightarrow AB}\otimes \id_C)(\rho_{BC})=\rho_{ABC}$ if and only if $\rho_{ABC}$ is a BS-QMC.
\end{proof}

\section{(BS-)quantum Markov chains and their approximate versions} \label{BS-QMC-and-QMC}
\subsection{BS-quantum Markov chains which are not quantum Markov chains} \label{sec:BS-QMC-not-QMC}
In the previous sections, we have discussed the fact that the set of points that saturate the DPI for the relative entropy is contained in that of the BS-entropy, but the converse is not true \cite{HiaiMosonyi-f-divergences-2017,Jencova2009}. This can be translated to the simplified tripartite case of the conditional mutual information and the analogous BS quantities. In this case, we say that every QMC is a BS-QMC, but there are BS-QMCs which are not QMCs. An example of this is presented below.

\begin{example}\label{example:BSqmcNotQmc}
Consider a system with Hilbert spaces $\mathcal{H}_A=\mathcal{H}_B=\mathcal{H}_C=\mathbb{C}^2$ and let
\begin{equation}
    \rho_{ABC}=\frac{9}{47} \begin{pmatrix}
       \frac{1}{3} & 0 &0 &0 &0&0&0&0\\
       0 &\frac{4}{3}&0&-\frac{2}{3}&0&0&0&0\\
       0 & 0 & \frac{2}{3} &0&0&0&0&0\\
       0 & -\frac{2}{3}& 0& \frac{4}{3}&0&0&0&0\\
       0&0&0&0&\frac{1}{9}&0&\frac{1}{9}&0\\
       0&0&0&0&0&\frac{1}{3}&0&0\\
       0&0&0&0&\frac{1}{9}&0&\frac{4}{9}&0\\
       0&0&0&0&0&0&0&\frac{2}{3}\\
    \end{pmatrix}
\end{equation}
with marginals
\begin{equation}
    \rho_{BC}=\frac{9}{47}\begin{pmatrix}
        \frac{4}{9}&0 &\frac{1}{9}&0\\
        0&\frac{5}{3}&0&-\frac{2}{3}\\
        \frac{1}{9}&0&\frac{10}{9}&0\\
        0&-\frac{2}{3}&0&2
    \end{pmatrix},\; \rho_{AB}=\frac{9}{47}\begin{pmatrix}
        \frac{5}{3}& -\frac{2}{3}&0 &0\\
        -\frac{2}{3}&2&0&0\\
        0 &0 &\frac{4}{9}&\frac{1}{9}\\
        0&0&\frac{1}{9}&\frac{10}{9}
    \end{pmatrix}, \; \rho_B=\frac{9}{47}\begin{pmatrix}
        \frac{19}{9}&-\frac{5}{9}\\
        -\frac{5}{9}&\frac{28}{9}
    \end{pmatrix}.
\end{equation}
It can be checked that this state $\rho_{ABC}$ satisfies the BS recovery condition
\begin{equation}
    \rho_{ABC}=\rho_{AB} \rho_B^{-1} \rho_{BC},
\end{equation}
but that it is not a QMC since
\begin{equation}
    \rho_{ABC}\neq \rho_{AB}^{1/2}\rho_B^{-1/2}\rho_{BC}\rho_B^{-1/2}\rho_{AB}^{1/2}.
\end{equation}
    
\end{example}

A natural question is then what makes QMCs special in the set of BS-QMCs. The following result provides  different characterisations for the case of a BS-QMC to be a QMC.

\begin{proposition}\label{prop:QMC_in_BS}
    Let $\rho_{ABC}$ be a BS-QMC and let $\eta_{ABC}$ be the corresponding QMC. Let $\rho_{AB}^{1/2}\rho_B^{-1/2}=d_B^{1/2}W_{AB}\eta_{AB}^{1/2}$ be the polar decomposition, with $W_{AB}$ unitary. Then, the following are equivalent.
    \begin{enumerate}
        \item [(i)] $\rho_{ABC}$ is a QMC.
        \item [(ii)] There is a decomposition as in Theorem \ref{theo:StructureBSDPI} (v), such that also 
\begin{equation}\label{eq:rhoBStructure}
\rho_B=U_B^*\left(\bigoplus_n p_n\Tilde{\rho}_{B_n^L}\otimes \Tilde{\rho}_{B_n^R}\right)U_B
\end{equation}
for some $\Tilde{\rho}_{B_n^L}\in \cS(\cH_{B_n^L})$, $\Tilde{\rho}_{B_n^R}\in \cS(\cH_{B_n^R})$ and a probability distribution $\{p_n\}$.
\item [(iii)]  $[\rho_B^{it}\eta_{AB}\rho_B^{-it},\eta_{BC}]=0$ for all $t\in \mathbb R$.
\item [(iv)]  $W_{AB}\eta_{BC}W^*_{AB}=\eta_{BC}$.
\item[(v)] $d_B^{-1}\rho_{AB}^{-1/2}\rho_{ABC}\rho_{AB}^{-1/2}=\eta_{BC}$.
    \end{enumerate}
\end{proposition}

\begin{proof} 
\noindent \underline{$\text{(i)} \Rightarrow \text{(ii)}.$} It is easily seen from the structure theorem for QMC in \cite{HaydenJozsaPetzWinter-StrongSubadditivity-2004}. 

\vspace{0.1cm}

\noindent \underline{$\text{(ii)} \Rightarrow \text{(i)} .$} Suppose that $\rho_{ABC}$ admits a decomposition like \eqref{AnnaStates}, i.e.
\begin{equation}\label{eq:rhoABC_aux}
\rho_{ABC}=\rho_B^{1/2}U^*_B\left(\bigoplus_n d_B\, p_n \, \Tilde{\eta}_{AB_{n}^L}\otimes
\Tilde{\eta}_{B_n^RC}\right)U_B\rho_B^{1/2} \, ,
\end{equation}
for some unitary $U_B:\cH_B \to \oplus_n \cH_{B_n^L}\otimes \cH_{B_n^R}$  and let us assume also that $\rho_B$ has the form \eqref{eq:rhoBStructure} where $U_B$ is the same as in $\eqref{AnnaStates}$. If we trace out $A$ and $C$ in \eqref{eq:rhoABC_aux},
\begin{equation}
U_B^*\left(\bigoplus_n p_n \Tilde{\eta}_{B_n^L} \otimes   \Tilde{\eta}_{B_n^R}\right)U_B=\eta_B=\frac{1}{d_B}I_B,
\end{equation}
so for every sector $n$,
\begin{equation}
d_B p_n\Tilde{\eta}_{B_n^L}\otimes \Tilde{\eta}_{B_n^R}=I_{B_n^L}\otimes I_{B_n^R}\, ,
\end{equation} 
which implies that $d_Bp_n=d_{B_n^L}d_{B_n^R}$. Define now the quantum states
\begin{equation}
\xi_{AB_n^L}=d_{B_n^L}\Tilde{\rho}_{B_n^L}^{1/2}\Tilde{\eta}_{AB_{n}^L} \Tilde{\rho}_{B_n^L}^{1/2}, \quad \text{and} \quad \xi_{B_n^RC}=d_{B_n^R}\Tilde{\rho}_{B_n^R}^{1/2}\Tilde{\eta}_{B_{n}^RC} \Tilde{\rho}_{B_n^R}^{1/2} \, .
\end{equation}
Then,
\begin{equation}
\begin{split}
\rho_{ABC}&=U_B^*\left( \bigoplus_n d_Bp_n^2 \hspace{3pt}(\Tilde{\rho}_{B_n^L}^{1/2}\Tilde{\eta}_{AB_{n}^L} \Tilde{\rho}_{B_n^L}^{1/2}) \otimes (\Tilde{\rho}_{B_n^R}^{1/2}\Tilde{\eta}_{B_{n}^RC} \Tilde{\rho}_{B_n^R}^{1/2})\right)U_B\\
&=U_B^*\left( \bigoplus_n \frac{d_Bp_n^2}{d_{B_n^L}d_{B_n^R}}\xi_{AB_n^L} \otimes \xi_{B_n^RC} \right)U_B\\
&=U_B^*\left( \bigoplus_n p_n \xi_{AB_n^L} \otimes \xi_{B_n^RC} \right)U_B,
\end{split}
 \end{equation}
and we conclude by the structure theorem for QMC in \cite{HaydenJozsaPetzWinter-StrongSubadditivity-2004} that $\rho_{ABC}$ is a QMC.

\vspace{0.1cm}

\noindent \underline{$\text{(ii)} \Rightarrow \text{(iii)} .$}
It is also clear that (ii) implies (iii).

\vspace{0.1cm}

\noindent \underline{$\text{(iii)} \Rightarrow \text{(iv)} .$}
Assume (iii) and let $\mathcal A\subseteq
\mathcal{B}(\cH_{AB})$ be the unital subalgebra generated by $\rho_B^{it}\eta_{AB}\rho_B^{-it}$,
$t\in \mathbb R$. Then $\rho_B^{it}\mathcal A\rho_B^{-it}\subseteq \mathcal A$ and
$I_A\otimes \eta_{BC}\in \mathcal A'\otimes \mathcal{B}(\cH_C)$. Since $\mathcal A$ is a finite-dimensional C$^\ast$-algebra, there is a unitary $V_{AB}: \cH_{AB}\to \bigoplus_n \cH_n^{L}\otimes \cH_n^R$ such that we have the decomposition
\[
\mathcal A=V_{AB}^*\left(\bigoplus_n \mathcal{B}(\cH^L_n)\otimes I_{\cH_n^R}\right)V_{AB}.
\]
We obtain
\[
\eta_{AB}=V_{AB}^*(\oplus_n \Tilde{\eta}_n^L\otimes I_{\cH_n^R})V_{AB},\qquad
 \rho_B=V_{AB}^*(\oplus_n \Tilde{\rho}_n^L\otimes
\Tilde{\rho}_n^R)V_{AB},
\]
with some positive $\Tilde{\eta}_n^L \in \mathcal \mathcal{B}(\cH_n^L)$, and some positive-definite $\Tilde{\rho}_n^L\in \cS(\cH_n^L)$, $\Tilde{\rho}_n^R\in
 \cS(\cH_n^R)$. The last equality follows by \cite[Thm. 11.27]{Petz2008}.
We get that
\[
\rho_{AB}=d_B^{-1}\rho_B^{1/2}\eta_{AB}\rho_B^{1/2}=d_B^{-1}V_{AB}^*\left(\bigoplus_n
(\Tilde{\rho}_n^L)^{1/2}\Tilde{\eta}_n^L(\Tilde{\rho}_n^L)^{1/2}\otimes \Tilde{\rho}_n^R \right)V_{AB},
\]
and therefore $\rho_{AB}^{1/2}\rho_B^{-1/2}\in \mathcal A$, such that $W_{AB}\in \mathcal A$ by properties of the polar decomposition. Since $I_A\otimes \eta_{BC}\in \mathcal A'\otimes \mathcal{B}(\mathcal H_C)$, this implies (iv). 

\vspace{0.1cm}

\noindent \underline{$\text{(iv)} \Leftrightarrow \text{(v)} .$} Note that since $\rho_{ABC}$ is a BS-QMC, we have by  \cref{theo:StructureBSDPI} (iii) that
\[
\rho_{AB}^{-1/2}\rho_{ABC}\rho_{AB}^{-1/2}=d_B^2\rho_{AB}^{-1/2}\rho_B^{1/2}\eta_{AB}\eta_{BC}\rho_B^{1/2}\rho_{AB}^{-1/2}= d_BW_{AB}\eta_{BC}W^*_{AB},
\]
where we have used the polar decomposition to obtain $d_B^{1/2}\rho_{AB}^{-1/2}\rho_B^{1/2}=W_{AB}\eta_{AB}^{-1/2}$. It is now clear that (iv) is equivalent to (v). 

\vspace{0.1cm}

\noindent \underline{$\text{(iv)} \Rightarrow \text{(i)} .$} Finally, we have 
\[
\rho_{ABC}\hspace{-2pt}=\hspace{-2pt}(\Phi_{B\to AB}\otimes \id_C)(\rho_{BC})\hspace{-2pt}=\hspace{-2pt}d_B\rho_{AB}^{1/2}W_{AB}\eta_{BC}W_{AB}^*\rho_{AB}^{1/2}=d_B\rho_{AB}^{1/2}\eta_{BC}\rho_{AB}^{1/2}\hspace{-2pt}=\hspace{-2pt}\rho_{AB}^{1/2}\rho_B^{-1/2}\hspace{-3pt}\rho_{BC}\rho_B^{-1/2} \hspace{-3pt}\rho_{AB}^{1/2} \, ,
\]
which shows that (iv) implies (i) by \eqref{eq:PetzCondition}.
\end{proof}

Another important difference between QMCs and BS-QMCs is that while for QMCs $\rho_{AC}$ is always separable, for BS-QMCs this might be entangled. The next result provides us with examples of BS-QMCs such that $\rho_{AC}$ is NPT (non-positive partial transpose) entangled.

\begin{proposition}\label{prop:EntangledBSQMC}
Let  $\cH_B=\cH_{B_L} \otimes \cH_{B_R}$,  and assume $\cH_A=\cH_{B_L}=\cH_{B_R}=\cH_C=\mathbb{C}^d$. Consider
\begin{equation}
    \eta_{AB_L}=\frac{1}{d^2+d\alpha }(I_{AB_L}+\alpha F_{AB_L}) \, , \quad \eta_{B_RC}=\frac{1}{d^2+d\alpha }(I_{B_RC}+\alpha F_{B_RC}) \, ,
\end{equation}
Werner states with $\alpha \in [-1,1]$ and where $F_{XY}$ denotes the swap operator between the systems $X$ and $Y$. Consider $\vert \Phi \rangle=\frac{1}{\sqrt{2}}(\vert u_1 \otimes u_1 \rangle+\vert u_2 \otimes u_2 \rangle) \in \cH_B$, where  $u_1, u_2 \in \mathbb{R}^d$ are orthogonal and normalized.  If we define $X_B=I_A \otimes \vert \Phi \rangle \langle \Phi \vert \otimes I_C $, then $\rho_{ABC}=d^2X_B(\eta_{AB_L}\otimes\eta_{B_RC})X_B$ is a BS-QMC and  $\rho_{AC}$ has negative partial transpose  if and only if $\alpha \in [-1, 1-\sqrt{3})$. In particular, for $\alpha < 1-\sqrt{3}$, $\rho_{AC}$ is entangled, implying that $\rho_{ABC}$ is not a QMC.
\end{proposition}
\begin{proof}
    Let $\alpha \in [-1,1]$ and consider the tensor product of Werner states $\eta_{ABC}=\eta_{AB_L}\otimes \eta_{B_RC}$, which is a QMC by \eqref{eq:StructuralQMC} and further satisfies
\begin{equation}
\eta_B=\tr_{AC} [\eta_{ABC}]=\tr_{A}[\eta_{AB_L}]\tr_{C}[\eta_{B_RC}]=\frac{1}{d^2}I_B.
\end{equation}
Consider now $\vert \Phi \rangle=\frac{1}{\sqrt{2}}(\vert u_1 \otimes u_1 \rangle+\vert u_2 \otimes u_2 \rangle) \in \cH_B$ and define $X_B=I_A \otimes \vert \Phi \rangle \langle \Phi \vert \otimes I_C $, which satisfies $X_B^{1/2}=X_B$. Putting it all together we can define $\rho_{ABC}=d^2 X_B\eta_{ABC}X_B$ which is a  BS-QMC by \cref{remark:ConverseEta}. Now, if we write 
\begin{equation}
    \vert \Phi \rangle \langle \Phi \vert =\frac{1}{2}\sum_{i,j \in \{1,2\}} \vert u_i \rangle \langle u_j \vert \otimes  \vert u_i \rangle \langle u_j \vert
\end{equation}
we can decompose \begin{equation}\rho_{ABC}=\frac{d^2}{4(d^2+\alpha d)^2}\sum_{i,j,k,l \in \{1,2\} }\rho_{AB_L}^{(ijkl)}\otimes \rho_{B_RC}^{(ijkl)} \, 
 ,\end{equation} with
\begin{equation}
    \rho_{AB_L}^{(ijkl)}=\left(I_{A} \otimes \vert u_i \rangle \langle u_j\vert\right)\left( I_{AB_L}+\alpha F_{AB_L}\right) \left(I_{A} \otimes \vert u_k \rangle \langle u_l\vert\right) \, ,
\end{equation}
and
\begin{equation}
    \rho_{B_RC}^{(ijkl)}=\left( \vert u_i \rangle \langle u_j\vert \otimes I_C \right)\left( I_{B_RC}+\alpha F_{B_RC}\right) \left( \vert u_k \rangle \langle u_l\vert \otimes I_C\right) \, .
\end{equation}
Consequently,
\begin{equation}
    \rho_{AC}=\frac{1}{4(d+\alpha )^2}\sum_{i,j,k,l \in \{1,2\} }\tr_{B_L}[\rho_{AB_L}^{(ijkl)}]\otimes \tr_{B_R}[\rho_{B_RC}^{(ijkl)}].
\end{equation}
From the partial trace with respect to the system $\cH_{B_L}$ we obtain the marginal
\begin{equation}
\begin{split}
    \rho_A^{(ijkl)}&=\tr_{B_L}[\rho_{AB_L}^{(ijkl)}]\\
    &=\tr_{B_L}[I_A\otimes \vert u_i  \rangle \langle u_j \vert \vert u_k \rangle \langle u_l \vert] +\alpha\tr_{B_L}[ \underbrace{ \left(I_{A} \otimes \vert u_i \rangle \langle u_j\vert\right) F_{AB_L}\left(I_{A} \otimes \vert u_k \rangle \langle u_l\vert\right)}_{\vert u_k \otimes u_i \rangle \langle u_j \otimes u_l \vert }] \\
    &=\delta_{il}\left(\delta_{jk} I_A+\alpha \vert u_k \rangle \langle u_j \vert \right) 
\end{split}
\end{equation}
and analogously $\rho_C^{(ijkl)}=\delta_{il}\left(\delta_{jk} I_C+\alpha \vert u_k \rangle \langle u_j \vert \right) $. The marginal $\rho_{AC}$ can now be expressed as
\begin{equation}
    \rho_{AC}=\frac{2}{4(d+\alpha )^2}\left[2I_{AC} + \sum_{j \in \{1,2\}}\alpha \left( I_A\otimes \vert u_j \rangle \langle u_j \vert+\vert u_j \rangle \langle u_j \vert \otimes I_C \right)+\alpha^2\sum_{j,k \in \{1,2\}}  \vert u_k \rangle \langle u_j \vert \otimes \vert u_k \rangle \langle u_j \vert \right] \, .
\end{equation}

Now, the partial transpose over the first system $T_A$ is a map of the form $T\otimes I_C$, where $T$ is the usual transposition. With this 
\begin{equation}
    \rho_{AC}^{T_A}=\frac{1}{2(d+\alpha )^2}\left[2I_{AC} + \sum_{j \in \{1,2\}}\alpha \left( I_A\otimes \vert u_j \rangle \langle u_j \vert+\vert u_j \rangle \langle u_j \vert \otimes I_C \right)+\alpha^2\sum_{j,k \in \{1,2\}}  \vert u_j \rangle \langle u_k \vert \otimes \vert u_k \rangle \langle u_j \vert \right] \, .
\end{equation}
To check that $\rho_{AC}$ has negative partial transpose, we need to find a vector $v\in \cH_A \otimes \cH_C$ such that $\langle v, \rho_{AC}^{T_A} \, v \rangle<0$. The most general vector $v$ that contributes with the linear and quadratic terms on $\alpha$ is of the form
\begin{equation}
    v=\gamma_1 u_1 \otimes u_1+\gamma_2 u_1 \otimes u_2+ \gamma_3 u_2 \otimes u_1+ \gamma_4 u_2 \otimes u_2
\end{equation}
with $\gamma=(\gamma_1,\gamma_2,\gamma_3,\gamma_4) \in \mathbb{C}^4$ satisfying $\Vert \gamma \Vert_2^2=1$, i.e. we assume that $v$ is normalized. Now we compute the expectation for the vector $v$ of each one of the terms of $\rho_{AC}^{T_A}$ and obtain:
\begin{subequations}
\begin{equation}
 \langle v,I_{AC} \, v \rangle=\Vert v \Vert_2^2=1 \, ,
\end{equation}
\begin{equation}
\sum_{j \in \{1,2\}} \langle v, (I_A\otimes \vert u_j \rangle \langle u_j \vert) v \rangle=\Vert v \Vert_2^2 \, ,
\end{equation}
\begin{equation}
\sum_{j \in \{1,2\}} \langle v, ( \vert u_j \rangle \langle u_j \vert \otimes I_C) v \rangle=\Vert v \Vert_2^2 \, ,
\end{equation}
\begin{equation}
\sum_{j,k \in \{1,2\}} \langle v, (\vert u_j \rangle \langle u_k \vert \otimes \vert u_k \rangle \langle u_j \vert) v \rangle=\vert\gamma_1\vert^2+\vert \gamma_4\vert^2+2\text{Re}[\overline{\gamma_2}\gamma_3] \, .
\end{equation}
\end{subequations}
Plugging in these values, we can now write
\begin{equation}
2(d+\alpha)^2\langle v, \rho_{AC}^{T_A} \, v\rangle= 2(1+\alpha)+\alpha^2(\vert\gamma_1\vert^2+\vert\gamma_4\vert^2+2\text{Re}[\overline{\gamma_2}\gamma_3]) \, .
\end{equation}
The minimum of the right-hand side in terms of $\gamma=(\gamma_1,\gamma_2,\gamma_3,\gamma_4)$ is clearly achieved whenever $\gamma_1=\gamma_4=0$ and Re$[\overline{\gamma_2}\gamma_3]=-1/2$ i.e. for the values $(\gamma_2,\gamma_3)=(\pm\frac{1}{\sqrt{2}},\mp\frac{1}{\sqrt{2}})$ or $(\gamma_2,\gamma_3)=(\pm\frac{i}{\sqrt{2}},\mp\frac{i}{\sqrt{2}})$,  giving both cases the same result for the expectation value:
\begin{equation}
    2(d+\alpha)^2\langle v, \rho_{AC}^{T_A} \, v\rangle=-\alpha^2+2\alpha+2=-(\alpha-(1-\sqrt{3}))(\alpha-(1+\sqrt{3})),
\end{equation}
which is then negative for the values $\alpha \in [-1,1-\sqrt{3})$. The result follows then by the fact that negative partial transpose implies entanglement.
\end{proof}
This last result provides BS-QMCs which have marginal in $AC$ entangled and therefore cannot be  QMCs.  However, not every BS-QMC $\rho_{ABC}$ that is not a QMC satisfies that $\rho_{AC}$ is entangled as illustrated by \cref{example:BSqmcNotQmc}, where it is separable. 
These examples therefore illustrate the fundamental difference of the BS-CMI being zero and the CMI beging zero, because the set where the  BS-CMI vanishes contains also states where $\rho_{AC}$ is NPT entangled, whereas states for which the CMI vanishes always have $\rho_{AC}$  separable.

\begin{remark}
Another remarkable feature of the correspondence between BS-QMCs and QMCs provided by the identification by $\eta$ arises from \Cref{prop:EntangledBSQMC}. Note that we have constructed an example of a family of states $\rho_{ABC}$ which are BS-QMCs and for which $\rho_{AC}$ is entangled. However, if we define now the corresponding $\eta_{ABC}$ associated to $\rho_{ABC}$, it is a QMC by \Cref{theo:StructureBSDPI}, and therefore $\eta_{AC}$ is separable. Therefore, the construction provided by $\eta$ is entanglement-breaking  between $A$ and $C$  when restricted to set of BS-QMCs. Conversely, the correspondence given by $\omega(X)$  in Remark \ref{remark:ConverseEta} can create entanglement between $A$ and $C$ when restricted to the set of QMCs for some quantum states $X_B$. 
\end{remark}

\subsection{Approximate BS-quantum Markov chains and quantum Markov chains}

 Theorem \ref{theo:StructureBSDPI} provides  an exact identification between QMCs and BS-QMCs. A natural question is then whether an equivalence between approximate versions of these notions holds as well. We say that $\rho_{ABC}$ is an \textit{$\varepsilon$-approximate QMC} if
\begin{equation}
    I_\rho(A:C|B) \leq \varepsilon \, ,
\end{equation}
and analogously $\rho_{ABC}$ is an \textit{$\varepsilon$-approximate BS-QMC} if 
\begin{equation}
    \widehat{I}^{\mathrm{rev}}_\rho(A;C|B) \leq \varepsilon \, . 
\end{equation}

To explore the connection between approximate QMCs and approximate BS-QMCs, we first provide a general lower bound for the reversed BS-CMI of $\rho_{ABC}$ in terms of the CMI of $\eta_{ABC}$, and conversely under some constraints.
\begin{proposition}\label{prop:relation_Irev_I}
 There exists a positive non-zero function $f$ such that for any finite-dimensional $\mathcal{H}_{ABC}$ and any invertible $\rho_{ABC} \in \cS(\cH_{ABC})$ with associated  $\eta_{ABC}$, we have 
\begin{equation}
    \widehat{I}^{\mathrm{rev}}_{\rho}(A;C | B)\geq f(\rho,d_A,d_B, d_C)I_{\eta}(A:C\vert B)^8 \, .
\end{equation}
Conversely, there exist  positive  non-zero functions $g$ and $h$ such that, in the conditions above, if $[\eta_{AB},\eta_{BC}]=0$, then 
\begin{equation}\label{ineq:BoundReversedCMI1}
    \widehat{I}^{\mathrm{rev}}_{\rho}(A;C | B)\leq g(\rho,d_A,d_B, d_C)I_{\eta}(A:C\vert B)^{1/4}\, ,
\end{equation}
and 
\begin{equation}\label{ineq:BoundReversedCMI2}
    \widehat{I}^{\mathrm{rev}}_{\rho}(A;C | B)\leq h(\rho,d_A,d_B, d_C)I_{\eta}(A:C\vert B)^{1/2}\, .
\end{equation}

\end{proposition}
\begin{proof}
To simplify notation hereafter, let us denote $\mathcal{T}(\sigma_{AB}) \equiv (\id_A \otimes \mathcal{T}_{B \rightarrow BC} )(\sigma_{AB})$ for $\mathcal{T} \in \{ \mathcal{B}, \mathcal{P}, \Phi\}$ and $\sigma_{ABC}\in \{ \rho_{ABC}, \eta_{ABC}\}$ depending on each case. Note that $\mathcal{B}, \mathcal{P}, \Phi$ are defined with respect to different states, those on whose marginal they are evaluated throughout the proof.
The trace norm between $\eta_{ABC}$ and its recovery channel can be expressed in terms of $\rho_{ABC}$ and $\Phi$ as follows:
    \begin{equation}
            \Vert \eta_{ABC}-\mathcal{P}(\eta_{AB})\Vert_1 
             = \frac{1}{d_B}\Vert \rho_B^{-1/2}\rho_{ABC}\rho_B^{-1/2} -\rho_B^{-1/2}\Phi(\rho_{AB})\rho_B^{-1/2}\Vert_1 \, .
    \end{equation}
    Consider the completely positive and trace non-increasing multiplication map \begin{equation}
    \mathcal{M}_{\rho_B}(X):=\frac{1}{\Vert \rho_B^{-1}\Vert_{\infty}}\rho_B^{-1/2}X \rho_B^{-1/2},
    \end{equation}
    for $X\geq 0$. $\mathcal{M}_{\rho_B}$ satisfies then the DPI for the trace-distance and as a consequence,
\begin{equation}
        \Vert \rho_{ABC}-\Phi(\rho_{AB})\Vert_1\geq  \frac{d_B}{\Vert \rho_B^{-1}\Vert_{\infty}}\Vert \eta_{ABC}-\mathcal{P}(\eta_{AB})\Vert_1.
    \end{equation}
    To conclude this part, we make use now of 
    \begin{equation}
        I_{\eta}(A:C\vert B)\leq 2 (\log\min\{d_A,d_C\}+1)\Vert \eta_{ABC}-\mathcal{P}(\eta_{AB})\Vert_1^{1/2} \, ,
    \end{equation}
    which can be found in \cite{bluhm2023general}, and together with Eq.\ \eqref{ineq:LowerBoundReversedCMI}, we  conclude
    \begin{equation}
    \widehat{I}^{\mathrm{rev}}_{\rho}(A;C | B)\geq  2^{-8}(\log\min\{d_A,d_C\}+1)^{-8}\left( \frac{ d_B \pi}{8\Vert \rho_B^{-1}\Vert_{\infty}} \right)^4 \|\rho_{BC}^{-1/2}\rho_{ABC}\rho_{BC}^{-1/2} \|_\infty^{-2}I_{\eta}(A:C\vert B)^8 \, ,
\end{equation}
   which is well-defined as $\rho_{BC}^{-1/2}\rho_{ABC}\rho_{BC}^{-1/2}\neq 0$, since $\tr(\rho_{BC}^{-1/2}\rho_{ABC}\rho_{BC}^{-1/2})=d_Bd_C \neq 0$, and also non-zero.

We prove now the converse. Firstly, using part of the proof of \cite[Theorem 3.6]{gondolf2024conditional}
\begin{equation}
    \begin{split}
        &\widehat{I}^{\mathrm{rev}}_{\rho}(A;C | B) \\ & \hspace{1cm}  \leq \Vert \rho_{BC}^{-1/2}\rho_{ABC}\rho_{BC}^{-1/2}\Vert_{\infty}(\Vert \rho_B^{-1}\Vert_{\infty} \Vert \rho_B \Vert_{\infty})^{1/2}\Vert \rho_{ABC}^{-1}\rho_{BC}\Vert_{\infty}\Vert \rho_{ABC}\rho_{AB}^{-1}\rho_B \rho_{BC}^{-1}- I \Vert_{\infty}\\
        &\hspace{1cm} =\Vert \rho_{BC}^{-1/2}\rho_{ABC}\rho_{BC}^{-1/2}\Vert_{\infty}(\Vert \rho_B^{-1}\Vert_{\infty} \Vert \rho_B \Vert_{\infty})^{1/2}\Vert \rho_{ABC}^{-1}\rho_{BC}\Vert_{\infty}\Vert \rho_{ABC}\, \mathcal{B}(\rho_{AB})^{-1}- I \Vert_{\infty}\\
         & \hspace{1cm}  \leq  \Vert \rho_{BC}^{-1/2}\rho_{ABC}\rho_{BC}^{-1/2}\Vert_{\infty}(\Vert \rho_B^{-1}\Vert_{\infty} \Vert \rho_B \Vert_{\infty})^{1/2}\Vert \rho_{ABC}^{-1}\rho_{BC}\Vert_{\infty} \Vert \mathcal{B}(\rho_{AB})^{-1} \Vert_{\infty}   \Vert \rho_{ABC}-\mathcal{B}(\rho_{AB})\Vert_{\infty}\\
         & \hspace{1cm}  \leq \Vert \rho_{BC}^{-1/2}\rho_{ABC}\rho_{BC}^{-1/2}\Vert_{\infty}(\Vert \rho_B^{-1}\Vert_{\infty} \Vert \rho_B \Vert_{\infty})^{1/2}\Vert \rho_{ABC}^{-1}\rho_{BC}\Vert_{\infty} \Vert \mathcal{B}(\rho_{BC})^{-1} \Vert_{\infty}   \Vert \rho_{ABC}-\mathcal{B}(\rho_{AB})\Vert_{1}\\& \hspace{1cm}=: g_1(\rho,d_A,d_B,d_C)\Vert \rho_{ABC}-\mathcal{B}(\rho_{AB})\Vert_{1}\, . 
    \end{split}
\end{equation}
Consider now the operator norm for the channel
    \begin{equation}
        \Vert \mathcal{M}_{\rho_B}\Vert=\sup_{\Vert Y \Vert_2 \leq 1}\Vert \mathcal{M}_{\rho_B} Y\Vert_2,
    \end{equation}
    where $\Vert \hspace{2pt} \cdot \hspace{2pt}\Vert_2$ denotes the Frobenius norm. Since $\mathcal{M}_{\rho_B}$ is invertible, 
    \begin{equation}
        \Vert \mathcal{M}_{\rho_B} Y\Vert_1\geq \Vert \mathcal{M}_{\rho_B} Y\Vert_2\geq \Vert \mathcal{M}_{\rho_B}^{-1}\Vert^{-1} \Vert Y \Vert_2\geq     \frac{1}{\sqrt{d_{ABC}}}\Vert \mathcal{M}_{\rho_B}^{-1}\Vert^{-1} \Vert Y \Vert_1  \, .
    \end{equation} 
By letting $Y=\rho_{ABC}-\Phi(\rho_{AB})$,
\begin{equation}
    \Vert \eta_{ABC}-\mathcal{P}(\eta_{AB})\Vert_1 \geq \frac{1}{d_B\sqrt{d_{ABC}}} \Vert \rho_{ABC}-\Phi(\rho_{AB})\Vert_1  \, ,
\end{equation}
since $\|\mathcal M^{-1}_{\rho_B}\| \leq \|\rho_B^{-1}\|_\infty$.

 To obtain inequality Eq.\ \eqref{ineq:BoundReversedCMI1}, we use Eq.\  \eqref{eq:carlen-vershynina} from \cite{CarlenVershynina-Stability-DPI-RE-2017} and obtain
\begin{equation}
    \begin{split}
        I_{\eta}(A:C\vert B)&\geq \left(\frac{\pi}{8}\right)^4\Vert \eta_{ABC}^{-1}\Vert_{\infty}^{-2} \Vert (\tau_A \otimes \eta_B \otimes \tau_C)^{-1}\Vert_{\infty}^{-2}\Vert \eta_{ABC}-\mathcal{P}(\eta_{AB})\Vert_1^4\\
        &\geq \left(\frac{\pi}{8}\right)^4\Vert \eta_{ABC}^{-1}\Vert_{\infty}^{-2}\left( \frac{1}{d_A d_B d_C}
 \right)^2\left(\frac{1}{ d_B\sqrt{d_Ad_Bd_C}}\right)^4\Vert  \rho_{ABC}-\Phi(\rho_{AB})\Vert_1^4\\
        &=\left(\frac{\pi}{8}\right)^4\Vert \eta_{ABC}^{-1}\Vert_{\infty}^{-2}\left(\frac{1}{d_Ad_B^{2}d_C} \right)^4\Vert  \rho_{ABC}-\mathcal{B}(\rho_{AB})\Vert_1^4 \\
        &=: g_2(\rho, d_A,d_B,d_C)\Vert  \rho_{ABC}-\mathcal{B}(\rho_{AB})\Vert_1^4 \, ,
    \end{split}
\end{equation}
since $[\eta_{AB},\eta_{BC}]=0$ (see Eq.\ \eqref{eq:FromBtoPhi}).
Combining the bounds for the CMI,
\begin{equation}
  \widehat{I}^{\mathrm{rev}}_{\rho}(A;C | B)\leq g_1(\rho,d_A,d_B,d_C)g_2(\rho,d_A,d_B,d_C)^{-1/4}I_\eta(A:C | B)^{1/4}
\end{equation}
and Eq.\ \eqref{ineq:BoundReversedCMI1} holds by letting $g=g_2^{-1/4}g_1$, which is again well defined and non-zero following the same arguments as before.

For inequality Eq.\  \eqref{ineq:BoundReversedCMI2}, we consider the following rotated Petz recovery map for $\eta_{ABC}$,
\begin{subequations}
\begin{align}
    \mathcal{R}^t_{B \to BC}(X_B)&:=\eta_{BC}^{-it} \, \mathcal{P}_{B \to BC}(\eta_B^{it} \, X_B \, \eta_B^{-it})\, \eta_{BC}^{it}\\
    &=\eta_{BC}^{\frac{1}{2}-it}\eta_B^{-\frac{1}{2}+it}X_B \eta_B^{-\frac{1}{2}-it}\eta_{BC}^{\frac{1}{2}+it}\\
    &=d_B \, \eta_{BC}^{\frac{1}{2}-it} X_B\eta_{BC}^{\frac{1}{2}+it}\, ,
\end{align}
\end{subequations}
since $\eta_B=\tau_B$. Applying the map  on $\eta_{AB}$ we obtain 
\begin{equation}
    (\id_A \otimes \mathcal{R}^t_{B \to BC} )(\eta_{AB})=d_B \, \eta_{BC} \, \eta_{AB}=(\id_A \otimes \mathcal{P}_{B \to BC} )(\eta_{AB})\, ,
\end{equation}
which is independent of $t$, since $[\eta_{AB},\eta_{BC}]=0$. Consequently, using \cite[Theorem 4]{wilde-2015}, the inequality $\log(x)\leq x-1$ for $x>0$, the Fuchs-van de Graaf inequality and the previous arguments
\begin{subequations}
\begin{align}
    I_{\eta}(A:C\vert B)&\geq -\log \left(\sup_{t \in \mathbb R}F(\eta_{ABC},\mathcal{R}^t(\eta_{AB}))\right)\\
    &= -\log F(\eta_{ABC},\mathcal{P}(\eta_{AB}))\\
    &\geq  1- F(\eta_{ABC},\mathcal{P}(\eta_{AB}))\\
    & \geq \frac{1}{4}\Vert \eta_{ABC}-\mathcal{P}(\eta_{AB})\Vert_1^2\\
    &\geq \frac{1}{4d_B^2 d_{ABC}} \Vert \rho_{ABC}-\mathcal{B}(\rho_{AB})\Vert_1^2 \, .
\end{align}
\end{subequations}
The result follows then by letting $h=2d_B\sqrt{d_{ABC}}g_1$.
 \end{proof}

Consequently, if $\rho_{ABC}$ is an invertible approximate BS-QMC, then $\eta_{ABC}$ is an approximate QMC. For the second part of \Cref{prop:relation_Irev_I}, notice that, even though we are assuming the constraints $[\eta_{AB},\eta_{BC}]=0$ and $\eta_B=\tau_B$, this does not necessarily imply  that $\eta_{ABC}$ is a QMC. Moreover, by taking the quotient of the right-hand side of inequalities Eq.\ \eqref{ineq:BoundReversedCMI1} and Eq.\ \eqref{ineq:BoundReversedCMI2}, we obtain that the first bound is tighter than the second whenever 
\begin{equation}
    I_{\eta}(A:C\vert B)\geq \left(\frac{4}{\pi}\right)^4(d_{ABC})^2\Vert \eta_{ABC}^{-1}\Vert_{\infty}^2 \, .
\end{equation}

 With the aim of providing the second part of \Cref{prop:relation_Irev_I} without the further assumption $[\eta_{AB}, \eta_{BC}]=0$, a first step can be an upper bound for the reversed CMI of $\rho_{ABC}$ in terms of the distance from $\rho_{ABC}$ to one of its recoveries. In \cite[Theorems 3.2 and 3.6]{gondolf2024conditional}, such a result is proven in terms of $\mathcal{B}(\rho_{BC})$. It is desirable though to construct another bound as a distance from $\rho_{ABC}$ to a symmetric recovery. For that, we need to introduce the rotated version of $\Phi_{B\to AB}$, namely
\begin{equation}
    \Phi_{B\to AB}^{\mathrm{rot}} (X) := \int_{-\infty}^{+\infty} dt \beta_0(t) \rho_B^{\frac{1-it}{2}}(\rho_B^{-1/2}\rho_{AB}\rho_B^{-1/2})^{\frac{1-it}{2}}\rho_B^{\frac{-1+it}{2}}X \, \rho_B^{\frac{-1-it}{2}}(\rho_B^{-1/2}\rho_{AB}\rho_B^{-1/2})^{\frac{1+it}{2}}\rho_B^{\frac{1+it}{2}} \, , 
\end{equation}
with $\beta_0(t)= \frac{\pi}{2( \cosh (\pi t) +1)}$. We leave as an open question whether, whenever $\rho_{ABC}$ is an exact BS-QMC, 
\begin{equation}\label{eq:fixed_points_Phirot}
    (\Phi_{B\to AB}^{\mathrm{rot}}\otimes \id_C) (\rho_{BC}) = \rho_{ABC} \, ,
\end{equation}
and thus
$\Phi_{B\to AB}^{\mathrm{rot}}$ is another recovery condition for BS-QMCs. The following result is an immediate consequence of the multivariate trace inequalities of Sutter et al. \cite{Sutter2017b}, and in particular shows that whenever Eq.\ \eqref{eq:fixed_points_Phirot} holds, $\rho_{ABC}$ is a BS-QMC.

\begin{proposition} \label{prop:appox-rotated2}
 Let $\rho_{ABC}$ be a positive-definite quantum state and let $\eta_{ABC}= \frac{1}{d_B} \rho_B^{-1/2} \rho_{ABC}\rho_B^{-1/2} $. Then,
 \begin{equation}
      \widehat{I}^{\mathrm{rev}}_{\rho}(A;C | B) \leq \frac{1}{d_C} \norm{\rho_{ABC}^{-1/2} \, \rho_{AB}^{1/2}}_{\infty}^2 \norm{   ( \id_A \otimes \Phi_{B\to BC}^{\mathrm{rot}} )(\rho_{AB})  -\rho_{ABC}}_1\, .
 \end{equation}
\end{proposition}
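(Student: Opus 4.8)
The plan is to reduce the statement to a bound on a difference of two BS-entropies with a common reference state, and to obtain that bound directly from the multivariate trace inequalities of Sutter, Berta and Tomamichel \cite{Sutter2017b}.

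First I would unfold the definition and write $\widehat{I}^{\mathrm{rev}}_{\rho}(A;C|B) = \widehat{D}(\rho_{AB}\otimes\tau_C \,\|\, \rho_{ABC}) - \widehat{D}(\rho_B\otimes\tau_C \,\|\, \rho_{BC})$, noting that $\tr_C$ sends $\rho_{AB}\otimes\tau_C\mapsto\rho_{AB}$, $\rho_{ABC}\mapsto\rho_{AB}$ and $\rho_B\otimes\tau_C\mapsto\rho_B$, $\rho_{BC}\mapsto\rho_B$, so that each term is the data-processing gap of the BS-entropy under the partial trace over $C$, and the recovery map that will appear is one reconstructing the $C$-system from the $AB$-marginal, i.e.\ $\id_A\otimes\Phi^{\mathrm{rot}}_{B\to BC}$ applied to $\rho_{AB}=\tr_C\rho_{ABC}$, exactly as on the right-hand side of the claim. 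Using $\widehat{D}(\xi\|\zeta)=\tr[\xi\log(\xi^{1/2}\zeta^{-1}\xi^{1/2})]$ I would spell out both terms as traces of $\rho_{AB}\otimes\tau_C$, resp.\ $\rho_B\otimes\tau_C$, against logarithms of $(\rho_{AB}\otimes\tau_C)^{1/2}\rho_{ABC}^{-1}(\rho_{AB}\otimes\tau_C)^{1/2}$ and $(\rho_B\otimes\tau_C)^{1/2}\rho_{BC}^{-1}(\rho_B\otimes\tau_C)^{1/2}$; the $\tau_C$-factors contribute only a constant $\log d_C$, which cancels in the difference, leaving an expression built entirely from $\log\rho_{ABC}$, $\log\rho_{AB}$, $\log\rho_{B}$ and $\log\rho_{BC}$, all regarded on $\mathcal H_{ABC}$.

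The main step is then to apply the Sutter--Berta--Tomamichel multivariate trace inequality to this several-operator logarithmic expression. This is precisely the situation producing the weight $\beta_0(t)=\tfrac{\pi}{2}(\cosh(\pi t)+1)^{-1}$ together with the factors of the form $\rho_B^{\frac{1-it}{2}}\bigl(\rho_B^{-1/2}\rho_{BC}\rho_B^{-1/2}\bigr)^{\frac{1-it}{2}}\rho_B^{\frac{-1+it}{2}}$ (and the mirrored version on the right) that occur in the definition of $\Phi^{\mathrm{rot}}_{B\to BC}$; carrying out the associated Golden--Thompson-type estimate in trace norm bounds the difference of BS-entropies by a fidelity, and hence by a trace distance, between $\rho_{ABC}$ and $(\id_A\otimes\Phi^{\mathrm{rot}}_{B\to BC})(\rho_{AB})$. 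I would close using $-\log F\ge 1-F$ and the Fuchs--van de Graaf inequality, joint concavity of the fidelity to replace the $\beta_0$-average of fidelities by the fidelity of the averaged map, and a H\"older estimate to pull out the prefactor; here the prefactor that emerges is $\norm{\rho_{ABC}^{-1/2}(\rho_{AB}\otimes\tau_C)\rho_{ABC}^{-1/2}}_\infty = \tfrac1{d_C}\norm{\rho_{ABC}^{-1/2}\rho_{AB}^{1/2}}_\infty^2$, multiplying $\norm{(\id_A\otimes\Phi^{\mathrm{rot}}_{B\to BC})(\rho_{AB})-\rho_{ABC}}_1$, which is the asserted bound. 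The final remark about \eqref{eq:fixed_points_Phirot} is then immediate: if that identity holds the right-hand side vanishes, so $\widehat{I}^{\mathrm{rev}}_{\rho}(A;C|B)\le 0$, and non-negativity forces equality with $0$.

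I expect the genuine obstacle to be the bookkeeping in the middle step: verifying that the recovery channel produced by the multivariate trace inequality is exactly $\id_A\otimes\Phi^{\mathrm{rot}}_{B\to BC}$ --- with the correct exponents $\tfrac{1\pm it}{2}$, the correct placement of the $\rho_B^{\pm1/2}$ conjugations, and the right normalisations in $d_B$ and $d_C$ --- and that the error it yields is a genuine trace distance rather than an expression carrying an additional (logarithmic) continuity loss. Everything else is a routine chain of standard operator inequalities; as a fallback one could replace $\Phi^{\mathrm{rot}}_{B\to BC}$ by the non-rotated $\Phi_{B\to BC}$ throughout (the $t=0$ slice of the same estimate), obtaining a weaker version of the inequality and regarding the invocation of \cite{Sutter2017b} purely as the upgrade $\Phi_{B\to BC}\rightsquigarrow\Phi^{\mathrm{rot}}_{B\to BC}$.
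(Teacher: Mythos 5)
Your overall skeleton matches the paper's proof: unfold $\widehat{I}^{\mathrm{rev}}_{\rho}(A;C|B)$ into a single multi-operator logarithmic expression, feed it to the multivariate trace inequality of \cite{Sutter2017b} to produce the weight $\beta_0$ and the rotated map $\id_A\otimes\Phi^{\mathrm{rot}}_{B\to BC}$ acting on $\rho_{AB}$, and finish with a H\"older estimate whose prefactor you identify correctly as $\frac{1}{d_C}\norm{\rho_{ABC}^{-1/2}\rho_{AB}^{1/2}}_\infty^2$. The genuine gap is the middle step. You claim the Golden--Thompson-type estimate ``bounds the difference of BS-entropies by a fidelity, and hence by a trace distance'', to be closed with $-\log F\ge 1-F$, Fuchs--van de Graaf and joint concavity of fidelity. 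That chain runs in the wrong direction: those inequalities \emph{lower}-bound a relative-entropy-type quantity by a trace distance (they are the tools used in the converse direction, cf.\ the proof of Proposition \ref{prop:relation_Irev_I}), whereas here an \emph{upper} bound on $\widehat{I}^{\mathrm{rev}}_{\rho}$ is required; an upper bound of the form $-\log F$ cannot be converted into a linear trace-norm bound via $-\log F\ge 1-F$. Moreover, after the trace inequality the leading operator in the trace is $\rho_{AB}^{1/2}\rho_{ABC}^{-1}\rho_{AB}^{1/2}$, which is not a state, so no fidelity appears in this setting at all.

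The paper's actual mechanism at that point is: observe that $\widehat{I}^{\mathrm{rev}}_{\rho}(A;C|B)=-D(\rho_{AB}\otimes\tau_C\,\|\,\Omega)$ for an explicit $\Omega=\exp\{\textstyle\sum_k\log H_k\}$, so that non-negativity of the normalized relative entropy gives $\widehat{I}^{\mathrm{rev}}_{\rho}(A;C|B)\le\log\tr[\Omega]$; apply the multivariate trace inequality as an upper bound on the trace exponential $\tr[\Omega]$ (this is where $\beta_0$ and the rotated powers enter); add and subtract $\rho_{AB}\otimes\tau_C$ inside the trace and use $\log(1+x)\le x$ to remove the logarithm; then conclude by H\"older, using that $\tr\bigl[(\rho_{AB}\otimes\tau_C)^{1/2}\rho_{ABC}^{-1}(\rho_{AB}\otimes\tau_C)^{1/2}\,\rho_{ABC}\bigr]=1$. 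No fidelity, Fuchs--van de Graaf, or concavity argument occurs. Your proposed fallback of taking ``the $t=0$ slice'' to get a non-rotated version is also not available: the trace inequality holds only after averaging over $t$, and whether $\Phi^{\mathrm{rot}}_{B\to BC}$ can be replaced by $\Phi_{B\to BC}$ is precisely the open question the paper states alongside Eq.~\eqref{eq:fixed_points_Phirot}.
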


\begin{proof}

We drop the factors $\tau_C$ whenever they are unnecessary and the dimensional factors cancel out to simplify notation. We first rewrite $\widehat{I}^{\mathrm{rev}}_{\rho}(A;C | B) $ as 
\begin{align}
    \widehat{I}^{\mathrm{rev}}_{\rho}(A;C | B) & =  \widehat{D}(\rho_{AB} \otimes \tau_C  \| \rho_{ABC}) - \widehat{D}(\rho_{B}\otimes \tau_C  \| \rho_{BC}) \\
    &  = \tr \left[ \rho_{AB} \otimes \tau_C \left( \log (\rho_{AB}^{1/2}   \, \rho_{ABC}^{-1} \, \rho_{AB}^{1/2}   ) - \log (\rho_{B}^{1/2}   \, \rho_{BC}^{-1} \, \rho_{B}^{1/2}   ) \right. \right. \\
    & \hspace{3cm} \left. \left. -\log (\rho_{AB} \otimes \tau_C  ) + \log (\rho_{AB} \otimes \tau_C) - \log \rho_B + \log \rho_B    \right) \right] \\
    & = -  D(\rho_{AB} \otimes \tau_C \,  \|  \, \Omega) \, ,
\end{align}
where 
\begin{align}
    \Omega& := \exp \Big\{ \log (\rho_{AB}^{1/2} \, \rho_{ABC}^{-1} \, \rho_{AB}^{1/2} ) + \log (\rho_{B}^{-1/2}  \, \rho_{BC} \, \rho_{B}^{-1/2}  )   + \log (\rho_{AB} \otimes \tau_C ) - \log \rho_B + \log \rho_B     \Big\} \, .
\end{align}
Using the fact that the relative entropy between any two density matrices is always non-negative, we have
\begin{align}
 \widehat{I}^{\mathrm{rev}}_{\rho}(A;C | B)
    &  \leq   \log \tr[ \Omega ] \\
    &  \leq   \log \tr \Big[  \int_{-\infty}^{+\infty} dt \beta_0(t)  \left( \rho_{AB}^{1/2} \, \rho_{ABC}^{-1} \, \rho_{AB}^{1/2}  \right) \rho_B^{\frac{1-it}{2}} \left(\rho_B^{-1/2}\rho_{BC}\rho_B^{-1/2}\right)^{\frac{1-it}{2}}  \rho_B^{\frac{-1+it}{2}} \rho_{AB} \otimes \tau_C  \,    \\
    & \hspace{4cm} \rho_B^{\frac{-1-it}{2}} \left(\rho_B^{-1/2}\rho_{BC}\rho_B^{-1/2}\right)^{\frac{1+it}{2}}  \rho_B^{\frac{1+it}{2}}   \Big] \\
    &  =  \log \tr[  \left( \rho_{AB}^{1/2} \, \rho_{ABC}^{-1} \, \rho_{AB}^{1/2}  \right)( \id_A \otimes \Phi_{B\to BC}^{\mathrm{rot}}  )(\rho_{AB} \otimes \tau_C) -   \rho_{AB}\otimes \tau_C    +  \rho_{AB}\otimes \tau_C ] \\
    &  \leq  \; \tr[  \left( (\rho_{AB} \otimes \tau_C)^{1/2} \, \rho_{ABC}^{-1} \, (\rho_{AB} \otimes \tau_C)^{1/2}  \right) ( \id_A \otimes \Phi_{B\to BC}^{\mathrm{rot}} )(\rho_{AB}) -  \rho_{AB}\otimes \tau_C] \\
    &  \leq  \frac{1}{d_C} \norm{\rho_{ABC}^{-1/2} \, \rho_{AB}^{1/2}}_{\infty}^2 \norm{   ( \id_A \otimes \Phi_{B\to BC}^{\mathrm{rot}} )(\rho_{AB})  -\rho_{ABC}}_1 \, ,
\end{align}

\end{proof}

We conclude the section by summarizing the results presented here and leaving some open questions. We have explored a possible equivalence between $\rho_{ABC}$ being an approximate BS-QMC and $\eta_{ABC} $ being an approximate QMC, by constructing inequalities between $\widehat{I}_\rho^{\operatorname{rev}}(A;C|B)$ and $I_\eta(A:C |B)$. We have shown that the former is always lower bounded by a function of the latter, and a reverse bound holds under the additional assumption of commuting marginals of $\eta_{ABC}$.   

We have also explored the relation of $\widehat{I}_\rho^{\operatorname{rev}}(A;C|B)$ and the distance between $\rho$ and one of its BS-recovery conditions. A lower bound can be proven in terms of the distance to $\Phi(\rho_{BC})$, as shown in \cite{gondolf2024conditional}. We have shown an upper bound in terms of the distance to $\Phi^{\operatorname{rot}}(\rho_{BC})$, and left as an open question whether an upper bound can be found in terms of  $\Phi(\rho_{BC})$ directly, which would show in particular that the fixed points of $\Phi^{\operatorname{rot}}$ coincide with those of $\Phi$. In this case $\Phi^{\operatorname{rot}}$ would constitute another recovery condition for BS-QMCs.

\section{Equality conditions for DPIs of BS- and relative entropy}\label{sec:equality_conditions}

For a general quantum channel $\mathcal{T}$ and a state $\sigma$, the recovery condition $\mathcal{B}^\sigma_{\mathcal{T}}$  for the data-processing inequality of the BS-entropy was found in \cite{BluhmCapel-BSentropy-2019}. To be able to map quantum states into positive matrices, we present a  recovery condition which is non-linear but preserves positivity defined by
\begin{equation}
    \mathcal{B}^{\sigma,\text{sym}}_{\mathcal{T}}\circ \mathcal{T} (\rho):= ( \sigma \mathcal{T}^*(\mathcal{T}(\sigma)^{-1} \mathcal{T}(\rho)^2\mathcal{T}(\sigma)^{-1}) \sigma )^{1/2}  \, .
\end{equation}
However, the non-linearity of this recovery map can make this quantity a difficult  object to deal with. In Corollary \ref{coro:recoveryPhiChannels} we will see how to actually construct a completely positive, non trace-preserving  linear map, which will look very similar to the Petz recovery map.

The BS-entropy belongs to a larger family of entropies called \textit{maximal $f$-divergences} \cite{Matsumoto2018,HiaiMosonyi-f-divergences-2017}, defined as $\widehat{D}_f(\rho \| \sigma) =\tr[\sigma f(\sigma^{-1/2}\rho\sigma^{-1/2})]$   for any operator convex function $f$ on $[0,\infty)$. Since the saturation of the data-processing inequality is equivalent for every non-linear operator convex $f$, we can use the same recovery conditions for all of them.

\begin{theorem}\label{thm:equivalence_recovery_conditions}
    Let $\rho, \sigma$ be two quantum states, with $\sigma$ invertible, and let $\mathcal{T}$ be a quantum channel. The following are equivalent: 
    \begin{enumerate}
        \item[(i)] $\widehat{D}(\rho \| \sigma) = \widehat{D}(\mathcal{T}(\rho) \| \mathcal{T}(\sigma) ) $.
        \item[(ii)] $\widehat{D}_f(\rho \| \sigma) = \widehat{D}_f(\mathcal{T}(\rho) \| \mathcal{T}(\sigma) ) $, for every operator convex function $f$ on $[0,\infty)$.
        \item[(iii)] $\rho=\mathcal{B}^\sigma_{\mathcal{T}} \circ \mathcal{T} (\rho)$ .
        \item[(iv)] $\rho =\mathcal{B}^{\sigma,sym}_{\mathcal{T}} \circ \mathcal{T} (\rho)$ .
    \end{enumerate}
\end{theorem}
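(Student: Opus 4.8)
The plan is to prove the chain of equivalences by first observing that the hard direction (saturation $\Rightarrow$ recovery) is already essentially known, and then closing the loop cheaply. Specifically, (i) $\Leftrightarrow$ (iii) is precisely the saturation result for the BS-entropy from \cite{BluhmCapel-BSentropy-2019}, so I would simply cite it. The implication (ii) $\Rightarrow$ (i) is trivial, since the BS-entropy is the maximal $f$-divergence for the operator convex function $f(t) = t\log t$ (or equivalently $-\log$), so it is a special case. The substantive new content is (i) $\Rightarrow$ (ii), i.e., that saturation in the BS-entropy forces saturation in \emph{every} maximal $f$-divergence with non-linear operator convex $f$; for this I would invoke \cite[Theorem 3.34]{HiaiMosonyi-f-divergences-2017}, which states exactly that it suffices to check saturation for a single non-linear operator convex function. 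Thus the ring (i) $\Leftrightarrow$ (ii) $\Leftrightarrow$ (iii) is assembled purely by citation, and the only genuine computation left is (iii) $\Leftrightarrow$ (iv).

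For (iii) $\Rightarrow$ (iv): assume $\rho = \mathcal{B}^\sigma_{\mathcal{T}} \circ \mathcal{T}(\rho) = \sigma\mathcal{T}^*(\mathcal{T}(\sigma)^{-1}\mathcal{T}(\rho))$. The first step is to note that $\rho$ is self-adjoint and positive (being a quantum state), hence $\rho = \rho^* = \mathcal{T}^*(\mathcal{T}(\rho)\mathcal{T}(\sigma)^{-1})\sigma$ as well. Multiplying these two expressions and using self-adjointness, I would like to conclude
\begin{equation*}
\rho^2 = \sigma\,\mathcal{T}^*\!\big(\mathcal{T}(\sigma)^{-1}\mathcal{T}(\rho)\big)\,\mathcal{T}^*\!\big(\mathcal{T}(\rho)\mathcal{T}(\sigma)^{-1}\big)\,\sigma \, ,
\end{equation*}
but this is not immediately $\sigma\mathcal{T}^*(\mathcal{T}(\sigma)^{-1}\mathcal{T}(\rho)^2\mathcal{T}(\sigma)^{-1})\sigma$ because $\mathcal{T}^*$ is not multiplicative. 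The clean way around this is the standard fact that saturation makes $\mathcal{T}$ behave like a $*$-homomorphism on the relevant algebra: from (iii) one deduces (see \cite{BluhmCapel-BSentropy-2019, HiaiMosonyi-f-divergences-2017}) that $\mathcal{T}^*$ restricted to the algebra generated by $\mathcal{T}(\sigma)^{-1/2}\mathcal{T}(\rho)\mathcal{T}(\sigma)^{-1/2}$ is multiplicative after suitable rescaling; equivalently, one uses the structural form of saturating channels. Concretely, writing $\rho = \mathcal{B}^\sigma_{\mathcal{T}}\circ\mathcal{T}(\rho)$ and iterating/combining with its adjoint, one gets $\mathcal{T}^*(\mathcal{T}(\sigma)^{-1}\mathcal{T}(\rho))\cdot\mathcal{T}^*(\mathcal{T}(\rho)\mathcal{T}(\sigma)^{-1}) = \mathcal{T}^*(\mathcal{T}(\sigma)^{-1}\mathcal{T}(\rho)^2\mathcal{T}(\sigma)^{-1})$ as a consequence of the multiplicativity, whence $\rho^2 = \sigma\mathcal{T}^*(\mathcal{T}(\sigma)^{-1}\mathcal{T}(\rho)^2\mathcal{T}(\sigma)^{-1})\sigma = (\mathcal{B}^{\sigma,\mathrm{sym}}_{\mathcal{T}}\circ\mathcal{T}(\rho))^2$; taking positive square roots (the right-hand side is manifestly positive semidefinite, being of the form $A^*A$-like) gives $\rho = \mathcal{B}^{\sigma,\mathrm{sym}}_{\mathcal{T}}\circ\mathcal{T}(\rho)$.

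For the converse (iv) $\Rightarrow$ (iii): I would run the data-processing argument backwards. By monotonicity of the operator square root and the DPI for the BS-entropy, one shows that $\rho = \mathcal{B}^{\sigma,\mathrm{sym}}_{\mathcal{T}}\circ\mathcal{T}(\rho)$ forces $\widehat D(\rho\|\sigma) = \tr[\rho\log(\rho^{1/2}\sigma^{-1}\rho^{1/2})]$ to coincide with $\widehat D(\mathcal{T}(\rho)\|\mathcal{T}(\sigma))$; the key point is that $\tr[\rho^2 \cdot (\text{stuff})]$-type identities forced by (iv) are exactly the equality conditions appearing in the proof of the DPI in \cite{BluhmCapel-BSentropy-2019}. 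Alternatively, and more cleanly, one observes directly that (iv) implies $\rho^2 = \sigma\mathcal{T}^*(\mathcal{T}(\sigma)^{-1}\mathcal{T}(\rho)^2\mathcal{T}(\sigma)^{-1})\sigma$, and then reverse-engineers (iii) using positivity: since the right-hand side is $Y^*Y$ for $Y$ built from $\mathcal{T}(\rho)^{1/2}$ composed with Kraus operators and $\sigma, \mathcal{T}(\sigma)^{-1}$, and $\rho^2$ is its square, one extracts that $\rho = \sigma\mathcal{T}^*(\mathcal{T}(\sigma)^{-1}\mathcal{T}(\rho))$ up to the unitary in a polar decomposition — and that unitary is forced to be trivial by the trace-preservation and positivity constraints.

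The main obstacle, and the place where I would be most careful, is exactly this non-multiplicativity of $\mathcal{T}^*$: deducing $\rho^2 = \mathcal{B}^{\sigma,\mathrm{sym}}_{\mathcal{T}}\circ\mathcal{T}(\rho)$ squared from $\rho = \mathcal{B}^\sigma_{\mathcal{T}}\circ\mathcal{T}(\rho)$ is not a one-line manipulation and genuinely needs the structural consequences of saturation (the "$\mathcal{T}$ acts as a homomorphism" phenomenon) rather than naive algebra. If one prefers to avoid invoking that structure, the fallback is to route everything through the equivalence with $f$-divergences already established: pick $f(t) = t^2$ (operator convex on $[0,\infty)$ after the standard affine adjustment, or use $t \mapsto t^{-1}$ composed appropriately), note that $\widehat D_f$ with this $f$ has saturation condition precisely $\rho = \mathcal{B}^{\sigma,\mathrm{sym}}_{\mathcal{T}}\circ\mathcal{T}(\rho)$ by the analogue of the BS-saturation computation, and then (iv) $\Leftrightarrow$ (ii) $\Leftrightarrow$ (iii) falls out of the already-proven ring — making the whole theorem essentially a corollary of \cite[Theorem 3.34]{HiaiMosonyi-f-divergences-2017} together with a single explicit computation of the saturation condition for one quadratic-type $f$-divergence.
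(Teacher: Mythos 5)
Your assembly of (i) $\Leftrightarrow$ (ii) $\Leftrightarrow$ (iii) by citing \cite{BluhmCapel-BSentropy-2019} and \cite[Theorem 3.34]{HiaiMosonyi-f-divergences-2017} matches the paper exactly. The gap is in (iii) $\Rightarrow$ (iv). You correctly identify that $\mathcal{T}^*$ is not multiplicative, but your proposed fix --- that saturation makes $\mathcal{T}^*$ multiplicative on the relevant algebra, whence $\mathcal{T}^*(\mathcal{T}(\sigma)^{-1}\mathcal{T}(\rho))\,\mathcal{T}^*(\mathcal{T}(\rho)\mathcal{T}(\sigma)^{-1}) = \mathcal{T}^*(\mathcal{T}(\sigma)^{-1}\mathcal{T}(\rho)^2\mathcal{T}(\sigma)^{-1})$ --- is never proved, and proving it is essentially as hard as the theorem itself (the paper does carry out a multiplicative-domain analysis, but only in the appendix, via Lemma \ref{lemma:multiplicative} applied to the unital map $\mathcal{T}_\sigma$ rather than to $\mathcal{T}^*$, and only after a full re-derivation of the equality conditions). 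The paper's actual proof of (iii) $\Rightarrow$ (iv) avoids multiplicativity entirely with a two-step argument you are missing: first, the Kadison--Schwarz inequality for the unital completely positive map $\mathcal{T}^*$ gives only the operator \emph{inequality}
\begin{equation}
\rho^2 = \sigma\,\mathcal{T}^*\!\big(\mathcal{T}(\sigma)^{-1}\mathcal{T}(\rho)\big)\,\mathcal{T}^*\!\big(\mathcal{T}(\rho)\mathcal{T}(\sigma)^{-1}\big)\,\sigma \;\le\; \sigma\,\mathcal{T}^*\!\big(\mathcal{T}(\sigma)^{-1}\mathcal{T}(\rho)^2\mathcal{T}(\sigma)^{-1}\big)\,\sigma \, ;
\end{equation}
second, the scalar identity $\tr[\rho^2\sigma^{-1}] = \tr[\mathcal{T}(\rho)^2\mathcal{T}(\sigma)^{-1}]$ (the $f(t)=t^2$ instance of (ii), again from \cite[Theorem 3.34]{HiaiMosonyi-f-divergences-2017}) shows that the positive operator $X := \sigma\mathcal{T}^*(\mathcal{T}(\sigma)^{-1}\mathcal{T}(\rho)^2\mathcal{T}(\sigma)^{-1})\sigma - \rho^2$ satisfies $\tr[X\sigma^{-1}]=0$, which forces $X=0$ because $\sigma^{-1}$ is positive definite. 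You gesture at the $t^2$-divergence in your fallback paragraph, but you never state this ``positive operator with vanishing trace against an invertible positive operator must vanish'' step, which is the one genuine idea the proof needs.

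For (iv) $\Rightarrow$ (iii), your polar-decomposition argument (``the unitary is forced to be trivial'') is unjustified and unnecessary: the paper simply squares (iv) to get $\rho^2 = \sigma\mathcal{T}^*(\mathcal{T}(\sigma)^{-1}\mathcal{T}(\rho)^2\mathcal{T}(\sigma)^{-1})\sigma$, takes the trace against $\sigma^{-1}$ to recover $\tr[\rho^2\sigma^{-1}] = \tr[\mathcal{T}(\rho)^2\mathcal{T}(\sigma)^{-1}]$, and invokes \cite[Theorem 3.34]{HiaiMosonyi-f-divergences-2017} once more to land back at (i). (A small aside: $f(t)=t^2$ is already operator convex on $[0,\infty)$; no ``affine adjustment'' is needed.)
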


\begin{proof} 
   \noindent \underline{$\text{(i)} \Leftrightarrow \text{(ii)} \Leftrightarrow \text{(iii)}$.} The first  equivalence was proven in \cite[Theorem 3.34]{HiaiMosonyi-f-divergences-2017}, and the second in \cite{BluhmCapel-BSentropy-2019}.

\vspace{0.1cm}

\noindent \underline{$\text{(iii)} \Rightarrow \text{(iv)} .$} 
  Let us assume that (ii) holds. Then,   $ \rho = \sigma \mathcal{T}^\ast ( \mathcal{T}(\sigma)^{-1} \mathcal{T}(\rho) ) $, 
    and hence since $\rho^2=\rho\rho^*$,
    \begin{subequations}\label{eq:rho_square_equiv_recovery_conditions}
    \begin{align}
        \rho^2 &= \sigma \mathcal{T}^\ast ( \mathcal{T}(\sigma)^{-1} \mathcal{T}(\rho) ) \mathcal{T}^\ast ( \mathcal{T}(\rho) \mathcal{T}(\sigma)^{-1}  )\sigma\\      &\leq    \sigma \mathcal{T}^\ast ( \mathcal{T}(\sigma)^{-1} \mathcal{T}(\rho)^2 \mathcal{T}(\sigma)^{-1}  )\sigma \, ,
    \end{align}
    \end{subequations}
where in the last inequality we used the Kadison-Schwarz inequality (see, e.g., \cite[Exercise 3.4]{paulsen2002completely}). From \cite[Theorem 3.34]{HiaiMosonyi-f-divergences-2017}, the condition $\widehat{D}(\rho \| \sigma) = \widehat{D}(\mathcal{T}(\rho) \| \mathcal{T}(\sigma) ) $ is equivalent to  $\tr[\rho^2 \sigma^{-1}] = \tr[\mathcal{T}(\rho)^2 \mathcal{T}(\sigma)^{-1}]$. Thus,
\begin{subequations}
    \begin{align}
        0&=\tr[\mathcal{T}(\rho)^2 \mathcal{T}(\sigma)^{-1}]-\tr[\rho^2 \sigma^{-1}]\\
        &=\tr[\sigma  \mathcal{T}^\ast ( \mathcal{T}(\sigma)^{-1}  \mathcal{T}(\rho)^2 \mathcal{T}(\sigma)^{-1}  ) ]-\tr[\rho^2 \sigma^{-1}]\\
        &=\tr\left[\underbrace{\left(\sigma  \mathcal{T}^\ast ( \mathcal{T}(\sigma)^{-1}  \mathcal{T}(\rho)^2 \mathcal{T}(\sigma)^{-1}  )\sigma-\rho^2\right)}_{X} \sigma^{-1} \right] 
    \end{align}
\end{subequations}
which implies that $\rho^2=\sigma  \mathcal{T}^\ast ( \mathcal{T}(\sigma)^{-1}  \mathcal{T}(\rho)^2 \mathcal{T}(\sigma)^{-1}  )\sigma$, since $\sigma^{-1}$ is invertible and $X \geq 0$.
 
\vspace{0.1cm}

 \noindent  \underline{$\text{(iv)} \Rightarrow \text{(i)}. $}  Because of the condition in (iv), we have 
    \begin{equation}
        \tr[\rho^2 \sigma^{-1}] = \tr[\sigma \mathcal{T}^\ast\left(\mathcal{T}(\sigma)^{-1} \mathcal{T}(\rho)^2 \mathcal{T}(\sigma)^{-1}\right)] = \tr[\mathcal{T}(\rho)^2 \mathcal{T}(\sigma)^{-1}]
    \end{equation}
    and the proof is concluded by applying again \cite[Theorem 3.34]{HiaiMosonyi-f-divergences-2017}.
    \end{proof}

\begin{remark}
    The recoverability conditions given by Theorem \ref{theo:StructureBSDPI_appendix} are also valid for the geometric Rényi divergences, $\widehat{D}_{\alpha}(\rho\Vert \sigma)=\frac{1}{\alpha-1}\log \tr[\sigma f_{\alpha}(\sigma^{-1/2}\rho\sigma^{-1/2})]$ for $f_{\alpha}=x^{\alpha}$ \cite{fang2021geometric} with $\alpha \in (1,2]$, since this is the case where $f_{\alpha}$ is operator convex \cite[Exercise V.2.11]{bhatia-2013}. Since they are the logarithms of maximal f-divergences, and the logarithm is strictly monotone in its domain, the data-processing inequality holds for the $\widehat{D}_{\alpha}$ if and only if it holds for $\tr[\sigma f_{\alpha}(\sigma^{-1/2}\rho\sigma^{-1/2})]$.
\end{remark}

\subsection{Correspondence between states saturating BS- and relative entropy}\label{subsec:CorrespondenceEtaOmega}

Hereafter, given two states $\rho, \sigma \in \cS(\cH)$ and a quantum channel $\mathcal{T}: \cB(\cH) \rightarrow \cB(\cK) $, we say that the triple $(\rho, \sigma, \mathcal{T})$  is a \textit{Petz-triple} if it saturates the DPI for the relative entropy, namely
\begin{equation}
D(\rho \| \sigma) = D(\mathcal{T}(\rho) \| \mathcal{T}(\sigma) ) \, ,
\end{equation}
and we say that the triple $(\rho, \sigma, \mathcal{T})$ is a  \textit{BS-triple} if it saturates the DPI for the BS-entropy, namely
\begin{equation}
\widehat D(\rho \| \sigma) = \widehat D(\mathcal{T}(\rho) \| \mathcal{T}(\sigma) ) \, .
\end{equation}

Throughout the rest of the section, we denote by $\cH$ and $\mathcal{K}$ two finite-dimensional Hilbert spaces, and by $d_{\mathcal{K}}$ the dimension of $\mathcal{K}$.  Before showing the correspondence between Petz-triples and BS-triples, we will show that it holds for conditional expectations, a subclass of quantum channels. We introduce this notion by means of the proposition below \cite[Proposition 1.12]{OhyaPetz-Entropy-1993}.

\begin{proposition}

    Let $\mathcal{M}$ be a matrix algebra with unital matrix subalgebra  $\mathcal{L}$. Then, there exists a unique linear mapping $\mathcal{E}:\mathcal{M}\to \mathcal{L}$ such that 
    \begin{itemize}
        \item[1.] $\mathcal{E}$ is a positive map.
        \item[2.] $\mathcal{E}(B)=B$ for every $B \in \mathcal{L}$.
        \item[3.] $\mathcal{E}(AB)=\mathcal{E}(A)B$ for every $A \in \mathcal{M}$ and for every $B \in \mathcal{L}$.
        \item[4.]  $\mathcal{E}$ is trace preserving.
    \end{itemize}
    A map fulfilling $1$-$3$ is  called a conditional expectation.
\end{proposition}

\begin{theorem}\label{theo:theBeStTheorem}
    Let $\rho,\sigma \in \mathcal{S}(\mathcal{H})$ be quantum states satisfying supp$(\rho)\leq$supp$(\sigma)$ and let $\mathcal{E}:\mathcal{B}(\mathcal{H})\to \mathcal{B}(\mathcal{H})$ be a conditional expectation  with Stinespring's representation $\mathcal{E}(Y)=\tr_E[VYV^*]$ where $V:\mathcal{H} \to \mathcal{H}\otimes \mathcal{H}_E$ is the associated isometry. Define the states 
    \begin{equation}\label{def:GeneralEta}
        \eta_{X}:=c_{\eta_X}  \mathcal{E}(\sigma)^{-1/2}X\mathcal{E}(\sigma)^{-1/2}\, ,
    \end{equation}
    for $X\in \{\rho, \sigma,\mathcal{E}(\rho),\mathcal{E}(\sigma)\}$ and where $c_{\eta_X}$ is the normalization constant. Consider also $\rho_0=V \rho V^*$, $\sigma_0=V \sigma V^*$,  $\rho_0,\sigma_0\in \mathcal{B}(\mathcal{H}\otimes\mathcal{H}_E)^+$. If  $(\rho, \sigma, \mathcal{E})$ is a  BS-triple, then $(\eta_{\rho},\eta_{\sigma},\mathcal{E})$ is a Petz-triple, and  we can write
    \begin{equation}\label{eq:etarhoConditional}
    \eta_{\rho}=c_{\eta_{\rho}}V^*(\mathcal{E}(\sigma)^{-1/2}\otimes I_E)\rho_0(\mathcal{E}(\sigma)^{-1/2}\otimes I_E)V, \quad \text{and}  \quad  \eta_{\sigma}=c_{\eta_{\sigma}}V^*(\mathcal{E}(\sigma)^{-1/2}\otimes I_E)\sigma_0(\mathcal{E}(\sigma)^{-1/2}\otimes I_E)V\, .
\end{equation}

Conversely, let $\mu, \nu \in \mathcal{S}(\mathcal{H})$ with supp$(\mu) \leq$supp$(\nu)$ and assume that $(\mu, \nu, \mathcal{E})$ is a Petz-triple. If $\mathcal{E}(\nu)=I_{\mathcal{H}}/d_{\mathcal{H}}$, then for any positive definite $X \in \mathcal{E}(\mathcal{B}(\mathcal{H})^+)$, if we define
\begin{equation}\label{eq:DefinitionRhoSigmaConverse}
\rho:=c_{\rho}X^{1/2}\nu^{1/2} \mathcal{E}(\mu)\nu^{1/2}  X^{1/2}, \quad \text{and} \quad \sigma:=c_{\sigma}X^{1/2}\nu X^{1/2} \, ,
\end{equation}
 where $c_{\rho}$, $c_{\sigma}$ are the normalization constants,  then  $(\rho,\sigma, \mathcal{E})$ is a  BS-triple. In addition, under the constraint supp$(\mathcal{E}(\mu)) \leq$supp$(\nu)$ 
we can write 
  \begin{equation}
    \mu=d_{\mathcal{K}}c_{\rho}^{-1}V^*(X^{-1/2}\otimes I_E)\rho_0(X^{-1/2}\otimes I_E)V \, , \quad  \nu= c_{\sigma}^{-1}V^*(X^{-1/2}\otimes I_E)\sigma_0(X^{-1/2}\otimes I_E) V\, .
  \end{equation}
\end{theorem}

\begin{proof}
Let $\rho, \sigma \in \mathcal{S}(\mathcal{H})$ and let $\mathcal{E}:\mathcal{B}(\mathcal{H})\to \mathcal{B}(\mathcal{H})$ be a conditional expectation, then its range is a subalgebra and the restriction of $\mathcal{E}^*$ to $\mathcal{E}(\mathcal{B}(\mathcal{H}))$ is the natural embedding, so that we can omit it. Assume that  $(\rho,\sigma,\mathcal{E})$ is a BS-triple, then the BS-recovery condition translates into 
\begin{equation}
\begin{split}
\rho &=\sigma  \mathcal{E}(\sigma)^{-1}\mathcal{E}(\rho)\\
&=\mathcal{E}(\sigma)^{1/2}\mathcal{E}(\sigma)^{-1/2}\sigma \mathcal{E}(\sigma)^{-1/2}\mathcal{E}(\sigma)^{-1/2} \mathcal{E}(\rho)\mathcal{E}(\sigma)^{-1/2}\mathcal{E}(\sigma)^{1/2}
\end{split}
\end{equation}
Define $\Tilde{\eta}_{X}:=\mathcal{E}(\sigma)^{-1/2}X\mathcal{E}(\sigma)^{-1/2}$, so we can rewrite the expression above as
$\Tilde{\eta}_{\rho}=\Tilde{\eta}_{\sigma}\Tilde{\eta}_{\mathcal{E}(\rho)}$. Note that $[\Tilde{\eta}_{\sigma},\Tilde{\eta}_{\mathcal{E}(\rho)}]=0$, since
\begin{equation}
\Tilde{\eta}_{\sigma}\eta_{\mathcal{E}(\rho)}=\Tilde{\eta}_{\rho}=\Tilde{\eta}_{\rho}^*=\Tilde{\eta}_{\mathcal{E}(\rho)}\Tilde{\eta}_{\sigma} \, .
\end{equation}
Thus, $\Tilde{\eta}_{\rho}=\Tilde{\eta}_{\sigma}^{1/2}\Tilde{\eta}_{\mathcal{E}(\rho)}\Tilde{\eta}_{\sigma}^{1/2}$ and since $\Tilde{\eta}_{\mathcal{E}(\sigma)}=I_{\mathcal{H}}$, we obtain
\begin{equation}    \Tilde{\eta}_{\rho}=\Tilde{\eta}_{\sigma}^{1/2}\Tilde{\eta}_{\mathcal{E}(\sigma)}^{-1/2}\Tilde{\eta}_{\mathcal{E}(\rho)}\Tilde{\eta}_{\mathcal{E}(\sigma)}^{-1/2}\Tilde{\eta}_{\sigma}^{1/2} \, .
\end{equation}
We normalize now to define the states $\eta_{X}=\Tilde{\eta}_{X}/\tr[\Tilde{\eta}_{X}]$ and  notice that  $\mathcal{E}(\eta_{\rho})=\eta_{\mathcal{E}(\rho)}$ and $\mathcal{E}(\eta_{\sigma})=\eta_{\mathcal{E}(\sigma)}$ since $\mathcal{E}$ is a conditional expectation. Consequently, we can write 
\begin{equation}
\eta_{\rho}=\eta_{\sigma}^{1/2}\mathcal{E}(\eta_{\sigma})^{-1/2}\mathcal{E}(\eta_{\rho})\mathcal{E}(\eta_{\sigma})^{-1/2}\eta_{\sigma}^{1/2}\, ,
\end{equation}
so  $(\eta_{\rho},\eta_{\sigma}, \mathcal{E})$ is a Petz-triple.  

Consider now the Stinespring's isometry $V:\mathcal{H}\to \mathcal{H}\otimes \mathcal{H}_E$ such that $\mathcal{E}(X)=\tr_E[VXV^*]$, and notice that  $\rho=V^*\rho_0 V$ and $\sigma=V^*\sigma_0 V$ where $\rho_0,\sigma_0\in \mathcal{B}(\mathcal{H}\otimes\mathcal{H}_E)^+$ are positive and such that $\supp(\rho_0), \supp(\sigma_0)\le
VV^*$ . To obtain \eqref{eq:etarhoConditional}, write
\begin{equation}
    \eta_{\rho}=c_{\eta_{\rho}}\mathcal{E}(\sigma)^{-1/2}V^*\rho_0V\mathcal{E}(\sigma)^{-1/2}, \quad \text{and}  \quad  \eta_{\sigma}=c_{\eta_{\sigma}}\mathcal{E}(\sigma)^{-1/2}V^*\sigma_0V\mathcal{E}(\sigma)^{-1/2}\, .
\end{equation}
Finally, since $\mathcal{E}(\sigma)^{-1/2} \in \mathcal{E}(\mathcal{B}(\mathcal{H}))$ and the range of $\mathcal{E}$ is its multiplicative domain, Lemma \ref{lemma:multiplicative} implies that
$V\mathcal{E}(\sigma)^{-1/2}=(\mathcal{E}(\sigma)^{-1/2}\otimes I_E)V$.

Conversely, let $\mu, \nu \in \mathcal{S}(\mathcal{H})$ with supp$(\mu) \leq$supp$(\nu)$ such that  $(\mu, \nu, \mathcal{E})$ is a Petz-triple and define $\rho, \sigma$ by \eqref{eq:DefinitionRhoSigmaConverse} where $X \in \mathcal{E}(\mathcal{B}(\mathcal{H}))$. Since $(\mu,\nu, \mathcal{E})$ is a Petz-triple, it also satisfies the BS-recovery condition  \cite{BluhmCapel-BSentropy-2019}, and  using also that  $\mathcal{E}(\nu)=I_{\mathcal{H}}/d_{\mathcal{H}}$ we obtain
\begin{equation}
\nu \mathcal{E}(\mu)=\nu \mathcal{E}(\nu)^{-1}\mathcal{E}(\mu)=\mu=\mu^*=\mathcal{E}(\mu)\nu \, ,
\end{equation}
i.e. $[\nu,\mathcal{E}(\mu)]=0$.
We will show first that $(\rho,\sigma , \mathcal{E})$ is a BS-triple using the condition (iii) in Theorem \ref{thm:equivalence_recovery_conditions}. Notice that 
\begin{equation}
\mathcal{E}(\sigma)=X^{1/2}\mathcal{E}(\nu)X^{1/2}=d_{\mathcal{H}}^{-1}X, \quad  \text{and} \quad \mathcal{E}(\rho)=X^{1/2}\mathcal{E}(\nu) \mathcal{E}(\mu)X^{1/2}=d_{\mathcal{H}}^{-1}X^{1/2} \mathcal{E}(\mu)X^{1/2}.
\end{equation}
 Then,
\begin{equation}
\begin{split}
\sigma\mathcal{E}(\sigma)^{-1}\mathcal{E}(\rho)&=c_{\rho} X^{1/2}\nu X^{1/2} X^{-1}X^{1/2}\mathcal{E}(\mu) X^{1/2}\\
&=c_{\rho}X^{1/2}\nu \mathcal{E}(\mu) X^{1/2}\\
&=\rho \, ,
\end{split}
\end{equation}
so $(\rho, \sigma, \mathcal{E})$ is a BS-triple. Writing    $\sigma=V^*\sigma_0V$, then $\nu$ is given by
\begin{equation}
\begin{split}
    \nu&=c_{\sigma}^{-1}X^{-1/2}V^*\sigma_0VX^{-1/2} \\
    &=c_{\sigma}^{-1}V^*(X^{-1/2}\otimes I_E)\sigma_0(X^{-1/2}\otimes I_E) V\, ,
\end{split}
\end{equation}
where the last step follows analogously as in the previous implication  using the fact that $X \in \mathcal{E}(\mathcal{B}(\mathcal{H}))$. Analogously, write $\rho=V^* \rho_0 V$ and since supp$(\mathcal{E}(\mu))\leq$ supp$(\nu)$, we can express $\mathcal{E}(\mu)=c_{\rho}^{-1}\nu^+ X^{-1/2}\rho X^{-1/2}$, where $\nu^+$ denotes the pseudoinverse of $\nu$. Finally, since every Petz-triple $(\mu,\nu,\mathcal{E})$ is a BS-triple,  we use again (iii) of Theorem \ref{thm:equivalence_recovery_conditions} and obtain
\begin{equation}
\begin{split}
 \mu & = \nu \mathcal{E}(\nu)^{-1}\mathcal{E}(\mu)\\
 &=d_{\mathcal{H}}c_{\rho}^{-1}\nu \nu^+ X^{-1/2}\rho X^{-1/2}\\
 &=d_{\mathcal{H}}c_{\rho}^{-1}X^{-1/2}V^* \rho_0V X^{-1/2} \\
 &=d_{\mathcal{H}}c_{\rho}^{-1}V^*(X^{-1/2}\otimes I_E)\rho_0(X^{-1/2}\otimes I_E)V \, .
 \end{split}
\end{equation}
\end{proof}

\begin{remark}
    If $\mathcal{H}=\mathcal{K}\otimes \mathcal{K}^c$ notice that the partial trace $\tr_{\mathcal{K}}:\mathcal{B}(\mathcal{H})\to \mathcal{B}(\mathcal{K}^{c})$ is also included in the assumptions of Theorem \ref{theo:theBeStTheorem} setting $\mathcal{E}(X)=d_{\mathcal{K}}^{-1}I_{\mathcal{K}}\otimes \tr_{\mathcal{K}}[X]$.
\end{remark}

\begin{remark}
 The fundamental key of the previous theorem lies in the non-symmetric nature of the BS-recovering map $\mathcal{B}_{\mathcal{E}}^{\nu}$ defined in \eqref{eq:BS-recovery-condition}. This fact together with $\mathcal{E}(\nu)$ being maximally mixed impose the constraint that $[\mathcal{E}(\mu),\nu]=0$. This is an interesting  phenomenon of the BS-recovery condition that cannot be observed via the Petz recovery map.
\end{remark}

\begin{theorem}\label{thm:bs_petz}Let $\rho,\sigma$ be states on $\cB(\cH)$ and let $\mathcal{T}:B(\cH)\to B(\cK)$ be a channel.
Let $V:\cH\to \cK\otimes \cH_E$ be the isometry such that
$\mathcal{T}=\tr_E[V\cdot V^*]$.  Let us introduce the states
\[
\bar{\eta}_{\rho}:=c_\rho\mathcal{T}(\sigma)^{-1/2}V\rho
V^*\mathcal{T}(\sigma)^{-1/2},\qquad \bar{\eta}_{\sigma}:=c_\sigma\mathcal{T}(\sigma)^{-1/2}V\sigma
V^*\mathcal{T}(\sigma)^{-1/2},
\]
here $c_\rho$ and $c_\sigma$ are normalization constants. Then $(\rho,\sigma,\mathcal{T})$
 is a BS-triple if and only if $(\bar{\eta}_{\rho},\bar{\eta}_{\sigma},\tr_E)$
is a Petz-triple. 
\end{theorem}

\begin{proof} Put $\rho_0:=V\rho V^*$, $\sigma_0=V\sigma V^*$, then we clearly have that
$(\rho,\sigma,\mathcal{T})$ is a BS-triple if and only if $(\rho_0,\sigma_0,\tr_E)$ is a BS-triple. Now we can apply the results of Theorem \ref{theo:theBeStTheorem} to the latter, obtaining
the Petz-triple $(\bar{\eta}_{\rho},\bar{\eta}_{\sigma},\tr_E)$. 

Assume the converse. Note that $\tr_E[\bar{\eta}_{
\rho}]=c_\rho\mathcal{T}(\sigma)^{-1/2}\mathcal{T}(\rho)\mathcal{T}(\sigma)^{-1/2}$ and 
$\tr_E[\bar{\eta}_{\sigma}]=I_{\cK}/d_{\cK}$ is the maximally mixed state, note also that
$c_\sigma=1/d_{\cK}$. By the reversibility conditions for the triple
$(\bar{\eta}_{\rho},\bar{\eta}_{\sigma},\tr_E)$ \cite[Eq. (1.30)]{CarlenVershynina-Stability-DPI-RE-2017} or \cite{Petz2003}, we have
\[
\bar{\eta}_{\rho}^{1/2}\bar{\eta}_{\sigma}^{-1/2}=\tr_E[\bar{\eta}_{\rho}]^{1/2}\tr_E[\bar{\eta}_{\sigma}]^{-1/2}\otimes
I_E=d_{\cK}^{1/2}\tr_E[\bar{\eta}_{\rho}]^{1/2}\otimes I_E.
\]
It follows that $\bar{\eta}_{\rho}$
and $\bar{\eta}_{\sigma}$ must commute, moreover, suppressing again tensoring with the identity,
\[
\bar{\eta}_{\rho}^{1/2}=d_{\cK}^{1/2}\tr_E[\bar{\eta}_{\rho}]^{1/2}\bar{\eta}_{\sigma}^{1/2}.
\]
But this implies that $\bar{\eta}_{\sigma}$ and $\tr_E[\bar{\eta}_{\rho}]$ must commute as well
and we have
\[
\rho_0=c_\rho^{-1}\mathcal{T}(\sigma)^{1/2}\bar{\eta}_{
\rho}\mathcal{T}(\sigma)^{1/2}=c_\rho^{-1}d_{\cK}\mathcal{T}(\sigma)^{1/2}\bar{\eta}_{\sigma}\tr_E[\bar{\eta}_{\rho}]\mathcal{T}(\sigma)^{1/2}=
\sigma_0\mathcal{T}(\sigma)^{-1}\mathcal{T}(\rho)=\sigma_0\tr_E[\sigma_0]^{-1}\tr_E[\rho_0],
\]
here we used that $\mathcal{T}(\rho)=\tr_E[\rho_0]$ and similarly for $\sigma$. It follows
that $(\rho_0,\sigma_0,\tr_E)$ is a BS-triple and so is $(\rho,\sigma,\mathcal{T})$.
\end{proof}

\begin{corollary} Assume that there is some $Y\in \cB(\cH)^+$ such that $\mathcal{T}(\sigma)V=VY$
and put
\[
\eta_\rho:=d_\rho Y^{-1/2}\rho Y^{-1/2},\qquad \eta_\sigma:=d_\sigma Y^{-1/2}\sigma
Y^{-1/2},
\]
with normalization constants $d_\rho$ and $d_{\sigma}$. 
Then $(\rho,\sigma,\mathcal{T})$ is a BS-triple if and only if
$(\eta_\rho,\eta_\sigma,\mathcal{T})$ is a Petz-triple.
\end{corollary}

\begin{proof} Similarly as before, since $V\cdot V^*$ is an isometric channel,
$(\eta_\rho,\eta_\sigma,\mathcal{T})$ is a Petz-triple if and only if $(V \eta_\rho V^*,
V\eta_\sigma V^*,\tr_E)$ is a Petz-triple. Now it is enough to note that by the
assumptions, we have 
$VY^{-1/2}=\mathcal T(\sigma)^{-1/2}V$, which implies that $V\eta_\rho V^*=\bar{\eta}_\rho$
and $V\eta_\sigma V^*= \bar{\eta}_\sigma$, with the notation as in Theorem \ref{thm:bs_petz}. The statement now follows by Theorem \ref{thm:bs_petz}.
\end{proof}

\begin{remark}
The assumption in the above corollary is satisfied if $\mathcal T$ is also unital and
$\mathcal{T}(\sigma)$ lies in the image of its multiplicative domain (see Lemma
\ref{lemma:multiplicative} below). For example, this is the case for the trace preserving conditional
expectation $\mathcal E$. 
\end{remark}

\begin{corollary}\label{coro:Petz_BS_converse} Let $\mu,\nu$ be states on $\cB(\cK\otimes \cH_E)$ such that
$\tr_E[\nu]=I_{\cK}/d_{\cK}$ and assume that $(\mu,\nu,\tr_E)$ is a Petz-triple. Let $X\in
\cB(\cK)^+$ be a state and $V:\cH\to \cK\otimes\cH_E$ an isometry such that:
\begin{enumerate}
\item $X$ is invertible
\item $\supp(X^{1/2}\mu X^{1/2}), \supp(X^{1/2}\nu X^{1/2})\le VV^*$.
\end{enumerate}
Let $\omega_{\mu}(X,V), \omega_{\nu}(X,V)$ be states on $\cB(\cH)$ with $\omega_{\mu}(X,V) \propto V^*X^{1/2}\mu X^{1/2}V$,
$\omega_{\nu}(X,V)\propto V^*X^{1/2}\nu X^{1/2}V$ and let $\mathcal T=\tr_E[V\cdot V^*]$. Then
$(\omega_{\mu}(X,V),\omega_{\nu}(X,V),\mathcal T)$ is a BS-triple.

\end{corollary}

\begin{proof} Let $\rho_0$, $\sigma_0$ be states on $\cB(\cK\otimes \cH_E)$ such that
$\rho_0\propto X^{1/2}\mu X^{1/2}$ and $\sigma_0\propto X^{1/2}\nu X^{1/2}$. By the
assumptions, $V\omega_\mu(X,V) V^*=\rho_0$ and $V\omega_\nu(X,V) V^*=\sigma_0$. Further, we have 
\[
\mathcal{T}(\omega_{\nu}(X,V))=\tr_E[\sigma_0]=X,
\]
so that, with the notation as in Theorem \ref{thm:bs_petz},
\[
\bar{\eta}_\rho\propto X^{-1/2}V\omega_\mu(X,V) V^*X^{-1/2}=X^{-1/2}\rho_0X^{-1/2}\propto \mu
\]
so that $\bar{\eta}_\rho=\mu$, similarly $\bar{\eta}_\sigma=\nu$. By Theorem \ref{thm:bs_petz},
$(\omega_\mu(X,V),\omega_\nu(X,V),\mathcal{T})$ is a BS-triple.
\end{proof}

The definition of $\bar{\eta}_X$ given in Theorem \ref{thm:bs_petz} allow us to obtain a recoverability condition in a similar way as for the $\Phi_{B \to AB}$ in Corollary \ref{coro:MapPhi} for a triple $(\rho,\sigma,\mathcal{T})$ that satisifes the BS-recovery condition. 
Consider the polar decomposition
$\sigma_0^{1/2}\mathcal{T}(\sigma)^{-1/2}=c_\sigma^{-1/2}W \bar{\eta}_\sigma^{1/2}$, in the notations of
Theorem \ref{thm:bs_petz}, where $W$ is a unitary in $\cB(\cK\otimes \cH_E)$. Then $\widetilde
W:=V^*WV$ is a partial isometry in $\cB(\cH)$. Let us define the map
$\Phi_{\sigma,\mathcal{T}}\colon \cB(\cK)\to \cB(\cH)$ by
\[
\Phi_{\sigma,\mathcal{T}}(Y)=\sigma^{1/2}\widetilde
W\mathcal{T}^*(\mathcal{T}(\sigma)^{-1/2}Y\mathcal{T}(\sigma)^{-1/2})\widetilde
W^*\sigma^{1/2}.
\]
This map is obviously completely positive but not necessarily trace preserving. Note also
that if $\widetilde W=I_{\cH}$, then $\Phi_{\sigma,\mathcal{T}}$ is the Petz recovery map.  

\begin{corollary}\label{coro:recoveryPhiChannels} $(\rho,\sigma,\mathcal{T})$ is a BS-triple if and only if 
\[
\rho=\Phi_{\sigma,\mathcal{T}}(\mathcal{T}(\rho)).
\]
\end{corollary}

\begin{proof} The triple $(\rho,\sigma,\mathcal{T})$ is a BS-triple if and only if
$\rho=\sigma\mathcal{T^*}(\mathcal{T}(\sigma)^{-1}\mathcal{T}(\rho))$, equivalently,
\begin{equation}
\begin{split}
\rho&=V^*\sigma_0VV^*(\mathcal{T}(\sigma)^{-1}\mathcal{T}(\rho)\otimes I_E
)V\\
&=V^*\sigma_0(\mathcal{T}(\sigma)^{-1}\mathcal{T} (\rho)\otimes I_E)V\\
&= V^*\mathcal{T}(\sigma)^{1/2}c_\sigma^{-1}\bar{\eta}_\sigma (\tr_E[\bar{\eta}_\rho]\otimes
I_E)\mathcal{T}(\sigma)^{1/2}V\\
&= c_\sigma^{-1}V^*\mathcal{T}(\sigma)^{1/2}\bar{\eta}_\sigma^{1/2} (\tr_E[\bar{\eta}_\rho]\otimes
I_E)\bar{\eta}_\sigma^{1/2}\mathcal{T}(\sigma)^{1/2}V\\
&=V^*\sigma_0^{1/2}W(\tr_E[\bar{\eta}_\rho]\otimes
I_E)W^*\sigma_0^{1/2}V\\
&= V^*\sigma_0^{1/2}V\widetilde
W\mathcal{T}^*(\mathcal{T}(\sigma)^{-1/2}\mathcal{T}(\rho)\mathcal{T}(\sigma)^{-1/2})\widetilde
W^*V^*\sigma_0^{1/2}V.
\end{split}
\end{equation}
The proof is finished by the observation that since $\supp(\sigma_0)\le VV^*$, we have
$V^*\sigma_0^nV=(V^*\sigma_0V)^n=\sigma^n$ for any $n$, consequently,
$V^*\sigma_0^{1/2}V=\sigma^{1/2}$.
\end{proof}

\subsection{Structural Decompositions of BS- and Petz-triples satisfying recovery conditions}\label{subsec:StructuralDecomposition}

To showcase the importance of the correspondence  between BS-triples and Petz-triples given by Theorem \ref{thm:bs_petz} and Corollary \ref{coro:Petz_BS_converse}, we present below an application that consists of obtaining the structural decompositions for both quantities  in two different ways.

On the one hand, the structural decomposition for Petz-triples $(\mu, \nu, \mathcal{T})$ was fully characterized in  \cite{jenvcova2006sufficiency} and \cite{mosonyi2004structure}. In the case when $\mathcal{T}=\mathcal{E}$ is the trace-preserving conditional expectation onto some subalgebra $\mathcal{L}\subseteq \mathcal{B}(\mathcal{H})$, we  have from \cite[Theorem~5(iii)]{jenvcova2006sufficiency} that $(\rho,\sigma,\mathcal{E})$ is a Petz-triple if and only if there are density operators $\rho_1,\sigma_1\in \mathcal{L}$ and $\xi\in \mathcal{S}(\mathcal{H})$ such that 
\[
\rho=\rho_1\xi,\qquad \sigma=\sigma_1\xi, \quad \text{which implies} \quad \mathcal{E}(\rho)=\rho_1\mathcal{E}(\xi), \qquad \mathcal{E}(\sigma)=\sigma_1\mathcal{E}(\xi).
\]
Since both $\rho_1$ and $\sigma_1$ must commute with both $\xi$ and $\mathcal{E}(\xi)$, this suggests a unitary and a decomposition 
$U:\mathcal{H}\to \oplus_n \mathcal{H}_n^L\otimes \mathcal{H}_n^R$ such that
\[
\rho=U^*\left(\bigoplus_n \rho_n\otimes \xi_n\right)U,\qquad \sigma=U^*\left(\bigoplus_n \sigma_n\otimes \xi_n\right)U
\]
and 
\[
\mathcal{E}(\rho)=U^*\left(\bigoplus_n \rho_n\otimes \xi^0_n\right)U,\qquad \mathcal{E}(\sigma)=U^*\left(\bigoplus_n \sigma_n\otimes \xi^0_n\right)U
\]
for some $\rho_n,\sigma_n\in \mathcal{B}(\mathcal{H}_n^L)^+$ and $\xi_n,\xi_n^0\in \mathcal{B}(\mathcal{H}_n^R)^+$. If $\mathcal{E}(\sigma)=I_{\mathcal{H}}/d_{\mathcal{H}}$, we see that all $\sigma_n$ and $\xi_n$ must be multiples of the identity, so that we may write $\sigma=U^*\bigoplus_n (I_{\mathcal{H}^L_n}\otimes \xi_n)U$ in this case. Plugging now this structural decompositions in Corollary \ref{coro:Petz_BS_converse}, we can obtain the structural decomposition of BS-triples.

On the other hand, in the next theorem, we will show the structural decomposition for BS-recovery states directly, which will also  expand  Theorem \ref{thm:equivalence_recovery_conditions}. Using this result we will be able to obtain the structural decomposition of states of Petz triples $(\mu,\nu,\mathcal{E})$, where $\mathcal{E}$ is a conditional expectation, under the constraint that $\mathcal{E}(\nu)$ is the maximally mixed state. For this purpose, we will need the description of the multiplicative domain of a completely positive unital
map, given below in Lemma \ref{lemma:multiplicative}. 

Let $\mathcal{N}: \mathcal{B}(\mathcal{H})\to \mathcal{B}(\mathcal{K})$ be a completely positive unital map. 
From the Stinespring
representation of the adjoint map $\mathcal{N}^*$, we see that there is some auxiliary space
$\mathcal{H}_E$ and an operator $V:\mathcal{H}\to \mathcal{K}\otimes\mathcal{H}_E$ such
that $\tr_E [VV^*]=I_{\mathcal{K}}$ and $\mathcal{N}= \tr_E
[V\cdot V^*]$.  The map $\mathcal{N}$ is faithful if and only if $V^*V$ is invertible. Indeed, 
this follows from the fact that
 for any $M\ge 0$, we have $\mathcal{N}(M)=0$ if and only if $0=\tr [\mathcal{N}(M)]=\tr [MV^*V]$.
In this case, we have the polar decomposition $V=W(V^*V)^{1/2}=(VV^*)^{1/2}W$, with an
isometry $W:\mathcal{H}\to \mathcal{K}\otimes\mathcal{H}_E$.

\begin{lemma}\label{lemma:multiplicative} Let $\mathcal{N}=\tr_E[V\cdot V^*]$ be a completely positive
unital map $\mathcal{B}(\mathcal{H})\to \mathcal{B}(\mathcal{K})$ and let  $X=X^*\in \mathcal{B}(\mathcal{H})$. Then
$\mathcal{N}(X^2)=\mathcal{N}(X)^2$ if and only if
there is some $Y=Y^*\in \mathcal{B}(\mathcal{K})$ such that $(Y\otimes I_E)V=VX$. Moreover, in that case,
$Y=\mathcal{N}(X)$ and $Y\otimes I_E$ commutes with $VV^*$. If $\mathcal{N}$ is faithful, we also 
have $X=W^*(Y\otimes I_E)W$, with $W:\mathcal{H}\to \mathcal{K}\otimes\mathcal{H}_E$  the isometry from the polar decomposition
of $V$.

\end{lemma}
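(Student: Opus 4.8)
The plan is to treat the two implications separately, and to handle the characterization of equality in the Kadison–Schwarz inequality $\mathcal{N}(X^2)\ge \mathcal{N}(X)^2$ by passing to the Stinespring picture, where $\mathcal{N}(\cdot)=\tr_E[V\cdot V^*]$ with $\tr_E[VV^*]=I_{\mathcal{K}}$. First I would prove the easy direction: assume there is $Y=Y^*$ with $(Y\otimes I_E)V=VX$. Then $VX^2=(Y\otimes I_E)VX=(Y\otimes I_E)^2 V$, so $\mathcal{N}(X^2)=\tr_E[(Y\otimes I_E)^2 VV^*]=Y^2$ (using $\tr_E[VV^*]=I_{\mathcal K}$ and that $Y\otimes I_E$ pulls out of the partial trace), and likewise $\mathcal{N}(X)=\tr_E[(Y\otimes I_E)VV^*]=Y$. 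Hence $\mathcal{N}(X^2)=\mathcal{N}(X)^2$, $Y=\mathcal{N}(X)$, and from $VXV^*=(Y\otimes I_E)VV^*=VV^*(Y\otimes I_E)$ — the last equality because $VXV^*$ is self-adjoint — we get that $Y\otimes I_E$ commutes with $VV^*$. If $\mathcal{N}$ is faithful, $V=W(V^*V)^{1/2}$ with $W$ an isometry, and from $VX=(Y\otimes I_E)V$ we get $W(V^*V)^{1/2}X=(Y\otimes I_E)W(V^*V)^{1/2}$; multiplying by $W^*$ on the left and using $W^*W=I_{\mathcal H}$ gives $(V^*V)^{1/2}X=W^*(Y\otimes I_E)W(V^*V)^{1/2}$. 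A little care is needed here — I would instead argue $X=W^*W X W^* W$ is not quite enough, so I'd note $W^*(Y\otimes I_E)W$ is self-adjoint and verify $W^*(Y\otimes I_E)W\cdot(V^*V)^{1/2}=(V^*V)^{1/2}X$ and also the adjointed relation, then cancel the invertible $(V^*V)^{1/2}$; equivalently, use $X=V^*(VV^*)^{-1}VX$ after noting $V^*(VV^*)^{-1}V=W^*W=I$ on the faithful part.

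For the harder direction, suppose $\mathcal{N}(X^2)=\mathcal{N}(X)^2$. Set $Y:=\mathcal{N}(X)=Y^*$. The key identity is that equality in Kadison–Schwarz forces $VX=(Y\otimes I_E)V$. The standard way I would establish this is to consider the positive operator obtained from the "$2\times 2$ matrix trick": the completely positive map $\mathcal{N}$ applied to $\begin{pmatrix} I & X\\ X & X^2\end{pmatrix}\ge 0$ yields $\begin{pmatrix} I & Y\\ Y & \mathcal{N}(X^2)\end{pmatrix}=\begin{pmatrix} I & Y\\ Y & Y^2\end{pmatrix}\ge 0$, which is rank-degenerate. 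Concretely, compute $\tr_E\big[(VX-(Y\otimes I_E)V)(VX-(Y\otimes I_E)V)^*\big] = \mathcal{N}(X^2) - Y\mathcal{N}(X) - \mathcal{N}(X)Y + \tr_E[(Y\otimes I_E)VV^*(Y\otimes I_E)]$. The first three terms give $Y^2 - Y^2 - Y^2 = -Y^2$, and the last term equals $Y^2$ once we know $Y\otimes I_E$ commutes with $VV^*$ — but that commutation is itself something we are trying to prove. To avoid circularity, I would instead directly expand $\tr_E[(VX-(Y\otimes I_E)V)(VX-(Y\otimes I_E)V)^*]$ without assuming commutation: it equals $\mathcal{N}(X^2)-YY^*... $ — the cross terms are $-\tr_E[VX V^*(Y\otimes I_E)] = -\mathcal{N}(X)Y = -Y^2$ and its adjoint, and the last term is $\tr_E[(Y\otimes I_E)VV^*(Y\otimes I_E)]$. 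For this last term I'd use that $\tr_E$ is a conditional expectation type map and $Y\otimes I_E$ is in its "commutant" only after projecting; more cleanly, write $\tr_E[(Y\otimes I_E)VV^*(Y\otimes I_E)] = Y \tr_E[VV^*] Y = Y I_{\mathcal K} Y = Y^2$, which is legitimate because $\tr_E$ acts only on the $E$-factor and $Y$ acts only on $\mathcal{K}$ — this does not need commutation with $VV^*$. So the whole expression is $Y^2 - Y^2 - Y^2 + Y^2 = 0$; since it is the partial trace over $E$ of a positive operator, positivity of $\tr_E$ (faithfulness of $\tr_E$ on positive operators) forces $(VX-(Y\otimes I_E)V)(VX-(Y\otimes I_E)V)^* = 0$, hence $VX=(Y\otimes I_E)V$. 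Then the commutation of $Y\otimes I_E$ with $VV^*$ and, in the faithful case, the formula $X=W^*(Y\otimes I_E)W$ follow exactly as in the easy direction above.

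The main obstacle is the step where I claim $\tr_E[(Y\otimes I_E)VV^*(Y\otimes I_E)]=Y^2$: one must be careful that "$Y$ acts only on $\mathcal K$" really lets it be pulled outside the partial trace over $E$ without any commutation hypothesis, which is true since $\tr_E$ is $(\mathcal{B}(\mathcal K)\otimes I_E)$-bimodular, but it is the one point where a sloppy argument would silently assume what is to be proved. Everything else is bookkeeping with the Stinespring data and the polar decomposition $V=W(V^*V)^{1/2}=(VV^*)^{1/2}W$.
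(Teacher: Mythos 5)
Your proof is correct and follows essentially the same route as the paper's: both set $Y=\mathcal{N}(X)$ and show that the defect operator $Z=(Y\otimes I_E)V-VX$ vanishes by a positivity argument, the only immaterial difference being that you compute $\tr_E[ZZ^*]=0$ using the $(\mathcal{B}(\mathcal{K})\otimes I_E)$-bimodularity of the partial trace (which, as you correctly note, requires no commutation hypothesis), whereas the paper computes the full trace $\tr[ZZ^*]=0$. The easy direction, the commutation of $Y\otimes I_E$ with $VV^*$, and the faithful-case identity $X=W^*(Y\otimes I_E)W$ are handled as in the paper.
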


\begin{proof} Assume that $\mathcal{N}(X^2)=\mathcal{N}(X)^2$ and let $Y=\mathcal{N}(X)$. Let $Z=(Y\otimes
I_E)V-VX$. Then
\begin{align}
\tr[ZZ^*]&=\tr[(Y\otimes I_E)VV^*(Y\otimes I_E)-(Y\otimes I_E)VXV^*-VXV^*(Y\otimes
I_E)+VX^2V^*]\\
&=\tr[Y^2-\mathcal{N}(X^2)]=0,
\end{align}
so that $Z=0$. Conversely, let $Y=Y^*\in \mathcal{B}(\mathcal{K})$ be such that $(Y\otimes I_E)V=VX$.
Then $(Y\otimes I_E)VV^*=VXV^*=VV^*(Y\otimes I_E)$ and $Y=\tr_E[(Y\otimes
I_E)VV^*]=\tr_E[VXV^*]=\mathcal{N}(X)$. We also have
\[
\mathcal{N}(X)^2=\tr_E[(\mathcal{N}(X)\otimes I_E)^2VV^*]=\tr_E[VX^2V^*]=\mathcal{N}(X^2).
\]
Assume that $\mathcal{N}$ is faithful, then 
\[
VW^*(Y\otimes I_E)W=(VV^*)^{1/2}(Y\otimes I_E)W=(Y\otimes I_E)(VV^*)^{1/2}W=(Y\otimes
I_E)V=VX.
\]
Since $V^*V$ is invertible, it follows that we must have $X=W^*(Y\otimes I_E)W$.
\end{proof}

We will assume below that $\supp(\rho)\le \supp(\sigma)$ and use the notation
\[
[\rho/\sigma]:=\sigma^{-1/2}\rho\sigma^{-1/2}.
\]
Note that the support condition implies that this operator is well defined. 
Let $\mathcal{T}: \mathcal{B}(\mathcal{H})\to \mathcal{B}(\mathcal{K})$ be a channel and let 
$\mathcal{T}_{\sigma}$ denote the adjoint of the Petz recovery map, that is
\[
\mathcal{T}_\sigma:=(\mathcal{P}^\sigma_{\mathcal{T}})^*=\mathcal{T}(\sigma)^{-1/2}\mathcal{T}(\sigma^{1/2}\cdot\sigma^{1/2})
\mathcal{T}(\sigma)^{-1/2}.
\]
By the restriction to the supports, we may and will assume that both $\sigma$ and
$\mathcal{T}(\sigma)$
are invertible, in
which case $\mathcal{T}_\sigma: \mathcal{B}(\mathcal{H})\to \mathcal{B}(\mathcal{K})$ is unital and faithful. It is also easily seen that
\[
\mathcal{T}_\sigma([\rho/\sigma])=[\mathcal{T}(\rho)/\mathcal{T}(\sigma)].
\]

\begin{theorem}\label{theo:StructureBSDPI_appendix} Let $\mathcal{T}:\mathcal{B}(\mathcal{H})\to
\mathcal{B}(\mathcal{K})$  be a channel  and let $V: \mathcal{H}\to\mathcal{K}\otimes\mathcal{H}_E$ be an isometry such that 
$\mathcal{T}=\tr_E[V\cdot
V^*]$.  Let $\rho,\sigma\in \mathcal{S}(\mathcal{H})$ be states such that $\supp(\rho)\le \supp(\sigma)$. The following conditions are
equivalent.
\begin{enumerate}
\item[(i)] $\widehat D(\rho\|\sigma)=\widehat D(\mathcal{T}(\rho)\|\mathcal{T}(\sigma))$.
\item[(ii)]  $\mathcal{T}_\sigma([\rho/\sigma]^2)=\mathcal{T}_\sigma([\rho/\sigma])^2$.
\item[(iii)] There is a decomposition and a unitary $U:\mathcal{K}\to \bigoplus_n
\mathcal{K}_n^L\otimes \mathcal{K}_n^R$, such that for
\[
\rho=V^*\rho_0V,\quad \sigma=V^*\sigma_0 V,
\]
where $\rho_0,\sigma_0\in \mathcal{B}(\mathcal{K}\otimes\mathcal{H}_E)^+$ are positive and such that $\supp(\rho_0), \supp(\sigma_0)\le
VV^*$, we have 
\begin{subequations}
\begin{align}\label{eq:Structurerho0T}
\rho_0&=(\mathcal{T}(\sigma)^{1/2}U^*\otimes I_E)\bigoplus_n (\xi_n^L\otimes
\xi_n^R)(U\mathcal{T}(\sigma)^{1/2}\otimes I_E)
\end{align}
\begin{align}\label{eq:Structuresigma0T}
\sigma_0&=(\mathcal{T}(\sigma)^{1/2}U^*\otimes I_E)\bigoplus_n (I_{\mathcal{K}_n^L}\otimes
\xi_n^R)(U\mathcal{T}(\sigma)^{1/2}\otimes I_E)
\end{align}
\end{subequations}
for some $\xi_n^L\in \mathcal{B}(\mathcal{K}_n^L)^+$ and $\xi_n^R\in
\mathcal{B}(\mathcal{K}_n^R\otimes\mathcal{H}_E)^+$.
\item[(iv)] $\rho=\sigma\mathcal{T}^*(\mathcal{T}(\sigma)^{-1}\mathcal{T}(\rho))$.
\end{enumerate}

\end{theorem}

\begin{proof} 
 \noindent \underline{$\text{(i)} \Leftrightarrow \text{(ii)} .$} This equivalence was proved in \cite{HiaiMosonyi-f-divergences-2017}, in a more general situation. To make the proof self-contained, we present a proof under our assumptions. First, we note that we may write
\[
\widehat D(\rho\|\sigma)=\tr[\sigma^{1/2}\rho\sigma^{-1/2}\log(\sigma^{-1/2}\rho\sigma^{-1/2})]=\tr[\sigma f([\rho/\sigma])],
\]
with $f(x)=x\log x$. Using the integral representation
\[
f(x)=\int_0^\infty \left(\frac{x}{1+t}-\frac{x}{x+t}\right)dt=\int_0^\infty
\left(\frac{x}{1+t}-1+\frac{t}{x+t}\right)dt,
\]
we obtain
\[
\widehat
D(\rho\|\sigma)=\int_0^\infty \left(\frac{\tr[\rho]}{1+t}-\tr[\sigma]+t\tr[\sigma([\rho/\sigma]+t)^{-1}] \right)dt.
\]
The equality (i) holds if and only if
\begin{equation}\label{eq:int}
\int_0^\infty
t(\tr[\sigma([\rho/\sigma]+t)^{-1}]-\tr[\mathcal{T}(\sigma)([\mathcal{T}(\rho)/\mathcal{T}(\sigma)]+t)^{-1}])dt=0.
\end{equation}
Now note that
\[
\tr[\sigma([\rho/\sigma]+t)^{-1}]=\tr[\mathcal{T}(\sigma)\mathcal{T}_\sigma(([\rho/\sigma]+t)^{-1})]
\]
and
\[
\tr[\mathcal{T}(\sigma)([\mathcal{T}(\rho)/\mathcal{T}(\sigma)]+t)^{-1}]=\tr[\mathcal{T}(\sigma)(\mathcal{T}_\sigma([\rho/\sigma]+t))^{-1}].
\]
As mentioned above,  we may assume that $\sigma$ and $\mathcal{T}(\sigma)$ are invertible
and then 
$\mathcal{T}_\sigma$ is unital and faithful. It
follows by the Choi inequality \cite{choidavis, choi1974schwarz} that Eq.\ \eqref{eq:int} holds if and only if  
\begin{equation}\label{eq:multiplicativet}
\mathcal{T}_\sigma(([\rho/\sigma]+t)^{-1})=(\mathcal{T}_\sigma([\rho/\sigma]+t))^{-1},\qquad \forall
t\in (0,\infty).
\end{equation}
Differentiating by $t$, we obtain 
\[
\mathcal{T}_\sigma(([\rho/\sigma]+t)^{-2})=(\mathcal{T}_\sigma([\rho/\sigma]+t))^{-2}=\mathcal{T}_\sigma(([\rho/\sigma]+t)^{-1})^2,
\]
where we have used Eq.\ \eqref{eq:multiplicativet} in the last equality. Hence, $\mathcal{T}_\sigma$ must be multiplicative on all elements of
the form $([\rho/\sigma]+t)^{-1}$, $t\in (0,\infty)$. Since the multiplicative domain of $\mathcal{T}_\sigma$ is a subalgebra, we see that the equality Eq.\ \eqref{eq:multiplicativet}, and
hence also (i), is equivalent to (ii). 

\vspace{0.2cm}

\noindent \underline{$\text{(ii)} \Rightarrow \text{(iii)} .$} 
 Assume (ii) and put
$N:=\mathcal{T}_\sigma([\rho/\sigma])=[\mathcal{T}(\rho)/\mathcal{T}(\sigma)]$. We have  $\mathcal{T}_\sigma=\tr_E[S\cdot S^*]$, with
$S=(\mathcal{T}(\sigma)^{-1/2}\otimes I_E)V\sigma^{1/2}$ and using Lemma \ref{lemma:multiplicative}, we
see  that $N\otimes I_E$  must commute with
$M:=SS^*$. Since $\mathcal{T}_\sigma$ is faithful, we have 
$[\rho/\sigma]=W^*(N\otimes I_E)W$, where $S=M^{1/2}W$ is the polar decomposition. 
By definition of $S$, we obtain  $\sigma^{1/2}=V^*(\mathcal{T}(\sigma)^{1/2}\otimes I_E)S$, so that
\[
\sigma=V^*(\mathcal{T}(\sigma)^{1/2}\otimes I_E)M(\mathcal{T}(\sigma)^{1/2}\otimes I_E)V
\]
and
\begin{align}
\rho=\sigma^{1/2}[\rho/\sigma]\sigma^{1/2}&=V^*(\mathcal{T}(\sigma)^{1/2}\otimes
I_E)M^{1/2}(N\otimes I_E)M^{1/2}(\mathcal{T}(\sigma)^{1/2}\otimes
I_E)V\\
&=V^*(\mathcal{T}(\sigma)^{1/2}\otimes
I_E)(N\otimes I_E)M(\mathcal{T}(\sigma)^{1/2}\otimes
I_E)V.
\end{align}
Let $\mathcal{A}\subseteq \mathcal{B}(\mathcal{K})$ be the unital subalgebra generated by $N$, then there is a
decomposition and unitary $U$ as in the statement (iii), such that 
\[
\mathcal{A}=U^*\bigoplus_n(\mathcal{B}(\mathcal{K}_n^L)\otimes I_{\mathcal{K}_n^R})U.
\]
Since $M\in (\mathcal{A}\otimes I_E)'=\mathcal{A}'\otimes \mathcal{B}(\mathcal{H}_E)$, we obtain the
decompositions
\[
N=U^*\bigoplus_n (\xi_n^L\otimes I_{\mathcal{K}_n^R})U,\quad M=(U^*\otimes
I_E)\bigoplus_n (I_{\mathcal{K}_n^L}\otimes \xi_n^R)(U\otimes I_E)
\]
for some $\xi_n^L\in \mathcal{B}(\mathcal{K}_n^L)^+$ and $\xi_n^R\in \mathcal{B}(\mathcal{K}_n^R \otimes \mathcal{H}_E)^+$. 
Now put 
\begin{align}
\rho_0&:=(\mathcal{T}(\sigma)^{1/2}\otimes
I_E)(N\otimes I_E)M(\mathcal{T}(\sigma)^{1/2}\otimes
I_E)\\
\sigma_0&:=(\mathcal{T}(\sigma)^{1/2}\otimes I_E)M(\mathcal{T}(\sigma)^{1/2}\otimes I_E).
\end{align}
The only thing left to prove is the condition on the supports. Put $P:=VV^*$. By definition of $\sigma_0$
and $M=SS^*$, we obtain $\sigma_0 =V\sigma V^*$, so that $P\sigma_0 P=V\sigma
V^*=\sigma_0$. We also have 
\[
\rho_0=(\mathcal{T}(\sigma)^{1/2}\otimes
I_E)(N\otimes I_E)M(\mathcal{T}(\sigma)^{1/2}\otimes
I_E)=(\mathcal{T}(\sigma)^{1/2}\otimes
I_E)(N\otimes I_E)S\sigma^{1/2}V^*=\rho_0P.
\]
This proves the assertion (iii).

\vspace{0.2cm}

\noindent \underline{$\text{(iii)} \Rightarrow \text{(iv)} .$}  If (iii) holds, then 
\[
\mathcal{T}(\sigma)=\tr_E[\sigma_0]=\mathcal{T}(\sigma)^{1/2}U^*\bigoplus_n
(I_{\mathcal{K}_n^L}\otimes
\tr_E[\xi_n^R])U\mathcal{T}(\sigma)^{1/2}.
\]
Since $\mathcal{T}(\sigma)$ is invertible and $U$ unitary, this implies that
$\tr_E[\xi_n^R]=I_{\mathcal{K}_n^R}$, so that
\[
\mathcal{T}(\rho)=\tr_E[V\rho V^*]=\tr_E[\rho_0]=\mathcal{T}(\sigma)^{1/2}U^*\bigoplus_n (\xi_n^L\otimes
I_{\mathcal{K}_n^R})U\mathcal{T}(\sigma)^{1/2}.
\]
Since $\mathcal{T}^*= V^*(\cdot \otimes I_E)V$, it follows that
\begin{align}
\sigma\mathcal{T}^*(\mathcal{T}(\sigma)^{-1}\mathcal{T}(\rho))&=\sigma\mathcal{T}^*(\mathcal{T}(\sigma)^{-1/2}U^*\bigoplus_n 
(\xi_n^L\otimes I_{\mathcal{K}_n^R})U\mathcal{T}(\sigma)^{1/2})\\
&=V^*\sigma_0VV^*(\mathcal{T}(\sigma)^{-1/2}U^*\bigoplus_n (\xi_n^L\otimes
I_{\mathcal{K}_n^R})U\mathcal{T}(\sigma)^{1/2}\otimes I_E)V\\
&=V^*\sigma_0(\mathcal{T}(\sigma)^{-1/2}U^*\otimes I_E)\bigoplus_n (\xi_n^L\otimes
I_{\mathcal{K}_n^R\otimes\mathcal{H}_E})(U\mathcal{T}(\sigma)^{1/2}\otimes I_E)V\\
&=V^*(\mathcal{T}(\sigma)^{1/2}U^*\otimes I_E)\bigoplus_n (\xi_n^L\otimes
\xi_n^R)(U\mathcal{T}(\sigma)^{1/2}\otimes I_E)V=\rho,
\end{align}
which is (iv).

\vspace{0.2cm}

\noindent \underline{$\text{(iv)} \Rightarrow \text{(ii)} .$}  To finish the proof, assume (iv). Then we have
\begin{align}
\tr[\mathcal{T}(\sigma)\mathcal{T}_\sigma([\rho/\sigma])^2]&=\tr[\mathcal{T}(\rho)\mathcal{T}(\sigma)^{-1}\mathcal{T}(\rho)]=
\tr[\rho\mathcal{T}^*(\mathcal{T}(\sigma)^{-1}\mathcal{T}(\rho))]=\tr[\rho^2\sigma^{-1}]\\
&=\tr[\mathcal{T}(\sigma)\mathcal{T}_\sigma([\rho/\sigma]^2)].
\end{align}
Since $\mathcal{T}(\sigma)$ is faithful and we always have $\mathcal{T}_\sigma([\rho/\sigma])^2\le
\mathcal{T}_\sigma([\rho/\sigma]^2)$ by the Kadison-Schwarz inequality, this implies (ii). 
\end{proof}

As a consequence of Theorem \ref{theo:theBeStTheorem} and Theorem \ref{theo:StructureBSDPI_appendix} we obtain the structural decomposition for states that satruate the BS-relative entropy for conditional expectations.
\begin{corollary}
Let $\mu,\nu$ be states on $\cS(\cK\otimes \cH_E)$ and $\mathcal{E}$ a conditional expectation such that $(\mu,\nu,\mathcal{E})$ satisfies the Petz recovery condition. If $\mathcal{E}(\nu)=I_{\mathcal{K}}/d_{\mathcal{K}}$, then for any positive definite $X \in \mathcal{B}(\mathcal{K})$, if we define
 $c_{\rho}=\tr[X^{1/2} \nu^{1/2} \mathcal{E}(\mu)\nu^{1/2} X^{1/2}]^{-1}$ and $c_{\sigma}=\tr[X^{1/2}\nu X^{1/2}]^{-1}$,   there is a decomposition and a unitary $U:\mathcal{K}\to \bigoplus_n
\mathcal{K}_n^L\otimes \mathcal{K}_n^R$, such that
 \begin{subequations}
 \begin{align}
\mu &=d_{\mathcal{K}}c_{\rho}^{-1}V^* (X^{-1/2}\mathcal{E}(\sigma)^{1/2}U^*\otimes I_E)\bigoplus_n (\xi_n^L\otimes
\xi_n^R)(U\mathcal{E}(\sigma)^{1/2}X^{-1/2}\otimes I_E) V \, ,
\end{align}
 \begin{align}
\nu &=c_{\sigma}^{-1}V^* (X^{-1/2}\mathcal{E}(\sigma)^{1/2}U^*\otimes I_E)\bigoplus_n (I_{\mathcal{K}_n^L}\otimes
\xi_n^R)(U\mathcal{E}(\sigma)^{1/2}X^{-1/2}\otimes I_E) V \, .
\end{align}
 \end{subequations}
for some $\xi_n^L\in \mathcal{B}(\mathcal{K}_n^L)^+$ and $\xi_n^R\in
\mathcal{B}(\mathcal{K}_n^R\otimes\mathcal{H}_E)^+$ and  where $\sigma=V \sigma_0 V^*$, $\rho=V \rho_0 V^*$ and $\rho_0$, $\sigma_0$ have a decomposition of the form of \eqref{eq:Structurerho0T} and $\eqref{eq:Structuresigma0T}$, respectively. In particular, if we can take $X=\mathcal{E}(\sigma)^{-1/2}$  we obtain
 \begin{subequations}
 \begin{align}
\mu &=d_{\mathcal{K}}c_{\rho}^{-1}V^* (U^*\otimes I_E)\bigoplus_n (\xi_n^L\otimes
\xi_n^R)(U\otimes I_E) V \, ,
\end{align}
 \begin{align}
\nu &=c_{\sigma}^{-1}V^* (U^*\otimes I_E)\bigoplus_n (I_{\mathcal{K}_n^L}\otimes
\xi_n^R)(U\otimes I_E) V \, .
\end{align}
 \end{subequations}
\end{corollary}

\section{Applications in the context of quantum spin systems}

\subsection{Superexponential conditional independence of quantum spin chains}

In this section we provide some applications of the results derived in the previous pages in the context of quantum spin systems.  Quantum spin systems are mathematical models that describe arrays of atoms and their interactions.The mathematical formalism that typically describes these systems and the operators defined over them is as follows.  For any finite subset $\Lambda \subset \mathbb{Z}$, $\vert \Lambda \vert < \infty$, we associate a finite-dimensional Hilbert space $\mathcal{H}_{\Lambda}=\otimes_{x \in \Lambda}\mathcal{H}_x$, where $\mathcal{H}_x=\mathbb{C}^d$. The algebra of bounded linear operators on $\mathcal{H}_{\Lambda}$ is then given by $\mathcal{A}_{\Lambda}=\mathcal{B}(\mathcal{H}_{\Lambda})$, which has a  $C^*$-algebra structure. When considering a subset $\Lambda' \subseteq \Lambda$, there is a natural embedding  $\mathcal{A}_{\Lambda'}\subseteq \mathcal{A}_{\Lambda}$: given $X \in \mathcal{A}_{\Lambda'}$, we identify it with $X \otimes I_{\Lambda\setminus \Lambda'} \in\mathcal{A}_{\Lambda}$. Consequently, it is possible to define the algebra of quasi-local observables for a general  set $\Sigma \subseteq \mathbb{Z}$ as the closure of the union of the local algebras with respect to the operator norm,
\begin{equation}
    \mathcal{A}_{\Sigma}=\overline{\bigcup_{\substack{\Lambda\subseteq \Sigma\\ \vert \Lambda\vert <\infty}}\mathcal{A}_{\Lambda}}^{\Vert \hspace{2pt}\cdot \hspace{2pt}\Vert_{\infty}}.
\end{equation}
An interaction on a quantum spin system $\Phi$ is an application that maps any finite set $\Lambda$ into the algebra $\mathcal{A}_{\Lambda}$ satisfying $\Phi(\Lambda)=\Phi(\Lambda)^*$, and has as  local Hamiltonian
\begin{equation}
    H_{\Lambda}=\sum_{\Lambda' \subseteq \Lambda} \Phi(\Lambda').
\end{equation}
A local Hamiltonian on the finite set $\Lambda$ is said to have finite range if there exists two constants $R,J>0$ such that
\begin{itemize}
    \item[(i)] $\Phi(\Lambda)=0$ whenever $\operatorname{diam}(\Lambda):=\max\{x-y:x,y \in \Lambda\}> R$,
    \item[(ii)] For every finite set $\Lambda \subset \mathbb{Z}$, $\Vert \Phi(\Lambda)\Vert_\infty\leq J$. 
\end{itemize}
 We give several applications of our results concerning QMCs and BS-QMCs in this section. The first one concerns Gibbs states of local Hamiltonians and their conditional independence. Consider a finite interval of $\mathbb{Z}$, $I \subset \mathbb{Z}$ split into $I=ABC$ as in \Cref{fig:1}, and a local Hamiltonian $H_{ABC}$ on it (i.e. a self-adjoint operator satisfying $H_{ABC}=\sum_{X\subset I} H_X$).  The Gibbs state of $H_{ABC}$ at inverse temperature $\beta < \infty $ in then given by $e^{-\beta H_{ABC}}/ \Tr[e^{-\beta H_{ABC}}]$. Studying their conditional independence, i.e. how correlated regions $A$ and $C$ are conditioned on $B$, is a fundamental problem in quantum spin systems. In \cite{brown-2012}, it was proven that a state $\sigma_{ABC}$ can be written as Gibbs state of a local commuting Hamiltonian (in which $[H_X, H_Y]=0$ for every $X, Y \subset I$) if, and only if, $\sigma_{ABC}$ is a QMC between $A \leftrightarrow B \leftrightarrow C$. Therefore, for such states, the CMI (and the BS-CMI) vanish, and we say that they are fully conditional independent. When the Hamiltonian considered is not commuting, though, the situation is much more subtle. 

In \cite{Kato.2019}, it was proven that for Gibbs states of local, finite-range, translation-invariant Hamiltonians in 1D at any positive temperature, the CMI decays subexponentially with the size of B, and this was subsequently improved to exponential decay in  \cite{Kuwahara.2024}, while for the case of the BS-CMI it was proven in \cite{gondolf2024conditional} that the decay is superexponential. It is then a natural question whether the same superexponential behaviour can be proven for the CMI as well. It seems that a possible way would be by relating the BS-CMI of a certain Gibbs state to the CMI of another.  Throughout this work we have exploited the correspondence between BS-QMCs and QMCs given by $\rho_{ABC}\leftrightarrow \eta_{ABC}$ and have also provided bounds for their approximative versions in Proposition \ref{prop:relation_Irev_I}. We now employ that connection to prove  superexponential decay of the CMI of $ \eta_{ABC}$ whenever $\rho_{ABC}$ is a Gibbs state of a local, finite-range, translation-invariant Hamiltonian, exploiting the fact that these Gibbs states are approximate BS-QMCs proven in \cite{gondolf2024conditional}.

\begin{figure}[ht]
\begin{center}

\begin{tikzpicture}[scale=0.7]

\definecolor{frenchblue}{rgb}{0.0, 0.45, 0.73}
\Block[5,blue!50!white,A,1];

\begin{scope}[xshift=5cm]
\Block[5,teal!50!white,B,1];
\end{scope}
\begin{scope}[xshift=10cm]
\Block[5,green!50!white,C,1];
\end{scope}

\node at (15,0.8) {\huge $I$};

\end{tikzpicture}

  \caption{An interval $I$ split into three subintervals $I=ABC$ such that $B$ shields $A$ from $C$. }
  \label{fig:1}
      
\end{center}
\end{figure}

\begin{theorem}[Superexponential decay of CMI for $\eta_{ABC}$]\label{thm:superexponential_decay}
    Let us consider a quantum spin system on $\mathbb{Z}$ with local, finite-range, translation-invariant interactions and $\rho_{ABC}$ the associated Gibbs state at inverse temperature $\beta < \infty$. Then, there exists a positive function $l \mapsto \varepsilon(l)$ with  superexponential decay such that for every finite interval $I \subset \mathbb{Z}$ split into three subintervals $I = ABC$ where $B$ shields $A$ from $C$ (see Figure \ref{fig:1}) and $\eta_{ABC}=\frac{1}{d_B}\rho_B^{-1/2}\rho_{ABC}\rho_B^{-1/2}$, 
\begin{equation}
    I_{\eta}(A:C\vert B)\leq \mathcal{C} r(d_A,d_B,d_C)^{1/2}\Vert \rho_B^{-1}\Vert_{\infty}^{1/2}e^{\alpha \vert A \vert}\varepsilon(\vert B\vert),
\end{equation}
where $r$ is a rational function,  $\mathcal{C}$ and $\alpha$ are constants only depending on inverse temperature $\beta$, strength $J$ and range $R$ of the potential (see \cite[Section 2.4]{gondolf2024conditional}). 
\end{theorem}
\begin{proof}
     Let $\eta_{ABC}=\frac{1}{d_B}\rho_B^{-1/2}\rho_{ABC}\rho_B^{-1/2}$ be the state associated to the Gibss state $\rho_{ABC}$ at inverse temperature $\beta$ under the conditions of the statement, i.e. $\rho_{ABC}=e^{-\beta H_{ABC}}/\tr[e^{-\beta H_{ABC}}]$. From Proposition  \ref{prop:relation_Irev_I}, we can upper bound the conditional mutual information for $\eta_{ABC}$, $I_{\eta}(A:C\vert B)$, in terms of the reversed BS-conditional mutual information of $\rho_{ABC}$  together with other extra terms as follows 
    \begin{equation}
        I_{\eta}(A:C\vert B)\leq 4\sqrt{\frac{2(d_A+ d_C+1)^2}{d_B\pi }}\left\Vert  \rho_B^{-1} \right\Vert_{\infty}^{1/2} \left\Vert \rho_{BC}^{-1/2}\rho_{ABC}\rho_{BC}^{-1/2}\right\Vert_{\infty}^{1/4}\widehat{I}^{\mathrm{rev}}_{\rho}(A:C | B)^{1/8}\, .
    \end{equation}
    Let us show now that each of the multiplicative terms in the RHS above depending on marginals of $\rho_{ABC}$ grows at most exponentially with $|B|$, and the last term decays superexponentially with $|B|$, giving us the right decay. Define now, for any consecutive $X,Y \subset ABC$, the operator $E_{XY}=e^{-H_{XY}}e^{H_X+H_Y}$, known as the \textit{Araki expansional} \cite{Araki-Gibbs-states-1D-1969}, where we are omitting $\beta$ in the exponentials for simplicity. By a very similar calculation to that of \cite[Lemma 3.5]{gondolf2024conditional}, we have
    \begin{equation}
    \left\Vert  \rho_B^{-1} \right\Vert_{\infty}^{1/2} \leq \mathcal{C}_1 e^{\alpha_1 \vert B \vert} \, .
    \end{equation}
    We reproduce the calculations here for completeness. Using the notation above, and denoting $Z_{X}= \tr[ e^{-H_{X}}]$, and $\rho^X=e^{-H_X}/ \tr[e^{-H_X}]$ for any $X\subset ABC$,
    \begin{align}
    \rho_B^{-1} & = \tr_{AC}[e^{-H_{ABC}}]^{-1} Z_{ABC} \\
    & =  \tr_{AC}[e^{-H_A-H_B-H_C}e^{H_A+H_B+H_C}e^{-H_{ABC}}]^{-1} Z_{ABC} \\
    & = \tr_{AC}[\rho^A \otimes \rho^C E_{A,B}^T E_{AB,C}^T ]^{-1} (\rho^B)^{-1} \frac{Z_{ABC}}{Z_A Z_B Z_C} \, ,
    \end{align}
    where the last term can also be rewritten as 
    \begin{equation}
    \frac{Z_{ABC}}{Z_A Z_B Z_C} = \tr[\rho^A \otimes \rho^B \otimes \rho^C E_{A,B}^T E_{AB,C}^T ] \, .
    \end{equation}
    Therefore, by \cite[Corollary 4.4]{BluhmCapelPerezHernandez-ExpDecayMI-2021}, we conclude
    \begin{equation}
\norm{(\rho_B)^{-1}}_\infty \leq \mathcal{C} \norm{(\rho^B)^{-1}}_\infty \, .
    \end{equation}
    The fact that the last term in the RHS above scales exponentially with $|B|$ follows directly from $H_B$ being a local, finite-range Hamiltonian, by bounding 
    \begin{equation}
    |B|\,  \underset{X\subset B}{\operatorname{min}} \,  \underset{i\subset I}{\operatorname{min}} \,  |\lambda_i (H_X)|\leq \norm{H_B}_\infty \leq |B|\, \underset{X\subset B}{\operatorname{max}}\, \underset{i\subset I}{\operatorname{max}} \,|\lambda_i (H_X)| \, .
    \end{equation}
     Afterwards,  \cite[Theorem 3.6]{gondolf2024conditional} states that the reversed conditional mutual information of the Gibbs state decays superexponential fast with the size of $B$. Concretely,  there exists a function  $\varepsilon(\hspace{2pt}\cdot \hspace{2pt})$ with superexponential decay, positive constants $ \alpha_2,\mathcal{C}_2$  such that
    \begin{equation}
        \widehat{I}^{\mathrm{rev}}_{\rho}(A;C | B)^{1/8}\leq \mathcal{C}_2e^{\alpha_2\vert A\vert}\varepsilon(\vert B \vert),
    \end{equation}
    with $\varepsilon(\hspace{2pt}\cdot \hspace{2pt})$ a positive function with superexponential decay. In order to bound the last term, we use  first \cite[Theorem IX.1.1]{bhatia-2013} and obtain 
    \begin{equation}
        \left\Vert \rho_{BC}^{-1/2}\rho_{ABC}\rho_{BC}^{-1/2}\right\Vert_{\infty}\leq \left\Vert \rho_{BC}^{-1}\rho_{ABC}\right\Vert_{\infty} \, .
    \end{equation}
  Finally, we reproduce the computations in the proof of \cite[Theorem 5.1]{BluhmCapelPerezHernandez-ExpDecayMI-2021} to bound this last term. We first write 
  \begin{equation}
  \begin{split}
    \rho_{BC}^{-1} \rho_{ABC}&= \tr_A(e^{-H_{ABC}}) e^{-H_{ABC}}\\
    &=\tr_A[ e^{-H_A}e^{H_A+H_{BC}}e^{-H_{ABC}}]^{-1} e^{-H_A} e^{H_A+H_{BC}}e^{-H_{ABC}}\\
    &= \tr_A [ \rho^A E_{A,BC}^T]^{-1} \rho^{A} E_{A,BC}^T \, .
    \end{split}
  \end{equation}
  In \cite[Corollary 3.4 (i)]{BluhmCapelPerezHernandez-ExpDecayMI-2021}, it is shown that there exists a constant $\mathcal{C}_3$ such that $\Vert E_{A,BC} \Vert_\infty \leq \mathcal{C}_3$, and following an analogous proof to that of  \cite[Corollary 4.4]{BluhmCapelPerezHernandez-ExpDecayMI-2021}, we can bound $\norm{\tr_A( \rho^A E_{A,BC}^T)^{-1}}_\infty \leq \mathcal{C}_4$. Finally, similarly as above for $\norm{(\rho^B)^{-1}}_\infty$, we can bound $\norm{\rho^A}_\infty$ by an exponential factor in $|A|$. 
\end{proof}

The significance of this result is as follows. As mentioned in \Cref{rem:Gibbs_comm_QMC}, Gibbs states of local, commuting Hamiltonians are quantum Markov chains, and thus their CMI vanishes. All prior examples regarding decay of CMI, particularly for Gibbs states, have shown that, whenever it does not vanish, its decay with $|B|$ is at most exponential. Our result therefore gives the first examples of states with a faster decay of CMI (without it vanishing) and provides a large family of states that lies between those that are fully conditionally independent, such as Gibbs states of local, commuting Hamiltonians, and with CMI decaying exponentially with  $|B|$, such as as Gibbs states of local Hamiltonians. However, the precise physical interpretation of our $\eta_{ABC}$ in the context of Gibbs states is unfortunately still an open question that we leave for future work.

\subsection{Decay of correlations in quantum spin systems}

 In \cite{lucia2025spectral}, the authors consider a quantum spin system in a finite volume $\Lambda$, and for any $4$-partition of it $ABCD$, they define the quantity 
\begin{equation}
\Delta_\rho (A:C | D) := \underset{R_{AD},Q_{CD}}{\operatorname{sup}} \left| \tr_{ACD}[(\rho_{ACD} - \rho_{AD}\rho^{-1}_D\rho_{CD}) Q_{CD}^* R_{AD}] \right|
\end{equation}
for $\rho \in \cS(\cH_{ABCD})$, where the supremum is taken over $R_{AD}\in \cB(\cH_{AD})$, $Q_{CD}\in \cB(\cH_{CD})$ such that $\tr[ \rho \, R^*_{AD} R_{AD}] = \tr[\rho \, Q^*_{CD} Q_{CD} ]= 1$. The typical geometry that they consider is that of \Cref{fig:2}. The main finding of this paper is that a sufficiently fast decay of $\Delta_\rho$ with respect to the distance between $A$ and $C$ is equivalent, under certain technical assumptions, to a positive spectral gap of a Davies generator with unique fixed point $\rho$. Note that the Davies generator is the standard model to describe the evolution of physical spin systems weakly coupled to an environment.

\begin{figure}[ht]
\begin{center}

\begin{tikzpicture}[scale=0.5]

\fill [blue!50!white] (-0.5,-0.5) rectangle (18.5,0.5);
\foreach \m in {1,...,19}{
\shade[shading=ball, ball color=orange] (\m-1,0) circle (0.3);
}

\begin{scope}[yshift=-1cm]
\fill [blue!50!white] (-0.5,-4.5) rectangle (1.5,0.5);
\foreach \m in {1,...,2}{
\foreach \n in {1,...,5}{
\shade[shading=ball, ball color=orange] (\m-1,\n-5) circle (0.3);
}
}

\end{scope}

\begin{scope}[yshift=-1cm,xshift=17cm]
\fill [blue!50!white] (-0.5,-4.5) rectangle (1.5,0.5);
\foreach \m in {1,...,2}{
\foreach \n in {1,...,5}{
\shade[shading=ball, ball color=orange] (\m-1,\n-5) circle (0.3);
}
}

\end{scope}

\begin{scope}[yshift=-6cm]
\fill [blue!50!white] (-0.5,-0.5) rectangle (18.5,0.5);
\foreach \m in {1,...,19}{
\shade[shading=ball, ball color=orange] (\m-1,0) circle (0.3);
}
\end{scope}

\begin{scope}[yshift=-1cm,xshift=2cm]
\fill [teal!50!white] (-0.5,-4.5) rectangle (4.5,0.5);
\foreach \m in {1,...,5}{
\foreach \n in {1,...,5}{
\shade[shading=ball, ball color=orange] (\m-1,\n-5) circle (0.3);
}
}

\end{scope}

\begin{scope}[yshift=-1cm,xshift=7cm]
\fill [green!50!white] (-0.5,-4.5) rectangle (4.5,0.5);
\foreach \m in {1,...,5}{
\foreach \n in {1,...,5}{
\shade[shading=ball, ball color=orange] (\m-1,\n-5) circle (0.3);
}
}

\end{scope}

\begin{scope}[yshift=-1cm,xshift=12cm]
\fill [yellow!50!white] (-0.5,-4.5) rectangle (4.5,0.5);
\foreach \m in {1,...,5}{
\foreach \n in {1,...,5}{
\shade[shading=ball, ball color=orange] (\m-1,\n-5) circle (0.3);
}
}

\end{scope}

\node at  (3.5,-3.3) {\Huge $\textbf{A}$};

\node at  (8.5,-3.3) {\Huge $\textbf{B}$};

\node at  (13.5,-3.3) {\Huge $\textbf{C}$};

\node at  (17.5,-1.3) {\Huge $\textbf{D}$};

\end{tikzpicture}

  \caption{$\Lambda$ split into four subsystems $ABCD$ such that $B$ shields $A$ from $C$ and $D$ surrounds them.  }
  \label{fig:2}
      
\end{center}
\end{figure}

From \Cref{theo:StructureBSDPI}, it is obvious that for BS-QMCs $\rho$ between $A \leftrightarrow D \leftrightarrow C$, $\Delta_\rho (A:C | D)=0$. In \cite{lucia2025spectral} the converse is proven to be true as well. Since a vanishing $\Delta_\rho$ is an equivalent condition to BS-QMCs, it is natural to think that approximate BS-QMCs will have a small $\Delta_\rho$. We show this below.

\begin{proposition}\label{prop:Delta_BSQMC}
In the conditions above, for $\rho \in \cS(\cH_{ABCD})$, we have 
\begin{equation}
  \Delta_\rho (A:C | D) \leq f(\rho) \widehat{I}^{\operatorname{rev}}_\rho(A;C |D)  \, ,
\end{equation}
with $f(\rho)$ a positive function depending only on the marginals of $\rho$ and explicitly defined in the proof. 
As a consequence, if $\rho$ is an approximate BS-QMC then it has a small $\Delta_\rho$. 
\end{proposition}

\begin{proof}

We use \cite[Proposition 17]{lucia2025spectral}, which shows
\begin{equation}
\Delta_\rho (A:C | D) \leq \frac{1}{2} \left( \norm{I- \rho_{AD} \rho_D^{-1} \rho_{CD} \rho_{ACD}^{-1} }_\infty + \norm{I- \rho_{ACD}^{-1} \rho_{AD} \rho_D^{-1} \rho_{CD}  }_\infty \right) \, .
\end{equation}
Let us perform a similar calculation to that of the first inequality in \cite[Remark 3.7]{gondolf2024conditional}. By Eq. \eqref{eq:DPIBSentropy}, identifying $\rho\equiv \rho_{AD}\otimes \tau_C$, $\sigma \equiv \rho_{ACD}$, $\mathcal{T}(\rho) \equiv \rho_D \otimes \tau_C$ and $\mathcal{T}(\sigma) \equiv \rho_{CD}$, we have
\begin{equation}
\left( \frac{\pi}{8}\right)^4 \frac{1}{d_C^6}\norm{\rho_{AD}^{-1/2} \rho_{ACD} \rho_{AD}^{-1/2}}_\infty^{-4} \norm{\rho_{D}^{-1}}^{-2}_\infty \norm{\rho_{AD} \rho_{D}^{-1} \rho_{CD} - \rho_{ACD}}_2^4 \leq \widehat{I}^{\operatorname{rev}}_\rho (A;C|D) \, ,
\end{equation}
which we can further lower bound by 
\begin{equation}
\underbrace{\left( \frac{\pi}{8}\right)^4 \frac{1}{d_C^6 d_{ABC}^2}\norm{\rho_{AD}^{-1/2} \rho_{ACD} \rho_{AD}^{-1/2}}_\infty^{-4} \norm{\rho_{D}^{-1}}_\infty^{-2}\norm{\rho_{ACD}^{-1}}_\infty^{-1} }_{g_1(\rho)} \norm{ I - \rho_{AD} \rho_{D}^{-1} \rho_{CD} \rho_{ACD}^{-1}}_1^4 \leq \widehat{I}^{\operatorname{rev}}_\rho (A;C|D) \, . 
\end{equation}
Taking $g(\rho) = g_1(\rho)^{-1}$, and noticing that the second term can be dealt with in an analogous way, we conclude the proof.
\end{proof}

Let us consider now an invertible state $\rho_{ABCD} \in \cS(\cH_{ABCD})$, its marginal $\rho_{ACD}$ and its associated $\eta_{ACD}=\frac{1}{d_D}\rho_D^{-1/2}\rho_{ACD} \rho_D^{-1/2}$. By \Cref{prop:relation_Irev_I}, if $[\eta_{AD},\eta_{CD}]=0$, then 
\begin{equation}
\widehat{I}^{\operatorname{rev}}_\rho(A;C |D)  \leq \text{min} \left\lbrace  g(\rho,d_A,d_C,d_D) I_\eta(A:C|D)^{1/4} ,   h(\rho,d_A,d_C,d_D) I_\eta(A:C|D)^{1/2}\right\rbrace \, .
\end{equation}
Therefore, as a consequence of \Cref{prop:relation_Irev_I} and \Cref{prop:Delta_BSQMC}, if $\eta$ has a small CMI, then $\Delta_\rho$ is small. This in particular allows us to transfer results such as the exponential decay of the CMI of Gibbs states of local Hamiltonians at any positive temperature from \cite{Kuwahara.2024} to the exponential decay of $\Delta_\rho$ (with respect to the distance between $A$ and $C$). Since this decay, for any construction such as that of \Cref{fig:2}, is sufficient to imply a positive spectral gap of the Davies Lindbladian with unique fixed point $\rho_{ABCD}$, any condition implying the decay of $\Delta_{\rho}$ is of great relevance.

There are two caveats in this approach though. The first one is that the decay of the CMI from \cite{Kuwahara.2024} includes a prefactor scaling exponentially on the sizes of $A$ and $B$; thus, this result can only give a correct decay for $\Delta_\rho$ if $A$ and $C$ are much smaller than the distance between them. The second one has to do with the physical significance of $\rho_{ACD}$ and $\eta_{ACD}$. In \cite{lucia2025spectral}, all states $\rho_{ABCD}$ considered are Gibbs states of local, commuting Hamiltonians. However, here we deal with $\rho_{ACD}$, which is not a Gibbs state of a local Hamiltonian unless the Hamiltonian in $ABCD$ has no correlations between $ACD$ and $B$ (which is rarely going to be the case). Thus, the physical structure of $\eta_{ACD}$ is unclear and this complicated obtaining information about its CMI.

\section{Conclusion}
In this work, we have established a connection between BS-QMCs $\rho_{ABC}$ and QMCs $\eta_{ABC}=\frac{1}{d_B}\rho_B^{-1/2}\rho_{ABC}\rho_B^{-1/2}$. Subsequently, we have used this connection to provide a structure theorem for BS-QMCs akin to the one in \cite{HaydenJozsaPetzWinter-StrongSubadditivity-2004}. Moreover, we have put forward a new recovery map $\Phi_{B \to AB}$ in the spirit of the Petz recovery map which satisfies a recovery condition if, and only if, the state is a BS-QMC, thereby solving an open question in \cite{gondolf2024conditional}. We have subsequently extended all these findings to the more general context of Petz- and BS-triples, finding a correspondence between them and the structural decomposition of states saturating DPIs for either the Umegaki or the BS-entropy. 

Furthermore, we have studied approximate BS-QMCs. In particular, our bounds show that for every approximate BS-QMC $\rho_{ABC}$, the associated $\eta_{ABC}$ is an approximate QMC.  An interesting open question is whether the converse is also true, because the converse in Proposition \ref{prop:relation_Irev_I} only holds if the marginals of $\eta_{ABC}$ commute. Instead of having an inequality between $\widehat{I}^{\operatorname{rev}}_{\rho}(A;C|B)$ and $I_\eta(A:C|B)$, it would so be interesting to have an inequality that relates $\widehat{I}^{x}_{\rho}(A;C|B)$ and $I_\rho(A:C|B)$ for $x \in \{\mathrm{os}, \mathrm{ts}, \mathrm{rev}\}$, since this quantifies in a way the difference between BS-QMCs and QMCs. In a similar vein, another interesting question is related to the discussion in Section \ref{sec:BS-QMC-not-QMC}: There, we   constructed a large family of explicit examples of BS-QMCs which are not  QMCs, but it would be interesting to quantify how much larger the set of BS-QMCs is compared to the set of QMCs which it contains, possibly quantifying this with their dimensions.

 Additionally, the aforementioned family of BS-QMCs which are not QMCs provides an interesting feature that is fundamentally quantum; namely, we showed that there is a regime for which the marginals in $AC$ of the BS-QMCs are entangled, whereas the corresponding QMCs to which they are mapped by the $\eta$ correspondence have always separable marginals between $A$ and $C$. It would be thus desirable to better understand this entanglement-breaking map (with converse creating entanglement), and particularly to verify whether the extension to $\eta$ and $\omega(X,V)$ between BS-triples and Petz-triples presents a similar behaviour, i.e. whether they are, when restricted to certain tests, entanglement-breaking (resp. creating). 

   Regarding the correspondence between Petz- and BS-triples, we have found that there is a complete correspondence in both directions when the map considered in the DPI is a conditional expectation. However, when considering channels, we show the equivalence between the triples when the BS-DPI is taken with respect to the channel and the Petz one with respect to the partial trace defining it. It is therefore an open question whether we can obtain a full equivalence, namely an analogue of \Cref{theo:theBeStTheorem}, for general quantum channels. Additionally, all our results based on the correspondence by the $\eta$ present the constraint from Petz to BS that $\eta$ on the second state has to be maximally mixed. It would be desirable to waive this restriction.  

Finally, we have applied our results to quantum spin chains and shown that the $\eta_{ABC}$ associated to Gibbs states $\rho_{ABC}$ of local, finite-range, translation-invariant Hamiltonians at any positive temperature exhibit superexponential decay. We have also given a bound for a quantity that, when exponentially-decaying with the distance between $A$ and $C$ for a state $\rho_{ABCD}$, gives a positive spectral gap for the Davies Lindbladian associated with unique fixed point $\rho_{ABCD}$. This leaves open the challenge to find more applications of BS-QMCs in quantum information theory and to endow them with a physical interpretation, in the way that QMCs correspond to Gibbs states of commuting, local Hamiltonians \cite{brown-2012}.

\vspace{1cm}

\textit{Acknowledgments:} The authors would like to thank Paul Gondolf, Milan Mosonyi, Samuel Scalet, and Michael M. Wolf for interesting discussions. A
part of this work was carried out during the BIRS-IMAG
workshop “Towards Infinite Dimension and Beyond
in Quantum Information” held at the Institute of
Mathematics of the University of Granada (IMAG) in
Spain. A.B.\ was supported by the French National Research Agency in the framework of the ``France 2030” program (ANR-11-LABX-0025-01) for the LabEx PERSYVAL. A.C.  acknowledge the support of the Deutsche Forschungsgemeinschaft (DFG, German Research Foundation) - Project-ID 470903074 - TRR 352, and also acknowledges funding by the Federal Ministry of Education and Research (BMBF) and the Baden-Württemberg Ministry of Science as part of the Excellence Strategy of the German Federal and State Governments. P.C.R. wants to thank the funding
by the Deutsche Forschungsgemeinschaft (DFG, German Research Foundation) under Germany’s
Excellence Strategy-EXC2111-390814868. This project was funded within the QuantERA II Programme which has received funding from the EU’s H2020 research and innovation programme under the GA No 101017733. A.J. was supported by the grant VEGA 2/0128/24 and by the Science and Technology Assistance Agency under the contract No. APVV-20-0069. 

\vspace{5pt}

\textbf{Conflict of interest:} The authors have no conflict of interest related to this publication.

\bibliographystyle{abbrv}
\bibliography{lit}

\end{document}